\documentclass[draft,12pt,a4paper,onecolumn]{IEEEtran} 
\usepackage{amsmath,amsfonts, amssymb}

\newtheorem{thm}{Theorem}
\newtheorem{cor}[thm]{Corollary}
\newtheorem{lem}{Lemma}
\newtheorem{df}{Definition}
\newtheorem{rem}{Remark}

\newcounter{step}
\newenvironment{algorithm}[2]{%
  \begin{list}{%
      \textsf{Step #2\arabic{step}}}%
    {%
      \usecounter{step}%
      \settowidth{\labelwidth}{\textsf{#1}}%
      \addtolength{\labelwidth}{0mm}%
      \setlength{\leftmargin}{\labelwidth}%
      \setlength{\rightmargin}{0pt}%
      \setlength{\labelsep}{0pt}%
      \setlength{\parsep}{0pt}%
      \setlength{\itemsep}{0pt}%
  }}{\end{list}}

\newcommand{\e}{\varepsilon}
\newcommand{\vphi}{\varphi}
\newcommand{\qed}{$\blacksquare$}

\newcommand{\lrB}[1]{\left[{#1}\right]}
\newcommand{\lrb}[1]{\left\{{#1}\right\}}
\newcommand{\lrsb}[1]{\left({#1}\right)}
\newcommand{\lrbar}[1]{\left|{#1}\right|}
\newcommand{\limn}{\lim_{n\to\infty}}
\newcommand{\liminfn}{\liminf_{n\to\infty}}

\newcommand{\A}{\mathcal{A}}
\newcommand{\B}{\mathcal{B}}
\newcommand{\C}{\mathcal{C}}
\newcommand{\cH}{\mathcal{H}}
\newcommand{\J}{\mathcal{J}}
\newcommand{\M}{\mathcal{M}}
\newcommand{\Q}{\mathcal{Q}}
\newcommand{\R}{\mathcal{R}}
\newcommand{\cS}{\mathcal{S}}
\newcommand{\T}{\mathcal{T}}
\newcommand{\U}{\mathcal{U}}
\newcommand{\V}{\mathcal{V}}
\newcommand{\X}{\mathcal{X}}
\newcommand{\Y}{\mathcal{Y}}
\newcommand{\Z}{\mathcal{Z}}
\newcommand{\G}{\mathcal{G}}

\newcommand{\sfA}{\mathsf{A}}
\newcommand{\sfB}{\mathsf{B}}
\newcommand{\sfc}{\mathsf{c}}

\newcommand{\ba}{\boldsymbol{a}}
\newcommand{\cc}{\boldsymbol{c}}
\newcommand{\mm}{\boldsymbol{m}}
\newcommand{\bp}{\boldsymbol{p}}
\newcommand{\bt}{\boldsymbol{t}}
\newcommand{\uu}{\boldsymbol{u}}
\newcommand{\vv}{\boldsymbol{v}}
\newcommand{\xx}{\boldsymbol{x}}
\newcommand{\yy}{\boldsymbol{y}}
\newcommand{\MM}{\boldsymbol{M}}
\newcommand{\UU}{\boldsymbol{U}}
\newcommand{\VV}{\boldsymbol{V}}
\newcommand{\WW}{\boldsymbol{W}}
\newcommand{\XX}{\boldsymbol{X}}
\newcommand{\YY}{\boldsymbol{Y}}
\newcommand{\ZZ}{\boldsymbol{Z}}
\newcommand{\zero}{\boldsymbol{0}}
\newcommand{\one}{\boldsymbol{1}}
\newcommand{\aalpha}{\boldsymbol{\alpha}}
\newcommand{\bbeta}{\boldsymbol{\beta}}
\newcommand{\oomega}{\boldsymbol{\omega}}

\newcommand{\bcA}{\boldsymbol{\A}}
\newcommand{\bcB}{\boldsymbol{\B}}
\newcommand{\sfcc}{\boldsymbol{\sfc}}

\newcommand{\tX}{\widetilde{X}}
\newcommand{\tmu}{\widetilde{\mu}}

\newcommand{\hX}{\widehat{X}}
\newcommand{\hcH}{\widehat{\cH}}
\newcommand{\hXX}{\widehat{\XX}}

\newcommand{\bU}{\overline{\U}}
\newcommand{\osfA}{\overline{\sfA}}

\newcommand{\Error}{\mathrm{Error}}
\newcommand{\im}{\mathrm{Im}}
\newcommand{\GFq}{\mathrm{GF}(q)}
\newcommand{\Prob}{P}

\newcommand{\bpA}{\bp_{\sfA}}
\newcommand{\bpAB}{\bp_{\sfA\sfB}}
\newcommand{\bpB}{\bp_{\sfB}}
\newcommand{\pA}{p_{\sfA}}
\newcommand{\pB}{p_{\sfB}}
\newcommand{\pAB}{p_{\sfA\sfB}}
\newcommand{\alphaA}{\alpha_{\sfA}}
\newcommand{\alphaB}{\alpha_{\sfB}}
\newcommand{\alphaAB}{\alpha_{\sfA\sfB}}
\newcommand{\betaA}{\beta_{\sfA}}
\newcommand{\betaB}{\beta_{\sfB}}
\newcommand{\betaAB}{\beta_{\sfA\sfB}}
\newcommand{\aalphaA}{\aalpha_{\sfA}}
\newcommand{\aalphaB}{\aalpha_{\sfB}}
\newcommand{\bbetaA}{\bbeta_{\sfA}}
\newcommand{\bbetaB}{\bbeta_{\sfB}}

\newcommand{\oH}{\overline{H}}
\newcommand{\uH}{\underline{H}}
\newcommand{\oI}{\overline{I}}
\newcommand{\uI}{\underline{I}}
\newcommand{\oT}{\overline{\T}}
\newcommand{\uT}{\underline{\T}}
\newcommand{\oD}{\overline{D}}
\newcommand{\otheta}{\overline{\theta}}
\newcommand{\utheta}{\underline{\theta}}

\allowdisplaybreaks

\title{
  Channel Coding and Lossy Source Coding\\
  Using a Constrained Random Number Generator
}
\author{
  Jun~Muramatsu
  \thanks{J.~Muramatsu is with
    NTT Communication Science Laboratories, NTT Corporation,
    2-4, Hikaridai, Seika-cho, Soraku-gun, Kyoto 619-0237, Japan
    (E-mail: muramatsu.jun@lab.ntt.co.jp).
  }
}

\date{\today}
\begin{document}
\maketitle

\begin{abstract}
  Stochastic encoders for channel coding and lossy source coding
  are introduced with a rate close to the fundamental limits,
  where the only restriction is that the channel input alphabet
  and the reproduction alphabet of the lossy source code are finite.
  Random numbers, which satisfy a condition specified by a function and
  its value, are used to construct stochastic encoders.
  The proof of the theorems is based on the hash property of an
  ensemble of functions, where the results are extended to general
  channels/sources and alternative formulas are introduced for channel
  capacity and the rate-distortion region.
  Since an ensemble of sparse matrices has a hash property,
  we can construct a code by using sparse matrices,
  where the sum-product algorithm can be used for encoding and decoding
  by assuming that channels/sources are memoryless.
\end{abstract}
\begin{keywords}
  Shannon theory, channel coding, lossy source coding,
  information spectrum methods, LDPC codes, sum-product algorithm
\end{keywords}

\section{Introduction}

The aim of this paper is to introduce a channel code and a lossy source
code for general channels/sources including additive Gaussian, Markov,
and non-stationary channels/sources.
The only assumption is that the input alphabet for channel coding and
the reproduction alphabet for lossy source coding are finite.
We prove that the fundamental limits called the channel capacity and the
boundary of the rate-distortion region are achievable with the proposed
codes.
We introduce {\em stochastic} encoders for constructing the codes
and we can easily modify these encoders to make them deterministic.
Let $\X^n$ be the cartesian power of a set $\X$,
and $\xx$ denotes an element of $\X^n$.
To construct stochastic encoders,
we use a sequence of random numbers subject to a distribution
$\tmu$ on $\X^n$ defined as
\[
  \tmu(\xx)\equiv
  \begin{cases}
    \frac{\mu(\xx)}{\mu(\{\xx: A\xx=\cc\})}
    &\text{if}\ A\xx=\cc
    \\
    0
    &\text{if}\ A\xx\neq\cc
  \end{cases}
\]
for a given probability distribution $\mu$ on $\X^n$,
a function $A:\X^n\to\{A\xx: \xx\in\X^n\}$, and $\cc\in\{A\xx:\xx\in\X^n\}$.
Let us call a generator for this type of random number
a {\em constrained random number generator}.

One contribution of this paper is to extend the results of~\cite{HASH}
to general channels/sources.
In~\cite{HASH}, the direct part of the channel coding theorem and the
lossy source coding theorem for a discrete stationary memoryless
channel/source are shown based on the hash property of an ensemble of
functions, which is an extension of random bin coding~\cite{C75}, the
set of all linear functions~\cite{CSI82}, and the two-universal class
of hash functions \cite{CW}.
In this paper, alternative general formulas for the channel capacity and
rate-distortion region are introduced and the achievability of the
proposed codes is proved based on a stronger version of hash property
introduced
in~\cite{HASH-BC}\cite{ISIT2010}\cite{ISIT2011a}\cite{ISIT2011b}.
Since an ensemble of sparse matrices has a hash property, we can
construct codes by using sparse matrices.

Another contribution of this paper is that we introduce a practical
algorithm for the proposed code for a (non-stationary) memoryless
(asymmetric) channel/source.
We introduce an practical algorithm for a constrained random generator
by using a sparse matrix and a sum-product
algorithm~\cite{GDL}\cite{KFL01},
where we assume that a channel/source is (non-stationary) memoryless.
There are many ways to construct channel
codes~\cite{BB04}\cite{GA62}\cite{Mac99}
and lossy source codes~\cite{GV09}\cite{M04}\cite{MY03}\cite{WM09}
by using sparse matrices.
These approaches assume that a channel/source is stationary memoryless
and symmetric, or a quantization map~\cite[Section 6.2]{GA68}
is used for an asymmetric channel/source.
On the other hand, the only requirement for the proposed code is that
the input alphabet for channel coding and the reproduction alphabet for
lossy source coding are finite.

It should be noted that a similar idea has appeared
in~\cite{RR11}\cite{YAG12},
where they introduced random bin coding (privacy amplification)
and Slepian-Wolf decoding\footnote{It should be noted that the idea of
  using Slepian-Wolf decoding has already been mentioned
  in~\cite{SWLDPC}\cite{HASH}.}
(information reconciliation) for the construction of codes,
and their proofs are based on the fact that the output statistics of
random binning are uniformly distributed.
Furthermore, the encoding functions seem to be impractical.
This paper describes the explicit practical construction of encoding functions
and theorems are proved simply and rigorously based on the technique
reported in \cite{ISIT2011a}, where it is proved that we can use sparse
matrices for the construction of codes.

This paper is organized as follows.
Section~\ref{sec:review} reviews formulas for the channel capacity and
the rate-distortion region based on the information spectrum method
introduced in~\cite{HV93}\cite{HAN}\cite{VH94}.
Alternative formulas for the channel capacity and the rate-distortion
region are also introduced.
Section~\ref{sec:hash} describes the notion of a hash property,
which is stronger than that introduced in~\cite{HASH}.
Several lemmas are introduced that will be used in the proof of the
theorems.
Section~\ref{sec:channel} deals with the construction of a channel code
and Section~\ref{sec:lossy} describes the construction of a lossy code.
Section~\ref{sec:sum-product} describes an algorithm for a constrained
random number generator by using a sum-product algorithm.
The conversion from stochastic encoders into deterministic encoders
is discussed in this section.
Theorems and lemmas are proved in Section~\ref{sec:proof}.
Some lemmas are shown in Appendix.

\section{Formal Description of Problems and
  General Formulas for Channel Capacity and Rate Distortion Region}
\label{sec:review}

This section provides a formal description of the problems and reviews
formulas for the channel capacity and the rate distortion region.
All the results in this paper are presented by using the
information spectrum method introduced
in~\cite{HV93}\cite{HAN}\cite{VH94},
where the consistency and stationarity of channels/sources are not
assumed.
It should be noted that all the results reported in this paper can be
applied to stationary ergodic channels/sources and stationary memoryless
channels/sources.

Throughout this paper, we denote the probability of an event by
$\Prob(\cdot)$ and denote the probability distribution of a random
variable $U$ by $\mu_U$.

We call a sequence $\UU\equiv\{U^n\}_{n=1}^{\infty}$ of random variables
a {\em general source}, where $U^n\in\U^n$.
For a general source $\UU$, we define the spectral sup-entropy rate
$\oH(\UU)$ and the spectral inf-entropy rate $\uH(\UU)$ as
\begin{align*}
  \oH(\UU)
  &\equiv\inf\lrb{
    \theta: \limn\Prob\lrsb{ \frac 1n\log\frac1{\mu_{U^n}(U^n)} > \theta  }=0
  }
  \\
  \uH(\UU)
  &\equiv\sup\lrb{
    \theta: \limn\Prob\lrsb{ \frac 1n\log\frac1{\mu_{U^n}(U^n)} < \theta  }=0
  }.
\end{align*}
It is known that both $\oH(\UU)$ and $\uH(\UU)$ are equal to the entropy
rate of $\UU$ when $\UU$ is stationary ergodic,
that is,
\[
  \oH(\UU)=\uH(\UU)=\limn \frac{H(U^n)}n,
\]
where $H(U^n)$ is the entropy of $U^n$.
When $\UU$ is stationary memoryless,
both $\oH(\UU)$ and $\uH(\UU)$ are equal to the entropy $H(U^1)$.

For a pair $(\UU,\VV)=\{(U^n,V^n)\}_{n=1}^{\infty}$ of general sources,
we define the spectral conditional sup-entropy rate $\oH(\UU|\VV)$,
the spectral conditional inf-entropy rate $\uH(\UU|\VV)$,
the spectral sup-mutual information rate $\oI(\UU;\VV)$,
and the spectral inf-mutual information rate $\uI(\UU;\VV)$ as
\begin{align*}
  \oH(\UU|\VV)
  &\equiv\inf\lrb{
    \theta: \limn\Prob\lrsb{ \frac 1n\log\frac1{\mu_{U^n|V^n}(U^n|V^n)} > \theta  }=0
  }
  \\
  \uH(\UU|\VV)
  &\equiv\sup\lrb{
    \theta: \limn\Prob\lrsb{ \frac 1n\log\frac1{\mu_{U^n|V^n}(U^n|V^n)} < \theta  }=0
  }
  \\
  \oI(\UU;\VV)
  &\equiv\inf\lrb{
    \theta: \limn\Prob\lrsb{
      \frac 1n\log
      \frac{\mu_{U^nV^n}(U^n,V^n)}{\mu_{U^n}(U^n)\mu_{V^n}(V^n)}
      > \theta
    }=0
  }
  \\
  \uI(\UU;\VV)
  &\equiv\sup\lrb{
    \theta: \limn\Prob\lrsb{
      \frac 1n\log
      \frac{\mu_{U^nV^n}(U^n,V^n)}{\mu_{U^n}(U^n)\mu_{V^n}(V^n)}
      < \theta
    }=0
  },
\end{align*}
where $\mu_{U^nV^n}$ is the joint probability distribution corresponding
to $(U^n,V^n)$.
It is known that both $\oH(\UU|\VV)$ and $\uH(\UU|\VV)$ are equal to the
conditional entropy rate of $\UU$ given $\VV$,
and both  $\oI(\UU;\VV)$ and $\uI(\UU;\VV)$ are equal to the mutual
information rate between $\UU$ and $\VV$,
when $(\UU,\VV)$ is stationary ergodic, that is,
\begin{gather*}
  \oH(\UU|\VV)=\uH(\UU|\VV)=\limn\frac{H(U^n|V^n)}n
  \\
  \oI(\UU;\VV)=\uI(\UU;\VV)=\limn\frac{I(U^n;V^n)}n,
\end{gather*}
where $H(U^n|V^n)$ is the conditional entropy of $U^n$ given $V^n$
and $I(U^n;V^n)$ is the mutual information between $U^n$ and $V^n$.
When $(\UU,\VV)$ is stationary memoryless,
both $\oH(\UU|\VV)$ and $\uH(\UU|\VV)$ are equal to the conditional
entropy $H(U^1|V^1)$ and both  $\oI(\UU;\VV)$ and $\uI(\UU;\VV)$
are equal to the mutual information $I(U^1;V^1)$.

\subsection{Channel Capacity}

In the following, we introduce the definition of the channel capacity
for a general channel.
Let $\X^n$ and $\Y^n$ be the alphabets of a channel input $X^n$ and a
channel output $Y^n$, respectively.
A sequence $\WW\equiv\{p_{Y^n|X^n}\}_{n=1}^{\infty}$
of conditional probability distributions is called a
{\em general channel}.
\begin{df}
For a general channel $\WW$,
we call a rate $R$ {\em achievable} if for all $\delta>$ and all
sufficiently large $n$ there is a pair consisting of an encoder
$\vphi_n:\M_n\to\X^n$ and a decoder $\psi_n:\Y^n\to\M_n$ such that
\begin{gather*}
  \frac 1n\log|\M_n|\geq R
  \\
  P(\psi_n(Y^n)\neq M_n)\leq\delta,
\end{gather*}
where $[1/n]\log|\M_n|$ represents the rate of the code,
$M_n$ is a random variable of the message corresponding to the uniform
distribution on $\M_n$ and the joint distribution $\mu_{M_nY^n}$ is
given as
\[
  \mu_{M_nY^n}(\mm,\yy)\equiv\frac{\mu_{Y^n|X^n}(\yy|\vphi_n(\mm))}{|\M_n|}.
\]
The {\em channel capacity} $C(\WW)$ is defined by the supremum of the
achievable rate.
\end{df}

For a general channel $\WW$, the channel capacity $C(\WW)$ is derived
in~\cite{VH94} as
\begin{equation}
  C(\WW)=\sup_{\XX}\uI(\XX;\YY),
  \label{eq:capacity-I}
\end{equation}
where the supremum is taken over all general sources
$\XX=\{X^n\}_{n=1}^{\infty}$ and the joint distribution $\mu_{X^nY^n}$
is given as
\begin{equation}
  \mu_{X^nY^n}(\xx,\yy)
  \equiv
  \mu_{Y^n|X^n}(\yy|\xx)\mu_{X^n}(\xx).
  \label{eq:channel-joint}
\end{equation}

We introduce the following lemma, which will be proved in
Section~\ref{sec:proof-capacity}.
It should be noted that this capacity formula is a straightforward
generalization of that obtained by Shannon~\cite{SHANNON}.
\begin{lem}
\label{lem:capacity}
For a general channel $\WW$,
\begin{equation}
  C(\WW)=\sup_{\XX}\lrB{\uH(\XX)-\oH(\XX|\YY)},
  \label{eq:capacity}
\end{equation}
where the supremum is taken over all general sources $\XX$ and the joint
distribution of $(\XX,\YY)$ is given as (\ref{eq:channel-joint}).
\end{lem}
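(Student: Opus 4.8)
\textbf{Proof plan.} Abbreviate $A^n\equiv\frac1n\log\frac1{\mu_{X^n}(X^n)}$ and $B^n\equiv\frac1n\log\frac1{\mu_{X^n|Y^n}(X^n|Y^n)}$, so that the information density between $X^n$ and $Y^n$ is exactly $A^n-B^n$ and, by definition, $\uH(\XX)$, $\oH(\XX|\YY)$ and $\uI(\XX;\YY)$ are the probabilistic $\liminf$/$\limsup$ of $A^n$, of $B^n$, and of $A^n-B^n$, respectively. For a \emph{fixed} source one has only the one-sided bound $\uI(\XX;\YY)\ge\uH(\XX)-\oH(\XX|\YY)$, so the real content of the lemma is that taking the supremum over $\XX$ erases the gap. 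I would prove the two inequalities in (\ref{eq:capacity}) separately, using (\ref{eq:capacity-I}) as the known starting point.

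\emph{Direction $C(\WW)\ge\sup_{\XX}[\uH(\XX)-\oH(\XX|\YY)]$.} This rests on the elementary information-spectrum inequality: for any real random sequences $\{A^n\},\{B^n\}$, the probabilistic $\liminf$ of $A^n-B^n$ is at least the probabilistic $\liminf$ of $A^n$ minus the probabilistic $\limsup$ of $B^n$, which follows from the inclusion $\lrb{A^n-B^n<\alpha-\beta-2\e}\subseteq\lrb{A^n<\alpha-\e}\cup\lrb{B^n>\beta+\e}$ and letting $\e\downarrow0$. Applied to the $A^n,B^n$ above this gives $\uI(\XX;\YY)\ge\uH(\XX)-\oH(\XX|\YY)$ for every $\XX$; take the supremum over $\XX$ and invoke (\ref{eq:capacity-I}).

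\emph{Direction $C(\WW)\le\sup_{\XX}[\uH(\XX)-\oH(\XX|\YY)]$.} Fix an achievable rate $R$. After expurgating messages, we may take a sequence of codes $(\vphi_n,\psi_n)$ with $\frac1n\log\lrbar{\M_n}\to R$ and average error probability $\lambda_n\to0$; let $M_n$ be uniform on $\M_n$, set $X^n\equiv\vphi_n(M_n)$, and let $\YY$ be the induced channel output, so that $(\XX,\YY)$ has the joint law (\ref{eq:channel-joint}). The crucial point is that this $\XX$ is self-information stable: since $\mu_{X^n}$ is supported on at most $\lrbar{\M_n}$ points one always has $A^n\le\frac1n\log\lrbar{\M_n}$, while a codeword hit by $k\ge2$ messages renders at least $k/2$ of them undecodable, so the number of messages sharing their codeword with another message is at most $2\lambda_n\lrbar{\M_n}$; hence with probability at least $1-2\lambda_n$ the value $X^n$ has a unique preimage and $\mu_{X^n}(X^n)=1/\lrbar{\M_n}$. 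Thus $A^n\to R$ in probability, so $\uH(\XX)=\oH(\XX)=R$, and because $A^n$ is asymptotically deterministic the probabilistic $\liminf$ of $A^n-B^n$ equals $R$ minus the probabilistic $\limsup$ of $B^n$, i.e.\ $\uI(\XX;\YY)=\uH(\XX)-\oH(\XX|\YY)$. Finally the channel-coding converse (the ``$\le$'' half of (\ref{eq:capacity-I}), i.e.\ the Verd\'u--Han converse bound) gives $\uI(\XX;\YY)\ge R$. Therefore $\sup_{\XX}[\uH(\XX)-\oH(\XX|\YY)]\ge R$ for every achievable $R$, hence $\ge C(\WW)$.

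\emph{Main obstacle.} The delicate part is the converse direction: one must argue that the input distribution induced by a near-optimal code has a degenerate self-information spectrum. This hinges on the collision estimate $\lrbar{\lrb{\mm:\exists\,\mm'\ne\mm,\ \vphi_n(\mm')=\vphi_n(\mm)}}\le2\lambda_n\lrbar{\M_n}$ and on first normalizing the rate sequence so that $\frac1n\log\lrbar{\M_n}$ converges before spectra are compared. An alternative route that stays inside the information-spectrum calculus is to show that the supremum in (\ref{eq:capacity-I}) does not shrink when restricted to self-information-stable inputs (those with $\uH(\XX)=\oH(\XX)$), obtained by conditioning an arbitrary $\XX$ on a narrow slice of its self-information spectrum; that conditioning argument is where the work would lie.
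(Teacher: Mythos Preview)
Your proof is correct. The direction $C(\WW)\ge\sup_{\XX}[\uH(\XX)-\oH(\XX|\YY)]$ is handled exactly as in the paper, via the information-spectrum inequality $\uI\ge\uH-\oH$ combined with~(\ref{eq:capacity-I}). The converse direction, however, takes a genuinely different route.

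The paper works with the \emph{message} random variable $M_n$ (after embedding $\M_n\subset\X^n$): since $M_n$ is uniform, $\uH(\MM)\ge\liminf_n\frac1n\log|\M_n|$ is immediate, and a Fano-type lemma (Lemma~\ref{lem:fano} in the appendix) gives $\oH(\MM|\YY)=0$ directly from the vanishing decoding error. No collision counting and no appeal to the Verd\'u--Han converse is needed; the argument is shorter but tacitly identifies $\MM$ with a channel input so that the pair $(\MM,\YY)$ is of the form~(\ref{eq:channel-joint}). You instead work with the actual channel input $X^n=\vphi_n(M_n)$, establish self-information stability of $\XX$ via the collision bound, deduce the exact equality $\uI(\XX;\YY)=\uH(\XX)-\oH(\XX|\YY)$ for this code-induced input, and then invoke the Verd\'u--Han converse. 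Your route is more elaborate but also more explicit about the joint-law constraint~(\ref{eq:channel-joint}), and it handles non-injective encoders without relabeling. Incidentally, you could have shortened your final step: since $\vphi_n\circ\psi_n$ recovers $X^n$ from $Y^n$ with vanishing error, the same Fano-type reasoning yields $\oH(\XX|\YY)=0$ directly, so once you know $\uH(\XX)=R$ the detour through $\uI$ and the Verd\'u--Han converse is not needed.
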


\begin{rem}
When $\WW$ is stationary ergodic, it is sufficient that
the supremum on the right hand side of
(\ref{eq:capacity-I}) and  (\ref{eq:capacity})
is taken over all stationary ergodic sources.
When $\WW$ is stationary memoryless, it is sufficient that
the supremum on the right hand side of
(\ref{eq:capacity-I}) and  (\ref{eq:capacity})
is taken over all stationary memoryless sources.
For these reasons, the lemma is trivial in these cases.
\end{rem}

In this paper, we construct a channel code whose rate is close to the
channel capacity given by (\ref{eq:capacity}).
Constructed code is given by a pair consisting of a stochastic encoder
$\Phi_n:\M_n\to\X^n$ and a decoder $\psi_n:\Y^n\to\M_n$,
where $\M_n$ is a set of messages.

\begin{rem}
\label{rem:capasity-stochastic}
It should be noted that the capacity formulas (\ref{eq:capacity-I})
and (\ref{eq:capacity}) are satisfied when a stochastic encoder is
allowed.
In fact, by considering the average over stochastic encoders
and using the random coding argument,
we can construct a deterministic encoder from a stochastic encoder.
Thus the rate of the stochastic encoder should be upper bounded
by the channel capacity.
On the other hand,
the channel capacity is achievable with a stochastic encoder
because a deterministic encoder is one type of stochastic encoder.
\end{rem}

\subsection{Rate-Distortion Region}

In the following, we introduce the achievable rate-distortion region
for a general source.
Let $\Y^n$ be a source alphabet and $\X^n$ be a reproduction
alphabet\footnote{It should be noted that the roles of $\X^n$ and $\Y^n$
  are the reverse of those in the conventional definition of the
  rate-distortion theory.}.
Let $d_n:\X^n\times\Y^n\to[0,\infty)$ be a distortion function.

\begin{df}[{\cite[Def.\ 5.3.1]{HAN}}]
We call a pair $(R,D)$ consisting of a rate $R$ and a distortion $D$
{\em achievable} if for all $\delta>0$ and all sufficiently large $n$
there is a pair consisting of an encoder $\vphi_n:\Y^n\to\M_n$ and
a decoder $\psi_n:\M_n\to\Y^n$ such that
\begin{gather}
  \frac1n\log|\M_n|
  \leq R
  \label{eq:rd-R}
  \\
  P\lrsb{d_n(\psi_n(\vphi_n(Y^n)),Y^n)>D}
  \leq \delta.
  \label{eq:rd-D}
\end{gather}
The {\em achievable rate-distortion region} $\R(\YY)$ is defined by the
set of all achievable pairs $(R,D)$.
\end{df}
\begin{rem}
It should be noted that the factor $1/n$ appears
in~\cite[Def.~5.3.1]{HAN}.
This difference is not essential because we can replace $d_n$ by
$[1/n]d_n$ throughout this paper.
\end{rem}

For a pair $(\XX,\YY)$ of general sources, let $\oD(\XX,\YY)$ be defined as
\[
  \oD(\XX,\YY)
  \equiv
  \inf\lrb{
    \theta: 
    \limn P\lrsb{d_n(X^n,Y^n)>\theta}=0
  }.
\]

For a general source $\YY$, the rate-distortion region $\R(\YY)$ is
derived in \cite{SV96}\cite[Theorem 5.4.1]{HAN}\footnote{The
  rate-distortion function, which is the infimum of $R$ such that $(R,D)$
  is achievable for a given $D$,
  is derived in \cite{SV96}\cite[Theorem 5.4.1]{HAN}.}
as
\begin{equation}
  \R(\YY)
  =
  \bigcup_{\WW}\lrb{
    (R,D):
    \begin{gathered}
      \oI(\XX;\YY)\leq R
      \\
      \oD(\XX;\YY)\leq D
    \end{gathered}
  },
  \label{eq:rd-I}
\end{equation}
where the union is taken over all general channels 
$\WW\equiv\{\mu_{X^n|Y^n}\}_{n=1}^{\infty}$ and the joint distribution
$\mu_{X^nY^n}$ is given as
\begin{equation}
  \mu_{X^nY^n}(\xx,\yy)
  \equiv
  \mu_{X^n|Y^n}(\xx|\yy)\mu_{Y^n}(\yy).
  \label{eq:lossy-joint}
\end{equation}
We introduce the following lemma, which is proved in
Section~\ref{sec:proof-rate-distortion}.
\begin{lem}
\label{lem:rd}
For a general source $\YY$,
\begin{equation}
  \R(\YY)
  =
  \bigcup_{\WW}\lrb{
    (R,D):
    \begin{gathered}
      \oH(\XX)-\uH(\XX|\YY)\leq R
      \\
      \oD(\XX;\YY)\leq D
    \end{gathered}
  },
  \label{eq:rd}
\end{equation}
where the union is taken over all channels $\WW$ and the joint
distribution of $(\XX,\YY)$ is given as (\ref{eq:lossy-joint}).
\end{lem}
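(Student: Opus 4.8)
The key observation is that Lemma~\ref{lem:rd} is the rate-distortion analogue of Lemma~\ref{lem:capacity}, and both should follow from the same elementary identity relating mutual-information spectra to entropy spectra. Concretely, I would prove that for any pair $(\XX,\YY)$ of general sources with joint distribution fixed,
\begin{equation*}
  \oI(\XX;\YY)=\oH(\XX)-\uH(\XX|\YY)
\end{equation*}
does \emph{not} hold in general, but the weaker statement needed here does: namely that the union in (\ref{eq:rd-I}) and the union in (\ref{eq:rd}) describe the same region once we optimize over all channels $\WW$. So the real content is a \emph{set equality}, not a pointwise identity, and the proof splits into two inclusions.

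\textbf{Step 1 (the $\supseteq$ direction, or equivalently that (\ref{eq:rd}) is no larger than $\R(\YY)$).} Fix a channel $\WW=\{\mu_{X^n|Y^n}\}$ and a pair $(R,D)$ satisfying $\oH(\XX)-\uH(\XX|\YY)\le R$ and $\oD(\XX;\YY)\le D$. I want to show $(R,D)\in\R(\YY)$, and by (\ref{eq:rd-I}) it suffices to exhibit \emph{some} channel $\WW'$ (possibly the same one) for which $\oI(\XX';\YY)\le R$ and $\oD(\XX';\YY)\le D$. The natural move is to use the general inequality $\oI(\XX;\YY)\le\oH(\XX)-\uH(\XX|\YY)$, which follows from the superadditivity/subadditivity behavior of $\limsup$ in probability: writing $\frac1n\log\frac{\mu_{X^nY^n}}{\mu_{X^n}\mu_{Y^n}}=\frac1n\log\frac1{\mu_{X^n}(X^n)}-\frac1n\log\frac1{\mu_{X^n|Y^n}(X^n|Y^n)}$ and bounding the spectral sup of a difference by the difference of the spectral sup and spectral inf. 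Granting this inequality, the same channel $\WW$ works and this inclusion is immediate.

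\textbf{Step 2 (the $\subseteq$ direction).} Conversely, take $(R,D)\in\R(\YY)$; by (\ref{eq:rd-I}) there is a channel $\WW$ with $\oI(\XX;\YY)\le R$ and $\oD(\XX;\YY)\le D$. Here I cannot simply reuse $\WW$, because in general $\oH(\XX)-\uH(\XX|\YY)$ can strictly exceed $\oI(\XX;\YY)$. The standard fix — and I expect this to be the crux of the argument — is to replace $X^n$ by a conditionally "tilted" or time-sharing/blocking construction that makes the relevant spectrum concentrate. One concrete approach: pass to a subsequence / block structure along which the information densities converge, and then define a modified channel whose output marginal matches $\mu_{X^n}$ but for which the conditional entropy spectrum and the unconditional entropy spectrum are "aligned" so that $\oH(\XX')-\uH(\XX'|\YY)=\oI(\XX';\YY)$. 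Alternatively, and probably cleaner, I would mimic whatever trick is used in Section~\ref{sec:proof-capacity} for Lemma~\ref{lem:capacity}: the two lemmas are dual, so the capacity proof almost certainly constructs, from an arbitrary source $\XX$, a modified source achieving $\uH-\oH(\cdot|\YY)$ equal to $\uI$, and the same device (run through the channel $Y^n\to X^n$ instead of $X^n\to Y^n$) should convert an arbitrary channel here into one with $\oH(\XX')-\uH(\XX'|\YY)=\oI(\XX';\YY)$ while leaving $\oD$ unchanged because the joint law of $(X^n,Y^n)$ is preserved in distribution.

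\textbf{Main obstacle.} The delicate point is Step 2: showing that optimizing the surrogate objective $\oH(\XX)-\uH(\XX|\YY)$ over \emph{all} channels recovers the same region as optimizing $\oI(\XX;\YY)$, given that these functionals disagree channel-by-channel. Everything hinges on finding, for a near-optimal channel, a modification that (i) keeps $\mu_{Y^n}$ fixed, (ii) keeps the distortion spectrum $\oD$ controlled, and (iii) forces the entropy-difference functional down to the mutual-information value. I would handle (ii) by ensuring the modification only permutes/relabels reproduction symbols or passes to blocks of the \emph{same} joint distribution, so $d_n(X^n,Y^n)$ is unchanged in law; (i) and (iii) are the genuinely technical part and should parallel the dual construction in the proof of Lemma~\ref{lem:capacity}. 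Once those inclusions are in hand, combining Step 1 and Step 2 gives the set equality (\ref{eq:rd}), completing the proof. \qed
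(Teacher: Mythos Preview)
Your Step~1 is correct and coincides with the paper's easy inclusion: the pointwise inequality $\oI(\XX;\YY)\le\oH(\XX)-\uH(\XX|\YY)$ immediately gives that the right-hand side of (\ref{eq:rd}) is contained in $\R(\YY)$ via (\ref{eq:rd-I}).

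Your Step~2, however, has a genuine gap. You start from (\ref{eq:rd-I}), pick a channel $\WW$ with $\oI(\XX;\YY)\le R$, and then look for an abstract modification of $\WW$ that forces $\oH(\XX')-\uH(\XX'|\YY)$ down to $\oI$ while preserving the distortion law. None of the proposed devices (tilting, blocking, spectral alignment) is made concrete, and your guess about what the capacity proof does is wrong: the proof of Lemma~\ref{lem:capacity} does \emph{not} massage an arbitrary source into one with $\uH-\oH(\cdot|\YY)=\uI$. It goes back to the \emph{operational} definition, takes a near-capacity code, and uses the message variable $M_n$ itself as the new source, exploiting that $M_n$ is uniform (so $\uH(\MM)\ge$ rate) and that a Fano-type lemma gives $\oH(\MM|\YY)=0$.

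The dual trick here is equally simple, and is what the paper does. Given $(R,D)\in\R(\YY)$, take a code $(\vphi_n,\psi_n)$ achieving it and set $\hX^n\equiv\psi_n(\vphi_n(Y^n))$; this defines a (deterministic) channel $\WW$. Since $\hX^n$ takes at most $|\M_n|\le 2^{nR}$ values, one gets $\oH(\hXX)\le R$ directly, and since $\uH(\hXX|\YY)\ge 0$ trivially, $\oH(\hXX)-\uH(\hXX|\YY)\le R$. The distortion bound $\oD(\hXX,\YY)\le D$ is immediate from (\ref{eq:rd-D}). No spectral manipulation is needed; the missing idea is to return to the code rather than to the channel in (\ref{eq:rd-I}).
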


\begin{rem}
When $\XX$ is stationary ergodic, it is sufficient that
the union on the right hand side of (\ref{eq:rd-I}) and (\ref{eq:rd})
is taken over all stationary ergodic channels.
When $\XX$ is stationary memoryless, it is sufficient that
the union on the right hand side of (\ref{eq:rd-I}) and (\ref{eq:rd})
is taken over all stationary memoryless channels.
For these reasons, the lemma is trivial in these cases.
\end{rem}

In this paper, we construct a fixed-rate lossy source code,
where $(R,D)$ is close to the boundary of the region given by the right
hand side of (\ref{eq:rd}).
Constructed code is given by a pair consisting of a stochastic encoder
$\Phi_n:\X_n\to\M_n$ and a decoder $\psi_n:\M_n\to\Y^n$,
where $\M_n$ is a set of codewords.

\begin{rem}
Similarly to Remark~\ref{rem:capasity-stochastic},
formulas (\ref{eq:rd-I}) and (\ref{eq:rd}) are satisfied when a
stochastic encoder is allowed.
In fact, by considering the average over the stochastic encoders
and using the random coding argument,
we can construct a deterministic encoder from a stochastic encoder
without any loss of encoding rate. Thus the rate-distortion pair of
the stochastic encoder should be in the rate-distortion region.
On the other hand, the rate-distortion region is achievable with a
stochastic encoder because a deterministic encoder is one type of
stochastic encoder.
\end{rem}

\begin{rem}
It should also be noted that we have similar results that are obtained
in this paper by assuming 
\[
  d_{\max}\equiv\max_{n,\xx,\yy}d_n(\yy,\xx)<\infty,
\]
where (\ref{eq:rd-D}) is replaced by the average distortion criterion
\[
  E_{Y^n}\lrB{d_n(\psi_n(\vphi_n(Y^n)),Y^n)}\leq D+\delta.
\]
\end{rem}

\section{$(\aalpha,\bbeta)$-hash property}
\label{sec:hash}

In this section, we introduce the hash
property\footnote{In~\cite{HASH-BC}\cite{ISIT2010}\cite{ISIT2011a}\cite{ISIT2011b},
  it is called the `strong hash property.'
  Throughout this paper, we call it simply the `hash property.'}
introduced in~\cite{HASH-BC}\cite{ISIT2010}\cite{ISIT2011a}\cite{ISIT2011b}
and its implications.

Throughout this paper, we use the following definitions and notations.
The set $\U\setminus\V\equiv\U\cap\V^c$ denotes the set difference.
Let $A\uu$ denote a value taken by a function $A:\U^n\to\bU$ at $\uu\in\U^n$,
where $\U^n$ is the domain of $A$ and $\bU$ is the region of $A$.
It should be noted that $A$ may be nonlinear.
When $A$ is a linear function expressed by an $l\times n$ matrix,
we assume that $\U\equiv\GFq$ is a finite field and the range of
functions is $\U^l$.
For a set $\A$ of functions, let $\im A$ and $\im \A$ be defined as
\begin{align*}
  \im A&\equiv\{A\uu: \uu\in\U^n\}
  \\
  \im\A &\equiv \bigcup_{A\in\A}\im A.
\end{align*}
We define a set $\C_A(\cc)$ and $\C_{AB}(\cc,\mm)$ as
\begin{align*}
  \C_A(\cc) &\equiv\{\uu: A\uu = \cc\}
  \\
  \C_{AB}(\cc,\mm) &\equiv\{\uu: A\uu = \cc, A\uu=\mm\}.
\end{align*}
In the context of linear codes,
$\C_A(\cc)$ is called a coset determined by $\cc$.
The random variables of a function $A$ and a vector $\cc\in\im A$ are denoted
by the sans serif letters $\sfA$ and $\sfcc$, respectively.
It should be noted that the random variable of a $n$-dimensional vector
$\uu\in\U^n$ is denoted by the Roman letter $U^n$ that does not
represent a function, which is the way it has been used so far.

\subsection{Formal Definition and Basic Properties}
Here, we introduce the hash property for an ensemble of functions.
It requires stronger conditions than those introduced in \cite{HASH}.
\begin{df}
Let $\bcA\equiv\{\A_n\}_{n=1}^{\infty}$ be a sequence of sets such
that $\A_{n}$ is a set of functions $A:\U^n\to\im\A_n$.
For a probability distribution $p_{\sfA,n}$ on $\A_n$,
we call a sequence $(\bcA,\bpA)\equiv\{(\A_n,p_{\sfA,n})\}_{n=1}^{\infty}$
an {\em ensemble}.
Then, $(\bcA,\bpA)$ has an $(\aalphaA,\bbetaA)$-{\em hash property}
if there are two sequences
$\aalphaA\equiv\{\alphaA(n)\}_{n=1}^{\infty}$ and
$\bbetaA\equiv\{\betaA(n)\}_{n=1}^{\infty}$,
depending on $\{p_{\sfA,n}\}_{n=1}^{\infty}$,
such that
\begin{align}
  &\limn \alphaA{}(n)=1
  \tag{H1}
  \label{eq:alpha}
  \\
  &\limn \betaA{}(n)=0
  \tag{H2}
  \label{eq:beta}
\end{align}
and
\begin{align}
  \sum_{\substack{
      \uu'\in\U^n\setminus\{\uu\}:
      \,p_{\sfA,n}(\{A: A\uu = A\uu'\})>\frac{\alphaA{}(n)}{|\im\A_n|}
  }}
  p_{\sfA,n}\lrsb{\lrb{A: A\uu = A\uu'}}
  \leq
  \betaA{}(n)
  \tag{H3}
  \label{eq:hash}
\end{align}
for any $n$ and $\uu\in\U^n$.
Throughout this paper, we omit the dependence of $\A$, $\pA$,
$\alphaA{}$ and $\betaA{}$ on $n$.
\end{df}
\begin{rem}
In~\cite{HASH}\cite{ISIT2010}, an ensemble is required to satisfy the
condition
\begin{equation*}
  \limn \frac1n\log\frac{|\bU_n|}{|\im\A_n|}=0,
\end{equation*}
where $\bU_n$ is the range of functions.
This condition is omitted because it is unnecessary for the results
reported in this paper.
\end{rem}

Let us remark on the condition (\ref{eq:hash}).
This condition requires the sum of the collision probabilities
$p_{\sfA}\lrsb{\lrb{A: A\uu = A\uu'}}$, which is greater than
$\alpha_A/|\im\A|$, to be bounded by $\beta_A$, where the sum is taken
over all $\uu'$ except $\uu$. An intuitive interpretation of
(\ref{eq:hash}) will be provided in Section~\ref{sec:linear} by using an
ensemble of sparse matrices.
It should be noted that this condition implies
\begin{equation}
  \sum_{\substack{
      \uu\in\T
      \\
      \uu'\in\T'
  }}
  \pA\lrsb{\lrb{A: A\uu = A\uu'}}
  \leq
  |\T\cap\T'|
  +
  \frac{|\T||\T'|\alphaA{}}{|\im\A|}
  +
  \min\{|\T|,|\T'|\}\betaA{}
  \tag{H3'}
  \label{eq:whash}
\end{equation}
for any $\T,\T'\subset\U^n$, which is introduced in \cite{HASH}.
A stronger condition (\ref{eq:hash}) is required for
Lemmas~\ref{lem:hash-AB} and~\ref{lem:BCP}, which appear later.
The proof of (\ref{eq:whash}) is given in Appendix~\ref{sec:proof-whash}
for the completeness of this paper.

It should be noted that when $\A$ is a two-universal class of hash
functions \cite{CW} and $\pA$ is the uniform distribution on $\A$,
then $(\bcA,\bpA{})$ has a $(\one,\zero)$-hash property,
where $\one$ and $\zero$ denote the constant sequences of $1$ and $0$,
respectively.
Random bin coding \cite{C75} and the set of all linear
functions~\cite{CSI82} are examples of the two-universal class of hash
functions.
An ensemble of sparse matrices satisfying a hash property
is introduced in Section~\ref{sec:linear}.

We have the following lemma, where it is unnecessary to assume the
linearity of functions assumed in \cite{HASH}.
The proof is given in Appendix~\ref{sec:proof-hash-AB} for the
completeness of this paper.

\begin{lem}[{\cite[Lemma 4]{HASH-BC}}]
\label{lem:hash-AB}
Let $(\bcA,\bpA)$ and $(\bcB,\bpB)$
be ensembles satisfying an $(\aalphaA,\bbetaA)$-hash property
and an $(\aalphaB,\bbetaB)$-hash property, respectively.
Let $\A\in\bcA$ (resp. $\B\in\bcB$) be a set of functions
$A:\U^n\to\im\A$ (resp. $B:\U^n\to\im\B$).
Let $(A,B)\in\A\times\B$ be a function defined as
\[
  (A,B)\uu\equiv(A\uu,B\uu)\quad\text{for each}\ \uu\in\U^n.
\]
Let $\pAB$  be a joint distribution on $\A\times\B$ defined as
\[
  \pAB(A,B)\equiv \pA(A)\pB(B)\quad\text{for each}\ (A,B)\in\A\times\B.
\]
Then the ensemble $(\bcA\times\bcB,\bpAB)$ has an
$(\aalpha_{\sfA\sfB},\bbeta_{\sfA\sfB})$-hash property,
where $(\alphaAB,\betaAB)$ is defined as
\begin{align*}
  \alphaAB&\equiv \alphaA\alphaB
  \\
  \betaAB&\equiv \betaA+\betaB.
\end{align*}
\end{lem}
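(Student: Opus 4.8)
The plan is to verify the three hash-property conditions (H1)--(H3) directly for the product ensemble $(\bcA\times\bcB,\bpAB)$ with the stated $\alphaAB=\alphaA\alphaB$ and $\betaAB=\betaA+\betaB$. Conditions (H1) and (H2) are immediate: since $\limn\alphaA=\limn\alphaB=1$ we get $\limn\alphaAB=\limn\alphaA\alphaB=1$, and since $\limn\betaA=\limn\betaB=0$ we get $\limn\betaAB=\limn(\betaA+\betaB)=0$. So the entire content is (H3).

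First I would fix $n$ and $\uu\in\U^n$ and rewrite the collision probability for the product function. For $\uu'\neq\uu$, by the product form of $\pAB$ we have
\[
  \pAB(\{(A,B): (A,B)\uu=(A,B)\uu'\})
  =\pA(\{A: A\uu=A\uu'\})\,\pB(\{B: B\uu=B\uu'\}),
\]
because the event splits as a product of the events $\{A\uu=A\uu'\}$ and $\{B\uu=B\uu'\}$, and $\sfA$, $\sfB$ are independent under $\pAB$. Write $a(\uu')\equiv\pA(\{A:A\uu=A\uu'\})$ and $b(\uu')\equiv\pB(\{B:B\uu=B\uu'\})$, so the product collision probability is $a(\uu')b(\uu')$, and note $|\im(\A\times\B)|=|\im\A|\,|\im\B|$. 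The sum in (H3) is over those $\uu'\neq\uu$ with $a(\uu')b(\uu') > \alphaAB/|\im(\A\times\B)| = (\alphaA/|\im\A|)(\alphaB/|\im\B|)$. The key combinatorial observation is that if $a(\uu')b(\uu')$ exceeds this product threshold, then at least one of the factors must exceed its own threshold: either $a(\uu') > \alphaA/|\im\A|$ or $b(\uu') > \alphaB/|\im\B|$. (If both failed, the product would be at most $(\alphaA/|\im\A|)(\alphaB/|\im\B|)$, a contradiction.) Hence the index set of the product sum is contained in the union of the two index sets appearing in the (H3) conditions for $\bcA$ and $\bcB$.

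Next I would bound the sum by splitting over this union and using $a(\uu')b(\uu')\leq a(\uu')$ and $a(\uu')b(\uu')\leq b(\uu')$ (both $a,b\leq 1$):
\[
  \sum_{\substack{\uu'\neq\uu:\\ a(\uu')b(\uu') > \frac{\alphaA\alphaB}{|\im\A||\im\B|}}}
  a(\uu')b(\uu')
  \leq
  \sum_{\substack{\uu'\neq\uu:\\ a(\uu') > \frac{\alphaA}{|\im\A|}}} a(\uu')
  \;+\;
  \sum_{\substack{\uu'\neq\uu:\\ b(\uu') > \frac{\alphaB}{|\im\B|}}} b(\uu')
  \leq \betaA+\betaB=\betaAB,
\]
where the last inequality is exactly (H3) for $(\bcA,\bpA)$ and $(\bcB,\bpB)$ respectively. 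Since $n$ and $\uu$ were arbitrary, this establishes (H3) for the product ensemble and completes the proof. The only step that requires a moment's care is the containment of index sets, i.e., the implication that a product exceeding a product of thresholds forces one factor to exceed its threshold; everything else is bookkeeping. There is no real obstacle here, which is consistent with the remark in the excerpt that linearity of the functions is not needed.
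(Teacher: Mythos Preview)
Your argument is essentially identical to the paper's: factorize the collision probability by independence, observe that a product exceeding a product of thresholds forces at least one factor to exceed its own threshold, then bound each piece using $a(\uu')b(\uu')\leq a(\uu')$ and $a(\uu')b(\uu')\leq b(\uu')$ together with (H3) for $(\bcA,\bpA)$ and $(\bcB,\bpB)$. One small correction: you assert $|\im(\A\times\B)|=|\im\A|\,|\im\B|$, but in general only $\im(\A\times\B)\subset\im\A\times\im\B$ holds (this is exactly what the paper invokes), so the (H3) threshold $\alphaAB/|\im(\A\times\B)|$ is $\geq(\alphaA/|\im\A|)(\alphaB/|\im\B|)$ rather than equal to it; this only shrinks the index set you must sum over, so your bound still goes through unchanged.
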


The following lemma is related to the
{\it collision-resistance property}, that is,
if the number of bins is greater than the number of items
then there is an assignment such that every bin contains at most one item.
The proof is given in Appendix~\ref{sec:proof-CRP} for the completeness
of this paper.
\begin{lem}[{\cite[Lemma 1]{HASH}}]
\label{lem:CRP}
If $(\A,\pA)$ satisfies (\ref{eq:whash}), then
\[
  \pA\lrsb{\lrb{
      A: \lrB{\G\setminus\{\uu\}}\cap\C_A(A\uu)\neq \emptyset
  }}
  \leq 
  \frac{|\G|\alphaA{}}{|\im\A|} + \betaA{}
\]
for all $\G\subset\U^n$  and $\uu\in\U^n$.
\end{lem}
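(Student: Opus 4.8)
The plan is to reduce the event in the statement to a union of pairwise collision events and then invoke the weak hash inequality (\ref{eq:whash}). First I would unwind the definition: since $\C_A(A\uu)=\{\uu':A\uu'=A\uu\}$, the condition $\lrB{\G\setminus\{\uu\}}\cap\C_A(A\uu)\neq\emptyset$ holds exactly when there exists some $\uu'\in\G\setminus\{\uu\}$ with $A\uu'=A\uu$. Hence the set of functions $A$ for which this occurs equals the union over $\uu'\in\G\setminus\{\uu\}$ of the collision sets $\lrb{A: A\uu=A\uu'}$, and the union bound gives
\[
  \pA\lrsb{\lrb{A: \lrB{\G\setminus\{\uu\}}\cap\C_A(A\uu)\neq\emptyset}}
  \leq
  \sum_{\uu'\in\G\setminus\{\uu\}}\pA\lrsb{\lrb{A: A\uu=A\uu'}}.
\]

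Next I would bound the right-hand side by applying (\ref{eq:whash}) with the singleton $\T\equiv\{\uu\}$ and $\T'\equiv\G\setminus\{\uu\}$; the left-hand side of (\ref{eq:whash}) is then precisely the sum above. Since $\T\cap\T'=\emptyset$, $|\T|=1$, $\min\{|\T|,|\T'|\}\leq 1$, and $|\T'|=|\G\setminus\{\uu\}|\leq|\G|$, this yields
\[
  \sum_{\uu'\in\G\setminus\{\uu\}}\pA\lrsb{\lrb{A: A\uu=A\uu'}}
  \leq
  0+\frac{|\G\setminus\{\uu\}|\alphaA{}}{|\im\A|}+\betaA{}
  \leq
  \frac{|\G|\alphaA{}}{|\im\A|}+\betaA{},
\]
which is the claimed inequality. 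The only point to verify is the harmless dichotomy of whether $\uu\in\G$ or $\uu\notin\G$, but in either case $|\G\setminus\{\uu\}|\leq|\G|$, so no separate treatment is needed.

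I do not anticipate a real obstacle: the argument is a one-line union bound followed by a single substitution into (\ref{eq:whash}). The only substantive remark is that the proof should be written entirely in terms of (\ref{eq:whash}) rather than the stronger (\ref{eq:hash}), to make transparent that collision-resistance follows from the weaker consequence; this also matches the hypothesis as stated in the lemma.
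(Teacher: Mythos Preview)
Your proposal is correct and matches the paper's proof essentially line for line: the paper also applies the union bound to reduce to $\sum_{\uu'\in\G\setminus\{\uu\}}\pA(\{A:A\uu=A\uu'\})$ and then invokes (\ref{eq:whash}) with $\T=\{\uu\}$ and $\T'=\G\setminus\{\uu\}$, noting $\T\cap\T'=\emptyset$ and $|\G\setminus\{\uu\}|\leq|\G|$.
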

We show the collision-resistance property from Lemma~\ref{lem:CRP}.
Let $\mu_U$ be the probability distribution on $\G\subset\U^n$.
We have
\begin{align}
  E_{\sfA}\lrB{
    \mu_U\lrsb{\lrb{\uu:
	\lrB{\G\setminus\{\uu\}}\cap\C_{\sfA}(\sfA\uu)\neq\emptyset}}
  }
  &
  \leq
  \sum_{\uu\in\G}\mu_U(\uu)
  p_{\sfA}\lrsb{\lrb{A: \lrB{\G\setminus\{\uu\}}\cap\C_A(A\uu)\neq\emptyset}}
  \notag
  \\
  &
  \leq
  \sum_{\uu\in\G}\mu_U(\uu)
  \lrB{\frac{|\G|\alphaA{}}{|\im\A|} + \betaA{}}
  \notag
  \\
  &
  \leq
  \frac{|\G|\alphaA{}}{|\im\A|} + \beta_A{}.
\end{align}
By assuming that $|\G|/|\im\A|$ vanishes as $n\to\infty$,
we have the fact that there is a function $A$ such that
\begin{equation*}
  \mu_U\lrsb{\lrb{\uu:
      \lrB{\G\setminus\{\uu\}}\cap\C_A(A\uu)\neq\emptyset}}
  <\delta
\end{equation*}
for any $\delta>0$ and sufficiently large $n$.
Since the relation $\lrB{\G\setminus\{\uu\}}\cap\C_A(A\uu)\neq\emptyset$
corresponds to an event where there is $\uu'\in\G$ such that $\uu$ and
$\uu'$ are different members of the same bin
(they have the same codeword determined by $A$),
we have the fact that the members of $\G$ are located in different bins
(the members of $\G$ can be decoded correctly)
with probability close to one.

The following lemma is related to the {\em balanced coloring property},
which is analogous to \cite[Lemma 3.1]{AC98}\cite[Lemma 17.3]{CK11}.
This lemma implies that there is a function $A$ such that $\T$ is almost
equally partitioned by $A$ with respect to a measure $Q$.
We use this property instead of the
{\it saturation property}~\cite{HASH},
that is, if the number of bins is greater than the number of items
there is an assignment such that every bin contains at least one item.
The proof is given in Appendix~\ref{sec:proof-BCP} for the completeness
of this paper.
\begin{lem}[{\cite[Lemma 4]{ISIT2011a}}]
\label{lem:BCP}
If $(\A,p_{\sfA})$ satisfies (\ref{eq:hash}), then
\begin{equation}
  E_{\sfA}\lrB{
    \sum_{\cc}
    \left|
      \frac{Q\lrsb{\T\cap\C_{\sfA}(\cc)}}{Q(\T)}
      -\frac 1{|\im\A|}
    \right|
  }
  \leq
  \sqrt{
    \alphaA{}-1
    +\frac {[\betaA{}+1]|\im\A|\max_{\uu\in\T} Q(\uu)}{Q(\T)}
  }
  \label{eq:BCP}
\end{equation}
for any function $Q:\U^n\to[0,\infty)$ and $\T\subset\U^n$, where
\[
  Q(\T)\equiv\sum_{\uu\in\T}Q(\uu).
\]
\end{lem}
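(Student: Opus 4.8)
The plan is to bound the expected $L^1$-deviation in (\ref{eq:BCP}) by first passing to an $L^2$ bound via the Cauchy--Schwarz inequality and then expanding the resulting second moment using condition (\ref{eq:hash}). Write $p_{\cc}\equiv Q(\T\cap\C_{\sfA}(\cc))/Q(\T)$ for the (random) fraction of the $Q$-mass of $\T$ that lands in bin $\cc$. Since $\sum_{\cc}p_{\cc}=1$, Cauchy--Schwarz over the $|\im\A|$ values of $\cc$ gives
\[
  E_{\sfA}\lrB{\sum_{\cc}\lrbar{p_{\cc}-\frac1{|\im\A|}}}
  \leq
  \sqrt{|\im\A|\,E_{\sfA}\lrB{\sum_{\cc}\lrsb{p_{\cc}-\frac1{|\im\A|}}^2}}
  =
  \sqrt{|\im\A|\,E_{\sfA}\lrB{\sum_{\cc}p_{\cc}^2}-1},
\]
so it remains to show $|\im\A|\,E_{\sfA}[\sum_{\cc}p_{\cc}^2]-1$ is at most the quantity under the square root in (\ref{eq:BCP}).

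Next I would expand the second moment. Writing the indicator $\mathbf 1[A\uu=\cc]$ and using $Q(\T\cap\C_A(\cc))=\sum_{\uu\in\T}Q(\uu)\mathbf 1[A\uu=\cc]$, we get
\[
  \sum_{\cc}Q(\T\cap\C_A(\cc))^2
  =
  \sum_{\uu,\uu'\in\T}Q(\uu)Q(\uu')\mathbf 1[A\uu=A\uu'],
\]
since for fixed $\uu,\uu'$ the term survives exactly when both are sent to a common $\cc$. Taking $E_{\sfA}$ turns $\mathbf 1[A\uu=A\uu']$ into $\pA(\{A:A\uu=A\uu'\})$. Split the double sum into the diagonal $\uu=\uu'$, which contributes $\sum_{\uu\in\T}Q(\uu)^2\le Q(\T)\max_{\uu\in\T}Q(\uu)$, and the off-diagonal terms. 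For the off-diagonal part, partition each inner sum over $\uu'\ne\uu$ according to whether $\pA(\{A:A\uu=A\uu'\})$ exceeds $\alphaA/|\im\A|$ or not: on the small-probability part each term is at most $\alphaA/|\im\A|$, contributing at most $Q(\T)^2\alphaA/|\im\A|$ after summing $Q(\uu)Q(\uu')$ over all pairs; on the large-probability part, (\ref{eq:hash}) says the sum of those collision probabilities over $\uu'$ is at most $\betaA$ for each fixed $\uu$, so this piece is bounded by $\betaA\sum_{\uu\in\T}Q(\uu)\max_{\uu'\in\T}Q(\uu')\le\betaA Q(\T)\max_{\uu\in\T}Q(\uu)$.

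Combining these three bounds gives
\[
  E_{\sfA}\lrB{\sum_{\cc}Q(\T\cap\C_{\sfA}(\cc))^2}
  \leq
  [\betaA+1]Q(\T)\max_{\uu\in\T}Q(\uu)
  +\frac{Q(\T)^2\alphaA}{|\im\A|};
\]
dividing by $Q(\T)^2$, multiplying by $|\im\A|$, and subtracting $1$ yields exactly $\alphaA-1+[\betaA+1]|\im\A|\max_{\uu\in\T}Q(\uu)/Q(\T)$, which plugged into the Cauchy--Schwarz bound above finishes the proof. The one place that needs care — and which I expect to be the main obstacle to stating cleanly rather than the main mathematical difficulty — is the bookkeeping in the off-diagonal split: one must handle both inner-sum decompositions uniformly in $\uu$ and make sure the $\max_{\uu'\in\T}Q(\uu')$ upper bound is applied only after the collision-probability sum has been controlled by (\ref{eq:hash}), so that the $\betaA$ factor multiplies a single $Q(\T)\max_{\uu}Q(\uu)$ and not something larger.
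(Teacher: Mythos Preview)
Your proposal is correct and follows essentially the same argument as the paper: the paper introduces an auxiliary uniform random variable $\sfcc$ on $\im\A$ and applies Jensen's inequality to $E_{\sfA\sfcc}[\,\cdot\,]$, which is exactly your Cauchy--Schwarz step over the $|\im\A|$ bins, and then performs the identical diagonal/small-collision/large-collision split of $\sum_{\uu,\uu'\in\T}Q(\uu)Q(\uu')\pA(\{A:A\uu=A\uu'\})$ using (\ref{eq:hash}) to control the large part by $\betaA$. The bookkeeping you flag as the delicate point is handled in the paper precisely as you describe, so there is no missing idea.
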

\begin{rem}
In \cite[Lemma 3.1]{AC98}\footnote{See also \cite[Remark on Lemma B.1]{CN03}.}
and \cite[Lemma 17.3]{CK11},
the absolute value on the left hand side of (\ref{eq:BCP}) is
upper-bounded by $\e/|\im\A|$ for all $\cc\in\im\A$ and $Q\in\Q$
provided that $\e^2>3|\im\A|\log(2|\im\A||\Q|)\max_{\uu\in\T}Q(\uu)$,
where $\Q$ is a finite set of probability distributions.
\end{rem}

We show the balanced coloring property.
From Lemma~\ref{lem:BCP}, we have the fact that there is a function $A$
such that
\[
  \sum_{\cc}
  \left|
    \frac{Q\lrsb{\T\cap\C_{A}(\cc)}}{Q(\T)}
    -\frac 1{|\im\A|}
  \right|
  \leq
  \sqrt{
    \alphaA{}-1
    +\frac {[\betaA{}+1]|\im\A|\max_{\uu\in\T} Q(\uu)}{Q(\T)}
  }.
\]
By assuming that $Q(\T)\leq 1$ and $|\im\A|\max_{\uu\in\T} Q(\uu)$
vanishes as $n\to\infty$, we have
\begin{align}
  \left|
    Q\lrsb{\T\cap\C_{A}(\cc)}
    -\frac {Q(\T)}{|\im\A|}
  \right|
  &\leq
  \sum_{\cc}
  \left|
    Q\lrsb{\T\cap\C_{A}(\cc)}
    -\frac {Q(\T)}{|\im\A|}
  \right|
  \notag
  \\
  &=
  Q(\T)
  \sum_{\cc}
  \left|
    \frac{Q\lrsb{\T\cap\C_{\sfA}(\cc)}}{Q(\T)}
    -\frac 1{|\im\A|}
  \right|
  \notag
  \\
  &\leq
  \sqrt{
    \alphaA{}-1
    +[\betaA{}+1]|\im\A|\max_{\uu\in\T} Q(\uu)
  }
  \notag
  \\
  &\leq
  \delta
\end{align}
for all $\cc\in\im\A$, $\delta>0$, and sufficiently large $n$.
Since $\{\T\cap\C_A(\cc)\}_{\cc\in\im\A}$ is a partition of $\T$,
we have the fact that the set $\T$ is almost equally partitioned
with respect to a measure $Q$,
where $\cc$ represents the color of a set $\T\cap\C_A(\cc)$.

\subsection{Hash Property for Ensembles of Matrices}
\label{sec:linear}
In the following, we discuss the hash property for an ensemble of
matrices.

In the last section we discussed that the uniform distribution on
the set of all linear functions has a strong $(\one,\zero)$-hash
property because it is a universal class of hash functions.
In the following, we introduce another ensemble of matrices.

First, we introduce the average spectrum of an ensemble of
matrices given in~\cite{BB04}.
Let $\U$ be a finite field and $\A$ be a set of linear functions
$A:\U^n\to\U^l$.
It should be noted again that $A$ can be expressed
by an $l\times n$ matrix.

Let $\bt(\uu)$ be the type\footnote{In \cite{HASH}, it is called a
  histogram that is characterized by the number of occurrences of each
  symbol in the sequence $\uu$.
  The type and the histogram are essentially the same when $n$ is fixed.}
of $\uu\in\U^n$,
which is characterized by the empirical probability distribution
of the sequence $\uu$.
Let $\cH$ be a set of all types of length $n$ except $\bt(\zero)$,
where $\zero$ is the zero vector.
For a probability distribution $\pA$ on a set of $l\times n$ matrices
and a type $\bt$, let $S(\pA,\bt)$ be defined as
\[
  S(\pA,\bt)
  \equiv
  \sum_{A\in\A}\pA(A)|\{\uu\in\U^n: A\uu=\zero, \bt(\uu)=\bt\}|,
\]
which is called the expected number of codewords that have type $\bt$
in the context of linear codes.
For a given $\hcH_{\sfA}\subset\cH$, we define $\alphaA(n)$ and
$\betaA(n)$ as
\begin{align}
  \alphaA(n)
  &\equiv
  \frac{|\im\A|}{|\U|^l}\cdot\max_{\bt\in \hcH_{\sfA}}
  \frac {S(\pA,\bt)}{S(p_{\osfA},\bt)}
  \label{eq:alpha-linear}
  \\
  \betaA(n)
  &\equiv
  \sum_{\bt\in \cH\setminus\hcH_{\sfA}}S(\pA,\bt),
  \label{eq:beta-linear}
\end{align}
where $p_{\osfA}$ denotes the uniform distribution on the set of all
$l\times n$ matrices.

The following lemma provides a sufficient condition for an ensemble of
matrices to satisfy a strong hash property.
The proof is given in Appendix~\ref{sec:proof-linear}
for the completeness of this paper.
\begin{lem}[{\cite[Theorem 1]{HASH-BC}}]
\label{thm:hash-linear}
Let $(\bcA,\bpA)$ be an ensemble of matrices and assume that
$\pA\lrsb{\lrb{A: A\uu=\zero}}$ depends on $\uu$ only through the type
$\bt(\uu)$.
If $(\aalphaA,\bbetaA)$, defined by (\ref{eq:alpha-linear}) and
(\ref{eq:beta-linear}), satisfies (\ref{eq:alpha}) and (\ref{eq:beta}),
then $(\bcA,\bpA)$ has a strong $(\aalphaA,\bbetaA)$-hash property.
\end{lem}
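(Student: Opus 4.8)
The plan is to verify the three conditions (\ref{eq:alpha})--(\ref{eq:hash}); since (\ref{eq:alpha}) and (\ref{eq:beta}) are assumed, all of the work is in establishing (\ref{eq:hash}). The first step exploits linearity: $A\uu=A\uu'$ holds if and only if $A(\uu-\uu')=\zero$, so the substitution $\vv\equiv\uu-\uu'$, which is a bijection of $\U^n\setminus\{\uu\}$ onto $\U^n\setminus\{\zero\}$ not depending on $\uu$, turns the left-hand side of (\ref{eq:hash}) into
\[
  \sum_{\substack{\vv\in\U^n\setminus\{\zero\}\\ \pA(\{A:A\vv=\zero\})>\alphaA/|\im\A|}}\pA(\{A:A\vv=\zero\}).
\]
By hypothesis $\pA(\{A:A\vv=\zero\})$ depends on $\vv$ only through $\bt(\vv)$, so I may write it as $g(\bt(\vv))$ for a function $g$ defined on $\cH$ (note $\vv\neq\zero$ forces $\bt(\vv)\in\cH$). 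Thus the quantity to be bounded becomes $\sum_{\vv\neq\zero:\,g(\bt(\vv))>\alphaA/|\im\A|}g(\bt(\vv))$, which is independent of $\uu$.

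Next I would re-express $\alphaA$ and $\betaA$ through $g$. Writing $N(\bt)$ for the number of $\uu\in\U^n$ of type $\bt$, interchanging the order of summation in the definition of $S(\pA,\bt)$ gives $S(\pA,\bt)=\sum_{\uu:\bt(\uu)=\bt}\pA(\{A:A\uu=\zero\})=N(\bt)g(\bt)$. For the uniform ensemble $p_{\osfA}$ one uses the elementary fact that, for a matrix with independent uniform rows over a finite field and a fixed $\vv\neq\zero$, each row has inner product with $\vv$ uniform on $\U$ and these are independent, so $p_{\osfA}(\{A:A\vv=\zero\})=|\U|^{-l}$ and hence $S(p_{\osfA},\bt)=N(\bt)|\U|^{-l}$. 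Consequently $S(\pA,\bt)/S(p_{\osfA},\bt)=|\U|^l g(\bt)$, so in (\ref{eq:alpha-linear}) this ratio cancels against the prefactor $|\im\A|/|\U|^l$, giving $\alphaA=|\im\A|\cdot\max_{\bt\in\hcH_{\sfA}}g(\bt)$, while (\ref{eq:beta-linear}) reads $\betaA=\sum_{\bt\in\cH\setminus\hcH_{\sfA}}N(\bt)g(\bt)$.

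These two identities finish the argument. From $\alphaA=|\im\A|\max_{\bt\in\hcH_{\sfA}}g(\bt)$ we get $g(\bt)\leq\alphaA/|\im\A|$ for every $\bt\in\hcH_{\sfA}$, so any $\vv$ surviving the threshold in the sum above --- that is, with $g(\bt(\vv))>\alphaA/|\im\A|$ --- necessarily has $\bt(\vv)\in\cH\setminus\hcH_{\sfA}$. Since $g\geq0$, dropping the threshold only enlarges the sum, so the left-hand side of (\ref{eq:hash}) is at most
\[
  \sum_{\substack{\vv\in\U^n\setminus\{\zero\}\\ \bt(\vv)\in\cH\setminus\hcH_{\sfA}}}g(\bt(\vv))
  =\sum_{\bt\in\cH\setminus\hcH_{\sfA}}N(\bt)g(\bt)
  =\betaA,
\]
which is exactly (\ref{eq:hash}), uniformly in $\uu$ and $n$. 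The proof is essentially bookkeeping, so there is no serious obstacle; the only points needing care are keeping the threshold $\alphaA/|\im\A|$, the index set $\hcH_{\sfA}$, and the special type $\bt(\zero)$ consistent throughout, together with the routine verification that $A\vv$ is uniformly distributed for the all-matrices ensemble when $\vv\neq\zero$.
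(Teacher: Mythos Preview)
Your proof is correct and follows essentially the same approach as the paper: both use linearity to reduce to the substitution $\vv=\uu-\uu'$, re-express $\alphaA$ and $\betaA$ in terms of the type-indexed collision probability (your $g(\bt)$ is the paper's $p_{\sfA,\bt}$, your $N(\bt)$ is its $|\C_{\bt}|$), and then observe that the threshold $g(\bt(\vv))>\alphaA/|\im\A|$ forces $\bt(\vv)\in\cH\setminus\hcH_{\sfA}$, so the sum is bounded by $\betaA$. The paper packages the identities $\alphaA=|\im\A|\max_{\bt\in\hcH_{\sfA}}p_{\sfA,\bt}$ and $\betaA=\sum_{\bt\in\cH\setminus\hcH_{\sfA}}|\C_{\bt}|p_{\sfA,\bt}$ as a separate sub-lemma, but the content is identical to your derivation.
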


Next, we introduce the ensemble of $q$-ary sparse matrices introduced
in~\cite{HASH}, which is the $q$-ary extension of the ensemble proposed
in~\cite{Mac99}.
Let $\U\equiv\GFq$ and $l\equiv nR$ when $0<R<1$ is given,
where $q$ is a prime number or a power of a prime number.
We generate an $l\times n$ matrix $A$ with the following procedure,
where at most $\tau$ random nonzero elements are introduced in every row.
\begin{enumerate}
  \item Start from an all-zero matrix.
  \item For each $i\in\{1,\ldots,n\}$, repeat the following procedure
  $\tau$ times:
  \begin{enumerate}
    \item Choose
    $(j,a)\in\{1,\ldots,l\}\times[\GFq\setminus\{0\}]$
    uniformly at random.
    \item Add\footnote{It should be noted that
      $(j,i)$-element of the matrix is not overwritten by $a$
      when the same $j$ is chosen again.}
    $a$ to the  $(j,i)$-element of $A$.
  \end{enumerate}
\end{enumerate}
Assume that $\tau=O(\log n)$ is even and let $(\bcA,\bpA)$ be an
ensemble corresponding to the above procedure.
Let $\hcH_{\sfA}\subset\cH$ be a set of types satisfying the requirement 
that the weight (the number of occurrences of non-zero elements)
is large enough.
Let $(\aalphaA,\bbetaA)$ be defined by (\ref{eq:alpha-linear})
and (\ref{eq:beta-linear}).
Then $\alphaA$ measures the difference between the ensemble
$(\bcA,\bpA)$ and the ensemble of all $l\times n$ matrices with respect
to the high-weight part of the average spectrum,
and $\betaA$ provides the upper bound of the probability that the code
$\{\uu\in\U^n: A\uu=\zero\}$ has low-weight codewords.
It is proved in \cite[Theorem 2]{HASH} that $(\aalphaA,\bbetaA)$ satisfy
(\ref{eq:alpha}) and (\ref{eq:beta}) if we adopt an appropriate
$\hcH_{\sfA}$.
Then, from Lemma~\ref{thm:hash-linear}, we have the fact that this
ensemble has a strong $(\aalphaA,\bbetaA)$-hash property.
It should be noted that the convergence speed of $(\aalphaA,\bbetaA)$
depends on how fast $\tau$ grows in relation to the block length.
The analysis of $(\aalphaA,\bbetaA)$ is given in the proof
of~\cite[Theorem 2]{HASH}.

\section{Construction of Channel Code}
\label{sec:channel}

This section introduces a channel code.
The idea for the construction is drawn
from~\cite{SWLDPC}\cite{HASH}\cite{ISIT2011a}.
It should be noted that we assume that the channel input alphabet $\X^n$
is a finite set but allow the channel output alphabet $\Y^n$ to be an
arbitrary (infinite, continuous) set.

For given $r>$ and $R>0$,
let $(\bcA,\bpA)$ and $(\bcB,\bpB)$ be ensembles of functions
$A:\X^n\to\im\A$ and $B:\X^n\to\im\B$ satisfying
\begin{align*}
  r&=\frac1n\log|\im\A|
  \\
  R&=\frac1n\log|\im\B|,
\end{align*}
respectively, where we define $\M_n\equiv\im\B$ and $R$ represents the
rate of the code.
We fix functions $A\in\A$, $B\in\B$, and a vector $\cc\in\im\A$
so that they are available for constructing an encoder and a decoder.

We use a constraint random number generator to construct an encoder.
Let $\tX^n\equiv\tX^n_{AB}(\cc,\mm)$ be a random variable corresponding
to the distribution
\begin{equation}
  \mu_{\tX^n}(\xx)
  \equiv
  \begin{cases}
    \frac{\mu_{X^n}(\xx)}{\mu_{X^n}(\C_{AB}(\cc,\mm))},
    &\text{if}\ \xx\in\C_{AB}(\cc,\mm),
    \\
    0,
    &\text{if}\ \xx\notin\C_{AB}(\cc,\mm),
  \end{cases}
  \label{eq:channel-stochastic-encoder}
\end{equation}
where $\mu_{X^n}$ is the probability distribution of the channel input
random variable $X^n$.

We define the stochastic encoder $\Phi_n:\im\B\to\X^n$
and the decoder $\psi_n:\Y^n\to\im\B$ as
\begin{align}
  \Phi_n(\mm)
  &\equiv
  \tX^n_{AB}(\cc,\mm)
  \label{eq:channel-encoder}
  \\
  \psi_n(\yy)
  &\equiv
  B\xx_A(\cc|\yy),
  \label{eq:channel-decoder}
\end{align}
where we declare an encoding error when $\mu_{X^n}(\C_{AB}(\cc,\mm))=0$
and $\xx_A$ is defined as
\begin{equation}
  \xx_{A}(\cc|\yy)
  \equiv\arg\max_{\xx'\in\C_A(\cc)}\mu_{X^n|Y^n}(\xx'|\yy).
  \label{eq:gA}
\end{equation}
The flow of vectors is illustrated in Fig.\ \ref{fig:channel-code}.
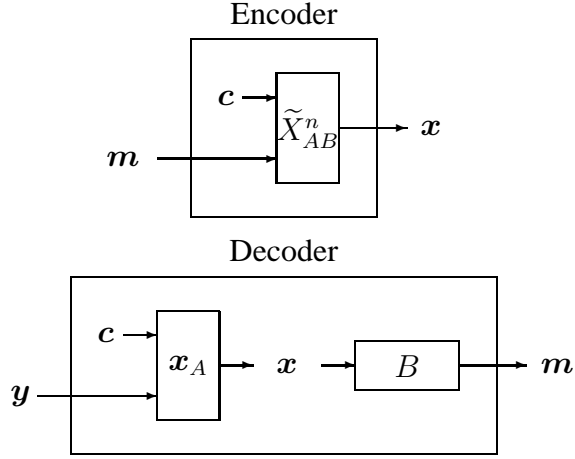
\begin{figure}[t]
  \begin{center}
    \unitlength 0.45mm
    \begin{picture}(176,70)(0,0)
      \put(82,60){\makebox(0,0){Encoder}}
      \put(65,35){\makebox(0,0){$\cc$}}
      \put(70,35){\vector(1,0){10}}
      \put(35,17){\makebox(0,0){$\mm$}}
      \put(45,17){\vector(1,0){35}}
      \put(80,10){\framebox(18,32){$\tX^n_{AB}$}}
      \put(98,26){\vector(1,0){20}}
      \put(125,26){\makebox(0,0){$\xx$}}
      \put(55,0){\framebox(54,52){}}
    \end{picture}
    \\
    \begin{picture}(176,70)(0,0)
      \put(82,60){\makebox(0,0){Decoder}}
      \put(30,35){\makebox(0,0){$\cc$}}
      \put(35,35){\vector(1,0){10}}
      \put(5,17){\makebox(0,0){$\yy$}}
      \put(10,17){\vector(1,0){35}}
      \put(45,10){\framebox(18,32){$\xx_A$}}
      \put(63,26){\vector(1,0){10}}
      \put(83,26){\makebox(0,0){$\xx$}}
      \put(93,26){\vector(1,0){10}}
      \put(103,19){\framebox(30,14){$B$}}
      \put(133,26){\vector(1,0){20}}
      \put(162,26){\makebox(0,0){$\mm$}}
      \put(20,0){\framebox(124,52){}}
    \end{picture}
  \end{center}
  \caption{Construction of Channel Code}
  \label{fig:channel-code}
\end{figure}
\begin{rem}
It should be noted that (\ref{eq:channel-encoder}) is different from
the encoder defined in \cite{HASH} whereas (\ref{eq:channel-decoder}) is
the same.
In \cite{HASH}, the encoder is defined based on typical sets,
where $\xx\in\T_{X|Y,2\gamma}(\yy)$ is satisfied when
$\xx\in\T_{X,\gamma}$ and $\yy\in\T_{Y|X,\gamma}(\xx)$.
We changed the definition of the encoder because a general channel may
not satisfy this property.
\end{rem}

The error probability $\Error(A,B,\cc)$ is given by
\begin{equation}
  \Error(A,B,\cc)
  \equiv
  \sum_{\substack{
      \mm:\\
      \mu_{X^n}(\C_{AB}(\cc,\mm))=0
  }}
  \frac1
  {|\M_n|}
  +
  \sum_{\substack{
      \mm,\xx,\yy:\\
      \mu_{X^n}(\C_{AB}(\cc,\mm))>0\\
      \xx\in\C_{AB}(\cc,\mm)\\
      \psi_n(\yy)\neq\mm
  }}
  \frac{\mu_{Y^n|X^n}(\yy|\xx)\mu_{X^n}(\xx)}
  {|\M_n|\mu_{X^n}(\C_{AB}(\cc,\mm))}.
  \label{eq:channel-error-stochastic}
\end{equation}
We have the following theorem, where the proof is given in
Section~\ref{sec:proof-channel}.
\begin{thm}
\label{thm:channel}
If $r,R>0$ satisfy
\begin{align}
  r&>\oH(\XX|\YY)
  \label{eq:channel-r}
  \\
  r+R&<\uH(\XX),
  \label{eq:channel-rR}
\end{align}
then for any $\delta>0$ and all sufficiently large $n$ there are
functions $A\in\A$, $B\in\B$, and a vector $\cc\in\im\A$ such that
\begin{equation}
  \Error(A,B,\cc)\leq\delta.
  \label{eq:channel-error}
\end{equation}
The channel capacity is achievable with the proposed code by letting
$\XX$ be a source that attains the supremum on the right hand side of
(\ref{eq:capacity}).
\end{thm}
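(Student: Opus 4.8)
The plan is to bound the expectation of $\Error(\sfA,\sfB,\sfcc)$ over the random choice of the function pair $(\sfA,\sfB)$ and the random vector $\sfcc$, and then argue that some deterministic $(A,B,\cc)$ achieves the bound. By Lemma~\ref{lem:hash-AB}, the pair $(\sfA,\sfB)$ enjoys an $(\aalphaAB,\bbetaAB)$-hash property with $\alphaAB=\alphaA\alphaB\to1$ and $\betaAB=\betaA+\betaB\to0$, and $|\im(\A\times\B)|=|\im\A||\im\B|=\mathrm{e}^{n(r+R)}$ (up to base conversion). I would split $\Error$ into its two terms exactly as written in (\ref{eq:channel-error-stochastic}): the \emph{null-bin} term $\sum_{\mm:\,\mu_{X^n}(\C_{AB}(\cc,\mm))=0}1/|\M_n|$ counting messages whose bin carries no probability mass, and the \emph{decoding-error} term. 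For the first, I would introduce a typical set $\T_{\XX}\equiv\{\xx:\frac1n\log\frac1{\mu_{X^n}(\xx)}<r+R-\gamma'\}$ for a small $\gamma'$; by definition of $\uH(\XX)$ and the hypothesis $r+R<\uH(\XX)$, we have $\mu_{X^n}(\T_{\XX}^c)\to0$, while $|\T_{\XX}|\le \mathrm{e}^{n(r+R-\gamma')}$ so $|\T_{\XX}|/|\im(\A\times\B)|\to0$. Applying Lemma~\ref{lem:BCP} with $Q=\mu_{X^n}$ restricted to $\T_{\XX}$ (so that $|\im(\A\times\B)|\max_{\xx\in\T_{\XX}}\mu_{X^n}(\xx)\le \mathrm{e}^{-n\gamma'}\to0$), the balanced-coloring property forces $\mu_{X^n}(\T_{\XX}\cap\C_{AB}(\cc,\mm))$ to be within $\delta/|\M_n|$ of $\mu_{X^n}(\T_{\XX})/|\M_n|\approx 1/|\M_n|$ for every $\mm$ simultaneously, hence positive for all but a $\delta$-fraction of messages; this controls the null-bin term.

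For the decoding-error term the argument parallels the Slepian--Wolf / linear-code decoding analysis of \cite{HASH}\cite{ISIT2011a}. Conditioned on a non-null bin, $\tX^n_{AB}(\cc,\mm)$ is $\mu_{X^n}$ restricted to $\C_{AB}(\cc,\mm)$, so averaging over $\mm$ and over the encoder's randomness, the joint law of $(\tX^n,Y^n)$ is close to $\mu_{X^nY^n}$; more precisely, after the balanced-coloring step the contribution of atypical $\xx$ (those outside $\T_{\XX}$) is negligible, and on the typical part the conditional bin distribution is within total variation $\delta$ of the true conditional. The decoder declares $\mm$ correct iff $\xx_A(\cc|\yy)=\xx$, i.e. iff the true input $\xx$ is the $\mu_{X^n|Y^n}(\cdot|\yy)$-maximizer within the coset $\C_A(\cc)$. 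A decoding error then implies the existence of $\xx'\in\C_A(\cc)\setminus\{\xx\}$ with $\mu_{X^n|Y^n}(\xx'|\yy)\ge\mu_{X^n|Y^n}(\xx|\yy)$. I would introduce the conditionally-typical set and its complement: with $\xx\in\T_{\XX|\YY}(\yy)\equiv\{\xx:\frac1n\log\frac1{\mu_{X^n|Y^n}(\xx|\yy)}>r+\gamma\}$ one uses $r>\oH(\XX|\YY)$ so that $\Prob((X^n,Y^n)\notin\T_{\XX|\YY})\to0$, while the number of $\xx'$ that could beat $\xx$ is at most $\mathrm{e}^{n(r+\gamma)}$ (those with large enough conditional probability); then Lemma~\ref{lem:CRP} (collision-resistance, which follows from (\ref{eq:whash}), itself implied by (\ref{eq:hash})) with $\G$ the set of such competitors bounds the $\sfA$-probability that $\C_A(\cc)$ contains any one of them by $|\G|\alphaA/|\im\A|+\betaA=\mathrm{e}^{n(r+\gamma)}\mathrm{e}^{-nr}\alphaA+\betaA=\mathrm{e}^{n\gamma}$-free once $\gamma$ is chosen so that $r+\gamma$ is still below... wait — here I must be careful: I want $|\G|/|\im\A|\to0$, which needs $\gamma<0$ relative to the slack $r-\oH(\XX|\YY)$; so I would instead take the competitor count to be $\le\mathrm{e}^{n(\oH(\XX|\YY)+\gamma)}$ with $\gamma$ small enough that $\oH(\XX|\YY)+\gamma<r$, using that $\mu_{X^n|Y^n}(\xx'|\yy)\ge\mu_{X^n|Y^n}(\xx|\yy)$ together with $\xx$ conditionally typical pins $\xx'$ to have conditional probability $\ge\mathrm{e}^{-n(r-\text{something})}$ — the clean way is to bound competitors by $\{\xx':\mu_{X^n|Y^n}(\xx'|\yy)\ge\mathrm{e}^{-n(\oH(\XX|\YY)+\gamma')}\}$ which has size $\le\mathrm{e}^{n(\oH(\XX|\YY)+\gamma')}$, valid when $\xx$ is in a \emph{slightly larger} conditional-atypical-free set. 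Combining: the decoding-error term is at most $\Prob((X^n,Y^n)\notin\T_{\XX|\YY})+|\G|\alphaA/|\im\A|+\betaA+(\text{balanced-coloring slack})\to0$.

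Putting the two terms together, $E_{\sfA\sfB\sfcc}[\Error(\sfA,\sfB,\sfcc)]\le\delta/3+\delta/3+\delta/3=\delta$ for all large $n$ (absorbing the vanishing hash-parameter and typicality quantities), so by the standard averaging argument there exist concrete $A\in\A$, $B\in\B$, $\cc\in\im\A$ with $\Error(A,B,\cc)\le\delta$, which is (\ref{eq:channel-error}). The final sentence of the theorem — achievability of capacity — then follows immediately: for any $\epsilon>0$ pick $\XX$ attaining the supremum in (\ref{eq:capacity}) to within $\epsilon$, then choose $r\in(\oH(\XX|\YY),\oH(\XX|\YY)+\epsilon)$ and $R\in(\uH(\XX)-r-\epsilon,\uH(\XX)-r)$ so that (\ref{eq:channel-r})--(\ref{eq:channel-rR}) hold and $R>\uH(\XX)-\oH(\XX|\YY)-2\epsilon\ge C(\WW)-3\epsilon$, whence rate $R$ with vanishing error is achievable and $C(\WW)$ is attained in the limit.

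The main obstacle I anticipate is the simultaneous handling of the two typicality scales and the correct choice of the gap parameters $\gamma,\gamma'$ so that \emph{both} $|\T_{\XX}|/|\im(\A\times\B)|$ and the competitor count $|\G|/|\im\A|$ vanish while the probability-of-atypicality terms still go to zero — i.e. verifying that the open conditions (\ref{eq:channel-r})--(\ref{eq:channel-rR}) leave enough slack to absorb all the $\gamma$'s. A secondary technical point is justifying that the balanced-coloring control of $\mu_{X^n}(\T_{\XX}\cap\C_{AB}(\cc,\mm))$ genuinely lets one replace, inside the decoding-error sum, the bin-conditional law of $\tX^n$ by $\mu_{X^n|Y^n}$ up to a uniformly small total-variation error (handling the non-null but small bins), since the encoder's denominator $\mu_{X^n}(\C_{AB}(\cc,\mm))$ appears in (\ref{eq:channel-error-stochastic}); this is where the strong hash property (via Lemma~\ref{lem:BCP}) is essential and where the proof in \cite{HASH} had to appeal to a saturation-type property instead.
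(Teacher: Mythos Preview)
Your approach is the paper's: bound $E_{\sfA\sfB\sfcc}[\Error]$ using Lemma~\ref{lem:CRP} for the decoder and Lemma~\ref{lem:BCP} for the encoder denominator, exploit the joint hash property of $(\sfA,\sfB)$ from Lemma~\ref{lem:hash-AB}, and extract a deterministic $(A,B,\cc)$ by averaging.

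Two things need fixing. First, your typical set $\T_{\XX}$ has the inequality reversed. With your definition (self-information $< r+R-\gamma'$) it is the set of \emph{high}-probability sequences, so $\mu_{X^n}(\T_{\XX})\to 0$ (not $\T_{\XX}^c$) and $\max_{\xx\in\T_{\XX}}\mu_{X^n}(\xx)$ is unbounded --- it contains the most probable $\xx$. What Lemma~\ref{lem:BCP} needs is a bound on $\max_{\xx\in\T}Q(\xx)$, not on $|\T|$; the remark about $|\T_{\XX}|/|\im(\A\times\B)|$ is irrelevant here. The correct set is the paper's $\uT_X=\{\xx:\frac1n\log\frac1{\mu_{X^n}(\xx)}\ge\uH(\XX)-\e\}$: then $\mu_{X^n}([\uT_X]^c)\to 0$ by the definition of $\uH$, and $\max_{\xx\in\uT_X}\mu_{X^n}(\xx)\le 2^{-n[\uH(\XX)-\e]}$, whence $|\im\A||\im\B|\max_{\uT_X}\mu_{X^n}\le 2^{-n[\uH(\XX)-r-R-\e]}\to 0$. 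Your conditional typical set has the same slip initially, though your mid-paragraph correction lands on the right object, the paper's $\oT_{X|Y}=\{(\xx,\yy):\frac1n\log\frac1{\mu_{X|Y}(\xx|\yy)}\le\oH(\XX|\YY)+\e\}$, giving competitor count $\le 2^{n[\oH(\XX|\YY)+\e]}$ and atypicality probability $\mu_{XY}([\oT_{X|Y}]^c)\to 0$.

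Second, on the denominator $\mu_{X^n}(\C_{AB}(\cc,\mm))$ in (\ref{eq:channel-error-stochastic}): the paper does not argue bin-by-bin total-variation closeness of $\tX^n$ to $\mu_{X^n}$. Instead it writes, for each non-null bin,
\[
\frac{1}{|\im\A||\im\B|\,\mu_{X^n}(\C_{AB}(\cc,\mm))}
=1+\lrB{\frac{1}{|\im\A||\im\B|\,\mu_{X^n}(\C_{AB}(\cc,\mm))}-1},
\]
so the decoding-error sum splits into (i) the true-distribution decoding error $\sum_{\xx,\yy}\mu_{X^nY^n}(\xx,\yy)\chi(\xx_A(A\xx|\yy)\neq\xx)$, controlled by Lemma~\ref{lem:CRP} as above, plus (ii) a correction bounded by $\sum_{\cc,\mm}\bigl|\mu_{X^n}(\C_{AB}(\cc,\mm))-\tfrac{1}{|\im\A||\im\B|}\bigr|$, which is exactly the $\ell^1$ quantity in Lemma~\ref{lem:BCP}. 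The null-bin term is absorbed into (ii) automatically, since each null $(\cc,\mm)$ contributes precisely $1/(|\im\A||\im\B|)$ to that sum. This single algebraic step is the clean answer to the ``secondary technical point'' you flag at the end: the balanced-coloring control is applied once, globally in $\ell^1$ over all $(\cc,\mm)$, not per-bin.
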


\begin{rem}
From (\ref{eq:channel-error-stochastic}) and (\ref{eq:channel-error}),
we have the fact that $\C_{AB}(\cc,\mm)\neq\emptyset$ with probability
close to $1$ by letting $\delta\to 0$ because
\begin{align}
  \sum_{\substack{
      \mm:\\
      \C_{AB}(\cc,\mm)=\emptyset
  }}
  \frac 1{|\M_n|}
  &\leq
  \sum_{\substack{
      \mm:\\
      \mu_{X^n}(\C_{AB}(\cc,\mm))=0
  }}
  \frac 1{|\M_n|}
  \notag
  \\
  &\leq
  \Error(A,B,\cc)
  \notag
  \\
  &\leq
  \delta.
\end{align}
Furthermore, we can find $\cc\in\im A\subset\im\A$ satisfying
(\ref{eq:channel-error})
because $\Error(A,B,\cc)=1$ when $\cc\in\im\A\setminus\im A$.
\end{rem}

Next, we consider a special case for the proposed code,
which provides an interpretation of
the conventional linear codes \cite{BB04}\cite{GA68}.
It should be noted that a constrained random number generator
is unnecessary.

Let us assume that
$\mu_{X^n}$ is the uniform distribution on $\X^n$
and
$(\bcA,\pA)$ is an ensemble of {\em matrices}
$A:\X^n\to\X^l$
satisfying
\[
  r=\frac 1n\log|\im\A|
\]
when $0<r<\log|\X|$ is given.
We fix a matrix $A\in\A$ and a vector $\cc\in\im A\subset\im\A$
so that they are available for constructing an encoder and a decoder.

Let $\M_n$ be a set of all messages that is a linear space satisfying
$|\M_n|=|\C_A(\cc)|$ for all $\cc\in\im A$.
Since $A$ is a linear function,
there is a bijective linear function $G:\M_n\to\C_A(\zero)$, which is
known as a generator matrix.
The rate $R$ of the code is given as
\[
  R \equiv \frac1n\log|\M_n|.
\]
Since for a given $\cc\in\im A$ there is $\xx_{\cc}$  such that
$A\xx_{\cc}=\cc$,
then we have the fact that $A[G\mm+\xx_{\cc}]=\cc$ for all $\mm\in\M_n$.
Since $G$ is a linear function, there is a linear function $B:\X^n\to\M_n$ 
such that $BG\mm=\mm$ for all $\mm\in\M_n$.
We define a deterministic encoder $\vphi_n:\M_n\to\C_A(\cc)$ and a
decoder $\psi_n:\Y^n\to\M_n$ as
\begin{align*}
  \vphi_n(\mm)
  &\equiv G\mm+\xx_{\cc}
  \\
  \psi_n(\yy)
  &\equiv B[\xx_A(\cc|\yy)-\xx_{\cc}]
\end{align*}
where $\xx_A$ is defined as (\ref{eq:gA}).
The error probability $\Error(A,\cc)$ is given by
\[
  \Error(A,\cc)
  \equiv
  \sum_{\substack{
      \xx\in\C_A(\cc),\yy:\\
      \psi_n(\yy)\neq\xx
  }}
  \frac{\mu_{Y^n|X^n}(\yy|\xx)}
  {|\C_A(\cc)|}.
\]
We have the following corollary,
which is shown in Section~\ref{sec:proof-channel-uniform}.
\begin{cor}
\label{thm:channel-uniform}
If $r$ satisfies
\begin{equation}
  \oH(\XX|\YY)<r<\log|\X|,
  \label{eq:channel-uniform-r}
\end{equation}
then for any $\delta>0$ and all sufficiently large $n$ there are a
matrix $A\in\A$ and a vector $\cc\in\im A$ such that
\begin{gather}
  R\geq \log|\X|-r
  \label{eq:channel-uniform-R}
  \\
  \Error(A,\cc)\leq\delta
  \label{eq:channel-uniform-error}
\end{gather}
When the supremum on the right hand side of (\ref{eq:capacity})
is achieved by $\XX$ corresponding to the uniform distribution,
the channel capacity
\begin{equation}
  C(\WW)=\log|\X|-\oH(\XX|\YY)
  \label{eq:capacity-uniform}
\end{equation}
is achievable with the proposed code by letting $r\to\oH(\XX|\YY)$.
Assuming that $\X=\Y=\Z$ is a finite field, the capacity
\begin{equation}
  C(\WW)=\log|\X|-\oH(\ZZ)
  \label{eq:capacity-additive}
\end{equation}
for a channel with additive noise $\ZZ=\{Y^n-X^n\}_{n=1}^{\infty}$
is achievable with the proposed code by letting $r\to\oH(\ZZ)$.
\end{cor}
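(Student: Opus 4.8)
The plan is to derive this corollary as a specialization of Theorem~\ref{thm:channel}. First I would observe that when $\mu_{X^n}$ is uniform on $\X^n$ and $A$ is a linear map $\X^n\to\X^l$, the constrained random number generator becomes trivial: the conditional distribution $\mu_{\tX^n}$ in (\ref{eq:channel-stochastic-encoder}) is simply the uniform distribution on the coset $\C_{AB}(\cc,\mm)$, and since the cosets $\C_A(\cc)$ all have the same cardinality $|\X|^{n-l}$ (they are translates of the kernel), the message set $\M_n$ with $|\M_n|=|\C_A(\cc)|$ can be realized through the generator matrix $G$ and the linear decoding map $B$ as in the construction preceding the corollary. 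The key arithmetic identities to record are $\frac1n\log|\C_A(\cc)| = \log|\X| - \frac1n\log|\im A|$ and, taking $R = \frac1n\log|\M_n|$, we get $r + R = \log|\X|$ up to the $o(1)$ slack coming from $|\im A|$ versus $|\im\A|$, so (\ref{eq:channel-uniform-R}) follows once we have the code.

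Next I would identify the encoder $\vphi_n$ and decoder $\psi_n$ above with the stochastic encoder $\Phi_n$ and decoder $\psi_n$ of (\ref{eq:channel-encoder})--(\ref{eq:channel-decoder}): because $\mu_{\tX^n}$ is uniform on $\C_{AB}(\cc,\mm) = \{G\mm + \xx_\cc\}$ (a singleton once we also fix $B$ appropriately, or more precisely because the deterministic choice $G\mm+\xx_\cc$ lies in that coset and averaging over a uniform $\mu_{X^n}$ reduces to counting), the error quantity $\Error(A,\cc)$ of the corollary is bounded by $\Error(A,B,\cc)$ of (\ref{eq:channel-error-stochastic}) — indeed the first sum in (\ref{eq:channel-error-stochastic}) vanishes since every coset is nonempty, and the second sum is exactly the uniform average of $\mu_{Y^n|X^n}(\yy|\xx)$ over $\xx\in\C_A(\cc)$ and the bad $\yy$'s with $\psi_n(\yy)\neq\mm$. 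Then I apply Theorem~\ref{thm:channel} with $R = \log|\X| - r - o(1)$: the hypothesis (\ref{eq:channel-r}) is the left inequality of (\ref{eq:channel-uniform-r}), namely $r > \oH(\XX|\YY)$, and (\ref{eq:channel-rR}) reads $r + R < \uH(\XX)$, which holds because $\uH(\XX) = \log|\X|$ when $\XX$ is uniform (the self-information $\frac1n\log\frac1{\mu_{X^n}(X^n)} = \log|\X|$ deterministically), so $r+R = \log|\X| - o(1) < \log|\X| = \uH(\XX)$ is satisfied for large $n$. This yields (\ref{eq:channel-uniform-error}).

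For the capacity statements, I would substitute: if the supremum in (\ref{eq:capacity}) is attained by the uniform $\XX$, then $\uH(\XX) = \log|\X|$ gives $C(\WW) = \log|\X| - \oH(\XX|\YY)$, which is (\ref{eq:capacity-uniform}); letting $r \downarrow \oH(\XX|\YY)$ makes $R = \log|\X| - r \uparrow C(\WW)$. The additive-noise case (\ref{eq:capacity-additive}) then follows because for $Y^n = X^n + Z^n$ with $X^n$ independent of $Z^n$ and $X^n$ uniform, one has $\mu_{X^n|Y^n}(\xx|\yy) = \mu_{Z^n}(\yy - \xx)$, hence the conditional information spectrum of $X^n$ given $Y^n$ equals that of $Z^n$, giving $\oH(\XX|\YY) = \oH(\ZZ)$; moreover a uniform input does attain the supremum in this symmetric setting, so (\ref{eq:capacity-uniform}) specializes to (\ref{eq:capacity-additive}).

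The main obstacle I anticipate is the bookkeeping in the second step — carefully matching the deterministic linear construction $(\vphi_n,\psi_n)$ to the stochastic $(\Phi_n,\psi_n)$ and verifying that $\Error(A,\cc) \le \Error(A,B,\cc)$ holds for a good choice of $B$, rather than just on average. One must check that the linear decoding map $B$ with $BG\mm = \mm$ can be chosen compatibly with the ensemble $(\bcB,\bpB)$ implicit in Theorem~\ref{thm:channel}, or alternatively argue directly that the error event $\psi_n(\yy)\neq\mm$ in the corollary coincides with $\xx_A(\cc|\yy) \neq \vphi_n(\mm)$, which is a property of $A$ and $\cc$ alone (independent of $B$), so that Theorem~\ref{thm:channel}'s guarantee of a good $(A,\cc)$ transfers directly. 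The remaining $o(1)$ slack between $\frac1n\log|\im A|$ and $r = \frac1n\log|\im\A|$ must be absorbed into the $\delta$ and the strict inequalities, which is routine but needs to be stated.
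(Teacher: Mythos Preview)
Your primary plan---invoke Theorem~\ref{thm:channel} as a black box---does not go through. In the corollary the rate is $R=\frac1n\log|\C_A(\cc)|=\log|\X|-\frac1n\log|\im A|$, and since $|\im A|\le|\im\A|$ you get $r+R\ge\log|\X|=\uH(\XX)$: the $o(1)$ slack has the \emph{wrong sign}, so hypothesis (\ref{eq:channel-rR}) is violated, not satisfied. If instead you feed some $R'<\log|\X|-r$ into Theorem~\ref{thm:channel}, the theorem returns a $B$ from an independent ensemble $(\bcB,\bpB)$, and in that stochastic code the channel input is uniform on $\C_{AB}(\cc,\mm)$, which is \emph{not} the uniform-on-$\C_A(\cc)$ distribution underlying $\Error(A,\cc)$. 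Moreover the error event $B\xx_A(\cc|\yy)\neq B\xx$ there is contained in, not containing, the event $\xx_A(\cc|\yy)\neq\xx$; so there is no general inequality $\Error(A,\cc)\le\Error(A,B,\cc)$ to harvest.

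The paper does not apply Theorem~\ref{thm:channel} but reuses the bound from its proof, and your own alternative observation is exactly the right route: in the linear construction the decoding error is precisely $\xx_A(\cc|\yy)\neq\xx$, so
\[
  \Error(A,\cc)=\sum_{\xx\in\C_A(\cc)}\frac1{|\C_A(\cc)|}\sum_\yy\mu_{Y|X}(\yy|\xx)\,\chi(\xx_A(\cc|\yy)\neq\xx).
\]
Averaging over $\cc$ uniform on $\im A$ collapses this (using $|\im A|\cdot|\C_A(\cc)|=|\X|^n$ and the uniform $\mu_X$) to $\sum_{\xx,\yy}\mu_{XY}(\xx,\yy)\chi(\xx_A(A\xx|\yy)\neq\xx)$, and then averaging over $\sfA$ gives exactly the collision term already bounded in (\ref{eq:proof-channel-error1}) via Lemma~\ref{lem:CRP} under the sole hypothesis $r>\oH(\XX|\YY)$. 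The balanced-coloring term (\ref{eq:proof-channel-error2}) is not needed here, so no condition on $r+R$ ever enters. Your derivation of (\ref{eq:channel-uniform-R}) from $|\im A|\le|\im\A|$ and your treatment of (\ref{eq:capacity-uniform}) and (\ref{eq:capacity-additive}) are fine and match the paper.
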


\begin{rem}
In~\cite[Thoerem 7.2.1]{CT}, the capacity of 
a discrete stationary memoryless weakly symmetric channel is given as
\[
  C(\WW)=\log|\Y|-H(\text{row of the transition matrix}),
\]
which is another expression of (\ref{eq:capacity-uniform}).
It should be noted that the formula (\ref{eq:capacity-uniform})
is valid for a weakly symmetric channel and is well-defined as long as
$|\X|$ is finite.
It should also be noted that the capacity (\ref{eq:capacity-uniform})
for a symmetric output channel (e.g. an additive Gaussian noise channel)
is achieved by $\XX$ corresponding to the uniform distribution.
For a channel with additive noise $\ZZ$, the channel capacity
(\ref{eq:capacity-additive}) is derived
in~\cite{VH94}\cite[Example 3.2.1]{HAN} when $\X=\Y=\Z=\{0,1\}$.
Formula (\ref{eq:capacity-additive}) is an extension to a general finite
alphabet.
\end{rem}

\section{Construction of Lossy Source Code}
\label{sec:lossy}

This section introduces a lossy source code.
The idea for the construction is drawn
from~\cite{SWLDPC}\cite{HASH}\cite{ISIT2011a}.
It should be noted that we assume that a reproduction alphabet $\X^n$ is
finite set but a source alphabet $\Y^n$ is allowed to be arbitrary
(infinite, continuous) set.

For given $r>$ and $R>0$,
let $(\bcA,\pA)$ and $(\bcB,\pB)$ be ensembles of functions
$A:\X^n\to\im\A$ and $B:\X^n\to\im\B$ satisfying
\begin{align*}
  r&=\frac1n\log|\im\A|
  \\
  R&=\frac1n\log|\im\B|,
\end{align*}
respectively.
We fix functions $A\in\A$, $B\in\B$, and a vector $\cc\in\im\A$
so that they are available for constructing an encoder and a decoder.

Let  $\mu_{X^n}$ be defined as
\[
  \mu_{X^n}(\xx)
  \equiv
  \sum_{\yy}\mu_{X^n|Y^n}(\xx|\yy)\mu_{Y^n}(\yy),
\]
where $\mu_{Y^n}$ is the probability distribution of a source $Y^n$
and we assume that the conditional probability distribution
$\mu_{X^n|Y^n}$ is given.
We use a constrained random number generator to construct an encoder.
Let $\tX^n\equiv\tX^n_A(\cc|\yy)$ be a random variable corresponding to
the distribution
\begin{equation}
  \mu_{\tX^n|Y^n}(\xx|\yy)\equiv
  \begin{cases}
    \frac{\mu_{X^n|Y^n}(\xx|\yy)}{\mu_{X^n|Y^n}(\C_A(\cc)|\yy)},
    &\text{if}\ \xx\in\C_A(\cc),
    \\
    0,
    &\text{if}\ \xx\notin\C_A(\cc).
  \end{cases}
  \label{eq:lossy-stochastic-encoder}
\end{equation}
We define the stochastic encoder $\Phi_n:\Y^n\to\im\B$ and the
decoder $\psi_n:\im\B\to\X^n$ as
\begin{align}
  \Phi_n(\yy)
  &\equiv B\tX^n_A(\cc|\yy)
  \label{eq:lossy-encoder}
  \\
  \psi_n(\mm)
  &\equiv \xx_{AB}(\cc,\mm),
  \label{eq:lossy-decoder}
\end{align}
where we declare an encoding error when $\mu_{X^n|Y^n}(\C_A(\cc)|\yy)=0$
and $\xx_{AB}$ is defined as
\begin{align}
  \xx_{AB}(\cc,\mm)
  &\equiv
  \arg\max_{\xx'\in\C_{AB}(\cc,\mm)}
  \mu_{X^n}(\xx').
  \label{eq:lossy-gAB}
\end{align}
The flow of vectors is illustrated in Fig.\ \ref{fig:lossy}.
\begin{figure}[t]
  \begin{center}
    \unitlength 0.45mm
    \begin{picture}(176,70)(0,0)
      \put(82,60){\makebox(0,0){Encoder}}
      \put(30,35){\makebox(0,0){$\cc$}}
      \put(35,35){\vector(1,0){10}}
      \put(5,17){\makebox(0,0){$\yy$}}
      \put(10,17){\vector(1,0){35}}
      \put(45,10){\framebox(18,32){$\tX^n_{A}$}}
      \put(63,26){\vector(1,0){10}}
      \put(83,26){\makebox(0,0){$\xx$}}
      \put(93,26){\vector(1,0){10}}
      \put(103,19){\framebox(30,14){$B$}}
      \put(133,26){\vector(1,0){20}}
      \put(162,26){\makebox(0,0){$\mm$}}
      \put(20,0){\framebox(124,52){}}
    \end{picture}
    \\
    \begin{picture}(176,70)(0,0)
      \put(82,60){\makebox(0,0){Decoder}}
      \put(65,35){\makebox(0,0){$\cc$}}
      \put(70,35){\vector(1,0){10}}
      \put(35,17){\makebox(0,0){$\mm$}}
      \put(45,17){\vector(1,0){35}}
      \put(80,10){\framebox(18,32){$\xx_{AB}$}}
      \put(98,26){\vector(1,0){20}}
      \put(125,26){\makebox(0,0){$\xx$}}
      \put(55,0){\framebox(54,52){}}
    \end{picture}
  \end{center}
  \caption{Construction of Lossy Source Code}
  \label{fig:lossy}
\end{figure}
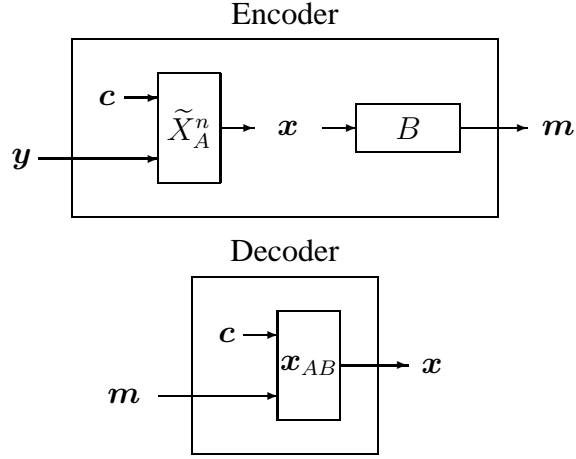

The error probability $\Error(A,B,\cc,D)$ is given as
\begin{equation}
  \Error(A,B,\cc,D)
  \equiv
  P\lrsb{d_n(\psi_n(\Phi_n(Y^n)),Y^n)>D},
  \label{eq:lossy-error}
\end{equation}
where we define $d_n(\psi_n(\Phi_n(\yy)),\yy)\equiv\infty$
when $\mu_{X^n|Y^n}(\C_A(\cc)|\yy)=0$.
We have the following theorem,
where the proof is given in Section~\ref{sec:proof-lossy}.
\begin{thm}
\label{thm:lossy}
If $r,R>0$ satisfy
\begin{align}
  r
  &<
  \uH(\XX|\YY)
  \label{eq:lossy-r}
  \\
  r+R
  &>
  \oH(\XX),
  \label{eq:lossy-rR}
\end{align}
then for any $\delta>0$
and all sufficiently large $n$
there are functions $A\in\A$, $B\in\B$, and
a vector $\cc\in\im\A$ such that
\begin{equation}
  \Error(A,B,\cc,D)\leq
  P\lrsb{d_n(X^n,Y^n)>D}+\delta.
  \label{eq:lossy-error-bound}
\end{equation}
\end{thm}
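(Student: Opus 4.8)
The plan is to analyze the expectation of $\Error(\sfA,\sfB,\sfcc,D)$ over the random choice of $(\sfA,\sfB,\sfcc)$, where $\sfcc$ is drawn uniformly on $\im\sfA$ conditioned on $\sfA$, and show it does not exceed $P(d_n(X^n,Y^n)>D)+\delta$ for large $n$; then a function-and-vector achieving the bound exists. The error decomposes into two contributions: (i) the \emph{encoding-failure} event that $\mu_{X^n|Y^n}(\C_\sfA(\sfcc)|\yy)=0$ (no reproduction word of the right $A$-coset), and (ii) the event that the decoder's reconstruction $\xx_{\sfA\sfB}(\sfcc,\sfB\tX^n)$ differs from the word $\tX^n$ actually produced by the stochastic encoder, so that the distortion exceeds $D$ even though $\tX^n$ was a good reproduction of $\yy$. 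I would further localize to a high-probability ``good'' set: let $\T(\yy)$ (depending on $\yy$ and a small $\gamma>0$) be the set of $\xx$ with $\frac1n\log\frac1{\mu_{X^n}(\xx)}$ and $\frac1n\log\frac1{\mu_{X^n|Y^n}(\xx|\yy)}$ near their spectral values and with $d_n(\xx,\yy)\le D$; by the definitions of $\oH(\XX)$, $\uH(\XX|\YY)$, and $\oD(\XX;\YY)$, and since (\ref{eq:lossy-error-bound}) only needs the $D$-constraint on top of $P(d_n(X^n,Y^n)>D)$, the mass $\mu_{X^n|Y^n}(\cdot|Y^n)$ outside $\T(Y^n)\cup\{d_n>D\}$ vanishes in probability.

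The two terms are controlled by the two lemmas of Section~\ref{sec:hash}. For (i), I would apply the balanced-coloring property (Lemma~\ref{lem:BCP}) with $Q=\mu_{X^n|Y^n}(\cdot|\yy)$ restricted to the good set $\T(\yy)$: condition (\ref{eq:lossy-r}), $r<\uH(\XX|\YY)$, guarantees $|\im\sfA|\max_{\xx\in\T(\yy)}\mu_{X^n|Y^n}(\xx|\yy)\to 0$, so every color class $\T(\yy)\cap\C_\sfA(\cc)$ carries roughly $1/|\im\sfA|$ of the conditional mass; in particular $\mu_{X^n|Y^n}(\C_\sfA(\sfcc)|\yy)>0$ for essentially every $\yy$ and every $\sfcc$, making the encoding-failure probability negligible on average. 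For (ii), I would use the collision-resistance property (Lemma~\ref{lem:CRP}, via~(\ref{eq:whash})) with $\G=\T(\yy)$ and the pair $(\sfA,\sfB)$, which by Lemma~\ref{lem:hash-AB} has an $(\aalpha_{\sfA\sfB},\bbeta_{\sfA\sfB})$-hash property with $\im(\sfA,\sfB)$ of rate $r+R$: condition (\ref{eq:lossy-rR}), $r+R>\oH(\XX)$, gives $|\T(\yy)|/|\im(\sfA,\sfB)|\to 0$, so with probability close to one no two distinct elements of $\T(\yy)$ share the same $(\sfA,\sfB)$-value. On that event $\tX^n$ is the \emph{unique} element of $\C_{\sfA\sfB}(\sfcc,\sfB\tX^n)$ lying in $\T(\yy)$; since $\tX^n\in\T(\yy)$ has near-maximal $\mu_{X^n}$-value among the good words, the decoder's choice $\arg\max_{\xx'\in\C_{\sfA\sfB}}\mu_{X^n}(\xx')$ in (\ref{eq:lossy-gAB}) returns either $\tX^n$ itself or some word outside $\T(\yy)$ — and the latter case has vanishing probability because the encoder concentrates its output on $\T(\yy)$. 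Hence $\psi_n(\Phi_n(\yy))=\tX^n$ with probability $\to 1$, and then $d_n(\psi_n(\Phi_n(\yy)),\yy)=d_n(\tX^n,\yy)\le D$.

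Combining: $E[\Error(\sfA,\sfB,\sfcc,D)] \le P(d_n(X^n,Y^n)>D) + P(\tX^n\notin\T(Y^n)\text{-type bad event}) + (\text{coloring error}) + (\text{collision error})$, and the last three tend to $0$ by (H1), (H2), (\ref{eq:lossy-r}), (\ref{eq:lossy-rR}). Taking $n$ large and then selecting $(A,B,\cc)$ no worse than average — and noting, as in the channel-coding remark, that $\cc$ can be taken in $\im A$ since $\Error=1$ otherwise — yields (\ref{eq:lossy-error-bound}). The main obstacle I anticipate is the bookkeeping in step (ii): making precise that the decoder's $\mu_{X^n}$-maximization cannot be ``fooled'' by a spurious high-probability word outside the good set $\T(\yy)$ in the same $(\sfA,\sfB)$-coset. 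This requires choosing $\T(\yy)$ so that it contains \emph{all} words of probability above the threshold (not merely the typical ones), so that any $\xx'\notin\T(\yy)$ in the coset automatically has $\mu_{X^n}(\xx')$ below that of $\tX^n$; equivalently, one tracks the joint spectrum of $(\tfrac1n\log\tfrac1{\mu_{X^n}},\tfrac1n\log\tfrac1{\mu_{X^n|Y^n}},d_n)$ and uses that the stochastic encoder's output law $\mu_{\tX^n|Y^n}$ is the $\C_\sfA(\sfcc)$-conditioning of $\mu_{X^n|Y^n}(\cdot|\yy)$, whose good-set mass is $\approx 1$ by the coloring step — so the two estimates must be carried out together rather than independently.
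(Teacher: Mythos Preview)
Your strategy is correct and matches the paper's: average over $(\sfA,\sfB,\sfcc)$, use Lemma~\ref{lem:BCP} with $r<\uH(\XX|\YY)$ to control the encoder's coset mass, and use Lemma~\ref{lem:CRP} via Lemma~\ref{lem:hash-AB} with $r+R>\oH(\XX)$ to control decoder collisions. The obstacle you flag in step~(ii) is real under your bookkeeping but is an artifact of packing all constraints into a single set $\T(\yy)$.

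The paper sidesteps it by two devices. First, it uses \emph{two} good sets rather than one: $\oT_X=\{\xx:\mu_X(\xx)\ge 2^{-n[\oH(\XX)+\e]}\}$ alone for the collision argument, and $\uT_{X|Y}(\yy)$ alone for the coloring argument. Because $\oT_X$ by construction contains every word with $\mu_X$ above the threshold, any competitor $\xx'$ in the coset with $\mu_X(\xx')\ge\mu_X(\xx)$ is automatically in $\oT_X$; hence the decoder can be fooled only by a collision \emph{within} $\oT_X$, and $|\oT_X|\le 2^{n[\oH(\XX)+\e]}$ suffices for Lemma~\ref{lem:CRP}. Second, the paper writes the encoder's weight $\frac{1}{|\im\A|\,\mu_{X|Y}(\C_A(\cc)|\yy)}$ as $1+\bigl(\frac{1}{|\im\A|\,\mu_{X|Y}(\C_A(\cc)|\yy)}-1\bigr)$. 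The ``$1$'' converts the encoder's conditional law back to the true joint $\mu_{XY}$, so the decoder-failure term becomes simply $\sum_{\xx}\mu_X(\xx)\chi(\xx_{AB}(A\xx,B\xx)\ne\xx)$ with no dependence on $\yy$ or on the encoder, while the remainder is absorbed into the total variation $\sum_{\cc,\yy}\mu_Y(\yy)\bigl|\mu_{X|Y}(\C_A(\cc)|\yy)-1/|\im\A|\bigr|$, handled by Lemma~\ref{lem:BCP}. This decouples the two estimates you feared had to be carried out jointly, and the distortion constraint $d_n\le D$ never enters any good set --- it contributes the term $P(d_n(X^n,Y^n)>D)$ directly from the union bound on indicators.
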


By assuming that $\{\mu_{X^n|Y^n}\}_{n=1}^{\infty}$ satisfies
\[
  \oD(\XX,\YY)<D,
\]
we have the fact that
\[
  \limn P\lrsb{d_n(X^n,Y^n)>D}=0
\]
from the definition of $\oD(\XX,\YY)$.
Then, by letting $n\to\infty$, $\delta\to0$, and $r\to\uH(\XX|\YY)$, 
we have the fact that for any $(R,D)$ close to the boundary of $\R(\YY)$
there is a sequence of proposed codes such that
\[
  \limn \Error(A,B,\cc,D)=0
\]

\begin{rem}
We can find $\cc\in\im A\subset\im\A$ satisfying
(\ref{eq:lossy-error-bound}) because $\Error(A,B,\cc)=1$ when
$\C_A(\cc)=\emptyset$.
\end{rem}

Next, we consider a special case of the proposed code,
which provides an interpretation of the conventional code introduced
in~\cite{MY03}\cite{GV09}.
It should be noted that a constrained random number generator
is unnecessary.

Let us assume that $\mu_{X^n}$ is the uniform distribution on $\X^n$
and $(\bcA,\pA)$ is an ensemble of {\em matrices} $A:\X^n\to\X^l$
satisfying
\[
  r=\frac 1n\log|\im\A|
\]
when $r>0$ is given.
We fix a matrix $A\in\A$ and a vector $\cc\in\im A\subset\im\A$
so that they are available for constructing an encoder and a decoder.

Since $\C_A(\zero)$ is a linear space,
there is a surjective linear function $B:\X^n\to\C_A(\zero)$.
We use the encoder defined by (\ref{eq:lossy-encoder}).
The rate $R$ of the code is given as
\[
  R \equiv \frac 1n\log|\C_A(\zero)|.
\]
Furthermore, since $B$ is surjective, there is a bijective linear
function $\xx'_{AB}:\im A\times\C_A(\zero)\to\X^n$ such that
$\xx'_{AB}(A\xx,B\xx)=\xx$ for all $\xx$.
We replace the function $\xx_{AB}$ by $\xx'_{AB}$ in the definition
of the decoder (\ref{eq:lossy-decoder}).
Let $\Error(A,\cc,D)$ be the error probability given as
\[
  \Error(A,\cc,D)
  \equiv
  P\lrsb{d_n(\psi_n(\Phi_n(Y^n)),Y^n)>D}.
\]
We have the following corollary,
which is shown in Section~\ref{sec:proof-lossy-uniform}.
\begin{cor}
\label{thm:lossy-uniform}
If $r$ satisfies
\begin{equation}
  r<\uH(\XX|\YY),
  \label{eq:lossy-uniform-r}
\end{equation}
then for any $\delta>0$ and all sufficiently large $n$
there are a matrix $A\in\A$ and a vector $\cc\in\im\A$ such that
\begin{align}
  R
  &\leq
  \log|\X|-r+\delta
  \\
  \Error(A,\cc,D)
  &\leq
  P\lrsb{d(X^n,Y^n)>D}+\delta.
  \label{eq:lossy-uniform-d}
\end{align}
When the boundary of the right hand side of (\ref{eq:rd}) is attained
with a general channel $\WW\equiv\{\mu_{X^n|Y^n}\}_{n=1}^{\infty}$
such that $\mu_{X^n}$ is uniform,
for any $(R,D)$ close to the boundary of $\R(\YY)$ there is a sequence
of proposed codes such that
\[
  \limn \Error(A,\cc,D)=0.
\]
\end{cor}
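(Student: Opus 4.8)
The plan is to deduce the corollary from the proof of Theorem~\ref{thm:lossy}, exploiting the fact that the particular $B$ used here makes the decoder return exactly the word produced by the constrained generator, so that the ``$B$-side'' of that proof degenerates.

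First I would record the consequences of $\mu_{X^n}$ being uniform on $\X^n$: then $\frac1n\log\frac1{\mu_{X^n}(\xx)}=\log|\X|$ for every $\xx$, hence $\oH(\XX)=\uH(\XX)=\log|\X|$. Next, since $B$ is surjective onto $\C_A(\zero)=\ker A$ and (by the construction) $\ker B$ is a complement of $\C_A(\zero)$ in $\X^n$, the linear map $\xx\mapsto(A\xx,B\xx)$ is injective; combined with $|\im A|\cdot|\C_A(\zero)|=|\X|^n$ (rank--nullity for $A$) it is a bijection of $\X^n$ onto $\im A\times\C_A(\zero)$, so $\C_{AB}(\cc,\mm)=\{\xx'_{AB}(\cc,\mm)\}$ is a singleton for every $\cc\in\im A$ and $\mm\in\C_A(\zero)$. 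Therefore $\psi_n(\Phi_n(\yy))=\xx'_{AB}(\cc,B\tX^n_A(\cc|\yy))=\tX^n_A(\cc|\yy)$, because the generator's output lies in $\C_A(\cc)$; this is also why one must replace the (here ambiguous, since all words carry equal $\mu_{X^n}$) $\arg\max$ in $\xx_{AB}$ by the well-defined $\xx'_{AB}$. The upshot is that the ``$B$-decoding collision'' event, which Theorem~\ref{thm:lossy} controls through the slack $r+R>\oH(\XX)$, is here identically empty, so that slack is not needed --- which is fortunate, because rank--nullity forces $r+R\leq\log|\X|=\oH(\XX)$ for every matrix of the ensemble.

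With this reduction, $\Error(A,\cc,D)$ equals the encoding-failure probability $P(\mu_{X^n|Y^n}(\C_A(\cc)|Y^n)=0)$ plus $P(d_n(\tX^n_A(\cc|Y^n),Y^n)>D,\ \mu_{X^n|Y^n}(\C_A(\cc)|Y^n)>0)$, which is exactly the part of the Theorem~\ref{thm:lossy} estimate handled using only the $(\aalphaA,\bbetaA)$-hash property of $(\bcA,\pA)$. I would re-run that part of the argument: applying the balanced-coloring Lemma~\ref{lem:BCP} with $Q(\xx)=\mu_{X^n|Y^n}(\xx|\yy)$ and the set $\T_\yy\equiv\{\xx:\mu_{X^n|Y^n}(\xx|\yy)\leq 2^{-n(\uH(\XX|\YY)-\gamma)}\}$, whose conditional mass tends to one, the hypothesis $r<\uH(\XX|\YY)$ makes $|\im\A|\max_{\xx\in\T_\yy}\mu_{X^n|Y^n}(\xx|\yy)\to 0$ for small $\gamma$, so that on average over $(\sfA,\sfcc)$ the coset $\C_{\sfA}(\sfcc)$ meets $\T_\yy$ and carries a $\mu_{X^n|Y^n}(\cdot|\yy)$-fraction close to $\mu_{X^n|Y^n}(\T_\yy|\yy)/|\im\sfA|$; hence $E_{\sfA,\sfcc}[\Error(\sfA,\sfcc,D)]\leq P(d_n(X^n,Y^n)>D)+\delta$ for large $n$, and a realization $(A,\cc)$ attaining this bound gives (\ref{eq:lossy-uniform-d}). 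For the rate, $R=\log|\X|-\frac1n\log|\im A|$ by rank--nullity, so it suffices to choose the ensemble so that a $\pA$-fraction tending to one of its matrices has rank deficiency $o(n)$ (true for the ensembles of Section~\ref{sec:linear} and for all $l\times n$ matrices); picking $(A,\cc)$ in the intersection of the two good sets gives $\frac1n\log|\im A|\geq r-\delta$ and hence $R\leq\log|\X|-r+\delta$.

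Finally, the last assertion follows just as after Theorem~\ref{thm:lossy}: pick $\WW=\{\mu_{X^n|Y^n}\}$ attaining the boundary of (\ref{eq:rd}) with $\mu_{X^n}$ uniform and with $\oD(\XX,\YY)<D$, so that $P(d_n(X^n,Y^n)>D)\to0$; then letting $\delta\to0$ and $r\to\uH(\XX|\YY)$ yields $\limn\Error(A,\cc,D)=0$ with $R\to\log|\X|-\uH(\XX|\YY)=\oH(\XX)-\uH(\XX|\YY)$, which is the rate on the boundary of $\R(\YY)$. I expect the main obstacle to be precisely the bookkeeping at this extreme rate point: one must see clearly that the singleton structure of $\C_{AB}(\cc,\mm)$ eliminates the strict requirement $r+R>\oH(\XX)$, so that the $\bcA$-side of Theorem~\ref{thm:lossy}'s proof still goes through when $r+R=\log|\X|$, and that imposing near-full rank on $A$ costs nothing in rate or error.
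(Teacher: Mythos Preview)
Your reduction of the corollary to the ``$\bcA$-side'' of Theorem~\ref{thm:lossy} is exactly the paper's argument: once $\xx'_{AB}(\cc,B\xx)=\xx$ holds identically, the collision term $\chi(\xx_{\sfA\sfB}(\sfA\xx,\sfB\xx)\neq\xx)$ is set to zero in the chain leading to (\ref{eq:proof-lossy-error}), leaving only the balanced-coloring bound (\ref{eq:proof-lossy-ave2}); this gives (\ref{eq:lossy-uniform-d}) from the hypothesis $r<\uH(\XX|\YY)$ alone.

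The genuine difference is the rate bound. You use rank--nullity $R=\log|\X|-\tfrac1n\log|\im A|$ and then appeal to a separate ``near-full rank'' fact about the ensemble. The paper stays inside the hash-property framework: it applies Lemma~\ref{lem:BCP} a second time with $Q$ the counting measure on $\X^n$ and $\T=\X^n$, obtaining
\[
  E_{\sfA\sfcc}\Bigl[\Bigl|\tfrac{|\im\A|\,|\C_{\sfA}(\sfcc)|}{|\X^n|}-1\Bigr|\Bigr]
  \leq
  \sqrt{\alphaA-1+[\betaA+1]2^{-n[\log|\X|-r]}},
\]
and then a single averaging over $(\sfA,\sfcc)$ produces a pair satisfying both the error bound and $|\im\A|\,|\C_A(\cc)|<2|\X^n|$, whence $R\leq\log|\X|-r+\tfrac1n$. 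Your route is correct but imports an extra ensemble-specific fact that the paper gets for free from the same lemma; the paper's route also automatically delivers $\cc\in\im A$ (since $|\C_A(\cc)|>0$) and avoids the ``intersection of two good sets'' bookkeeping by bounding the sum of the two expectations at once.
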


\section{Constrained Random Number Generation by Using Sum-Product
  Algorithm}
\label{sec:sum-product}

In this section, we introduce an algorithm
for generating random numbers subject to the distributions
(\ref{eq:channel-stochastic-encoder}) and
(\ref{eq:lossy-stochastic-encoder})
by assuming that $\mu_{X^n}$ and $\mu_{X^n|Y^n}$ are memoryless,
that is, they are given by
\begin{align}
  \mu_{X^n}(\xx)&=\prod_{i=1}^n\mu_{X_i}(x_i)
  \label{eq:sp-muX}
  \\
  \mu_{X^n|Y^n}(\xx|\yy)&=\prod_{i=1}^n\mu_{X_i|Y_i}(x_i|y_i)
  \notag
\end{align}
for each $\xx\equiv(x_1,\ldots,x_n)$ and $\yy\equiv(y_1,\ldots,y_n)$,
respectively.
In the following,
we construct a random number generator subject
to the distribution $\mu_{\tX^n}$
defined by
\begin{equation}
  \mu_{\tX^n}(\xx)\equiv
  \begin{cases}
    \frac{\mu_{X^n}(\xx)}{\mu_{X^n}(C_A(\cc))}
    &\text{if}\ \xx\in\C_A(\cc)
    \\
    0
    &\text{if}\ \xx\notin\C_A(\cc)
  \end{cases}
  \label{eq:sum-product-tX}
\end{equation}
for a $\mu_{X^n}$ given  by (\ref{eq:sp-muX}),
$A$, and $\cc\in\im\A$.
It should be noted that (\ref{eq:channel-stochastic-encoder})
can be reduced to (\ref{eq:sum-product-tX}) by considering a function
$(A,B):\X^n\to\im\A\times\im\B$ defined as $(A,B)\xx\equiv(A\xx,B\xx)$,
and (\ref{eq:lossy-stochastic-encoder}) can also be reduced to
(\ref{eq:sum-product-tX}) by letting
$\mu_{X_i}\equiv\mu_{X_i|Y_i}(\cdot|y_i)$ for a given $\yy$.

Let us assume that there is a family $\{\Z_j\}_{j\in\J}$ of sets
such that $\im\A\subset\times_{j\in\J}\Z_j$.
For a set of local functions $\{f_j:\X^{|\cS_j|}\to\Z_j\}_{j\in\J}$,
the sum-product algorithm~\cite{GDL}\cite{KFL01} calculates a
real-valued global function $g$ on $\X$ defined as
\[
  g(x_i)
  \equiv
  \frac{
    \sum_{\xx\setminus\{x_i\}}
    \prod_{j\in\J}f_j(\xx_{\cS_j})
  }{
    \sum_{\xx}\prod_{j\in\J}f_j(\xx_{\cS_j})
  }
\]
approximately, where the summention $\sum_{\xx\setminus\{x_i\}}$
is taken over all $\xx\in\X^n$ except for the variable $x_i$ and the
function $f_j$ depends only on the set of variables
$\xx_{\cS_i}\equiv\{x_j\}_{j\in\cS_i}$.
It should be noted that the algorithm calculates
the global function exactly when the corresponding factor graph has no loop.
Let $\pi_{x_i\to f_j}(x_i)$ and $\sigma_{f_j\to x_i}(x_i)$
be messages defined as
\begin{align*}
  \pi_{x_i\to f_j}(x_i)
  &\equiv
  \prod_{j'\in\J\setminus\{j\}: i\in\cS_{j'}}
  \sigma_{f_{j'}\to x_i}(x_i)
  \\
  \sigma_{f_j\to x_i}(x_i)
  &\equiv
  \frac{
    \sum_{\xx_{\cS_j\setminus\{i\}}}
    f_j(\xx_{\cS_j})
    \prod_{i'\in\cS_{j}\setminus\{i\}}
    \pi_{x_{i'}\to f_j}(x_{i'})
  }{
    \sum_{\xx_{\cS_j}}
    f_j(\xx_{\cS_j})
    \prod_{i'\in\cS_{j}\setminus\{i\}}
    \pi_{x_{i'}\to f_j}(x_{i'})
  },
\end{align*}
where the summation  $\sum_{\cS}$ is taken over all $\{x_i\}_{i\in\cS}$,
$\pi_{x_i\to f_j}(x_i)\equiv 1$ when there is no $j'\in\J\setminus\{j\}$
such that $i\in\cS_{j'}$ and
$\sigma_{f_j\to x_i}(x_i)\equiv f_j(x_i)/\sum_{x_i}f_j(x_i)$
when $\cS_j=\{i\}$.
The sum-product algorithm is performed by repeating the above operations
for every message $\sigma_{f_j\to x_i}(x_i)$ and $\pi_{x_i\to f_j}(x_i)$
satisfying $i\in\cS_j$ and finally calculating the approximation of the
global function as
\[
  g(x_i)
  \approx
  \prod_{j\in\J: i\in\cS_j}
  \sigma_{f_j\to x_i}(x_i),
\]
where we assign initial values to $\pi_{x_i\to f_j}(x_i)$ and
$\sigma_{f_j\to x_i}(x_i)$ when they appear on the right hand side of
the above operations and are undefined.

In the following, we introduce an algorithm for constrained random
number generation.
For each $i\in\{1,\ldots,l\}$,
let $\ba_i:\X^{|\cS_i|}\to\Z$ be a function such that
\[
  A\xx=(\ba_1(\xx_{\cS_1}),\ba_2(\xx_{\cS_2}),\ldots,\ba_l(\xx_{\cS_l})),
\]
where the $i$-th component $\ba_i$ of $A$ depends only on the
set of variables $\xx_{\cS_i}\equiv\{x_j\}_{j\in\cS_i}$.
For example, when $A\equiv(a_{i,j})$ is an $l\times n$ sparse matrix
with a maximum row weight $w$, we have $\X=\Z$, the set $\cS_i$ defined
as
\[
  \cS_i\equiv\{j\in\{1,\ldots,n\}: a_{i,j}\neq 0\}
\]
satisfies $|\cS_i|\leq w$ for all $i\in\{1,\ldots,l\}$,
and $\ba_i(\xx_{\cS_i})$ is defined as the inner product $\ba_i\cdot\xx$
of vectors $\ba_i$ and $\xx$.
Let $x_i^j\equiv(x_i,\ldots,x_j)$, where $x_i^j$ is a null string
if $i>j$. Let $\cc\equiv(c_1,\ldots,c_l)\in\Z^l$.
Let $\chi(\cdot)$ be defined as
\begin{align}
  \chi(S)
  &\equiv
  \begin{cases}
    1,&\text{if the statement $S$ is true}
    \\
    0,&\text{if the statement $S$ is false}.
  \end{cases}
  \label{eq:chi}
\end{align}

\noindent{\bf Constrained Random Number Generation Algorithm:}
\begin{algorithm}{Step 99}{}
  \item Let $k=1$. 
  \item Calculate the conditional probability distribution
  $p_{\tX_k|\tX_1^{k-1}}$
  defined as
  \begin{equation}
    p_{\tX_k|\tX_1^{k-1}}(x_k|x_1^{k-1})
    \equiv
    \frac{
      \sum_{x_{k+1}^n}
      \prod_{j=k}^n
      \mu_{X_j}(x_j)
      \prod_{i=1}^{l}\chi(\ba_i(\xx_{\cS_i})=c_i)
    }{
      \sum_{x_k^n}
      \prod_{j=k}^n
      \mu_{X_j}(x_j)
      \prod_{i=1}^{l}\chi(\ba_i(\xx_{\cS_i})=c_i)
    }.
    \label{eq:sum-product-fk}
  \end{equation}
  It should be noted that the sum-product algorithm can be employed
  to obtain (\ref{eq:sum-product-fk}),
  where $\{\mu_{X_j}\}_{j=k}^n$ and
  $\{\chi(\ba_i(\xx_{\cS_i})=c_i)\}_{i=1}^l$ are local functions and
  we substitute the generated sequence $x_1^{k-1}$ for
  (\ref{eq:sum-product-fk}).
  If $\chi(\ba_i(\xx_{\cS_i})=c_i)$ is a constant after the substitution
  of $x_1^{k-1}$,
  we can recode the constant in preparation for the future.
  \item Generate and recode a random number $x_k$ corresponding to the
  distribution $p_{\tX_k|\tX_1^{k-1}}$.
  \item If $k=n$, output $\xx\equiv x_1^n$ and terminate.
  \item If for the generated sequence $x_1^k$ there is a unique
  $x_{k+1}^n$ such that $\xx\equiv x_1^n\in\C_A(\xx)$,
  obtain the unique vector $x_{k+1}^n$, output $\xx$, and terminate.
  \item
  Let $k\leftarrow k+1$ and go to {\sf Step 2}.
\end{algorithm}

\begin{rem}
We can omit {\sf Step 5} if it is hard to execute.
\end{rem}
\begin{rem}
When $A$ is a linear function with rank $l'$,
by checking whether $k=l'$ or not at {\sf Step 5},
we can easily determine whether or not for a given $x_1^k$ there is a
unique $x_{k+1}^n$ such that $x_1^n\in\C_A(\xx)$.
We can obtain the unique $x_{l'+1}^n$ from
$x_1^{l'}$ by using a linear operation.
\end{rem}
\begin{rem}
It should be noted that the memoryless condition on $X^n$ is not
essential for the description of the algorithm.
The algorithm is well-defined when we use the formula
\[
  \mu_{X^n}(\xx)=\prod_{i=1}^n\mu_{X_i|X_1^{i-1}}(x_i|x_1^{i-1})
\]
and replace $\mu_{X_i}(x_i)$ by $\mu_{X_i|X_1^{i-1}}(x_i|x_1^{i-1})$
for $i\geq 2$ in (\ref{eq:sum-product-fk}).
However, the sum-product algorithm may not find a good approximation
in general because the corresponding factor graph may have many loops.
\end{rem}

We have the following theorem, which is shown in
Section~\ref{sec:proof-sum-product}.
\begin{thm}
\label{thm:sum-product}
Assume that (\ref{eq:sum-product-fk}) is computed exactly.
Then the proposed algorithm generates $\xx\equiv x_1^n$
subject to the probability distribution given by
(\ref{eq:sum-product-tX}).
\end{thm}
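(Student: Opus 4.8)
The plan is to show that the algorithm generates $\xx$ with probability $\mu_{\tX^n}(\xx)$ by induction on the coordinate index, exploiting the chain-rule factorization
\[
  \mu_{\tX^n}(\xx)=\prod_{k=1}^n\mu_{\tX_k\mid\tX_1^{k-1}}(x_k\mid x_1^{k-1}),
\]
and then verifying that the conditional factor used at \textsf{Step 2} of the algorithm, namely the quantity defined in (\ref{eq:sum-product-fk}), is exactly this marginal conditional of $\mu_{\tX^n}$. The starting observation is that $\mu_{X^n}$ restricted (and renormalized) to $\C_A(\cc)$ can be written in unnormalized form as $\prod_{j=1}^n\mu_{X_j}(x_j)\prod_{i=1}^l\chi(\ba_i(\xx_{\cS_i})=c_i)$, since $\xx\in\C_A(\cc)$ is equivalent to $\ba_i(\xx_{\cS_i})=c_i$ for all $i\in\{1,\dots,l\}$. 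So $\mu_{\tX^n}(\xx)$ equals this product divided by the sum of the same product over all $\xx\in\X^n$.

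First I would compute the marginal of $\mu_{\tX^n}$ over the block $x_1^k$ by summing the displayed unnormalized expression over $x_{k+1}^n$; the factors $\prod_{j=1}^{k-1}\mu_{X_j}(x_j)$ can be pulled out, leaving $\prod_{j=1}^{k-1}\mu_{X_j}(x_j)$ times $\sum_{x_{k+1}^n}\prod_{j=k}^n\mu_{X_j}(x_j)\prod_{i=1}^l\chi(\ba_i(\xx_{\cS_i})=c_i)$, all divided by the global normalizer. Taking the ratio $\mu_{\tX_1^k}(x_1^k)/\mu_{\tX_1^{k-1}}(x_1^{k-1})$, the leading product $\prod_{j=1}^{k-1}\mu_{X_j}(x_j)$ and the global normalizer both cancel, and one is left precisely with the right-hand side of (\ref{eq:sum-product-fk}). (One must also note that whenever $x_1^{k-1}$ has been generated with positive probability so far, the denominator in (\ref{eq:sum-product-fk}) is strictly positive, so the conditional is well-defined; this is where the assumption $\mu_{X^n}(\C_A(\cc))>0$ — implicit in the algorithm being run at all — gets used.) Thus \textsf{Step 3} draws $x_k$ from exactly $\mu_{\tX_k\mid\tX_1^{k-1}}(\cdot\mid x_1^{k-1})$.

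Next I would close the induction: multiplying the per-step conditionals over $k=1,\dots,n$ telescopes to $\mu_{\tX^n}(\xx)$ by the chain rule, so the output at \textsf{Step 4} has the claimed distribution. The only remaining wrinkle is the early-termination branch at \textsf{Step 5}: if after generating $x_1^k$ there is a unique completion $x_{k+1}^n$ with $\xx\in\C_A(\cc)$, the algorithm outputs that completion directly rather than continuing to sample. I would argue that in this situation the conditional $\mu_{\tX_j\mid\tX_1^{j-1}}$ for $j>k$ is a point mass on the respective coordinate of that unique completion — because $\mu_{\tX^n}$ is supported on $\C_A(\cc)$ and the block $x_1^k$ admits only one extension inside $\C_A(\cc)$ — so deterministically writing down $x_{k+1}^n$ produces the same distribution as continuing the sampling loop would have. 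Hence the short-circuit does not alter the output law.

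The main obstacle, such as it is, is bookkeeping rather than depth: one has to be careful that the normalizing sums are over $\X^n$ (not over $\C_A(\cc)$ directly) and that the indicator-product representation of ``$\xx\in\C_A(\cc)$'' is used consistently, and one has to handle the degenerate case where a partial sequence $x_1^{k-1}$ could in principle have zero extensions in $\C_A(\cc)$ — but that case has probability zero under the inductive hypothesis, since each $x_k$ is drawn from a conditional whose support is exactly the set of $x_k$ keeping the remaining sum positive. With those points dispatched, the exact-computation hypothesis on (\ref{eq:sum-product-fk}) makes the argument a direct chain-rule verification, and the sum-product remark is only about \emph{how} (\ref{eq:sum-product-fk}) is evaluated, not about correctness of the sampling scheme, so it plays no role in the proof.
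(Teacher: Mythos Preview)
Your proposal is correct and follows essentially the same approach as the paper's proof: both verify that the quantity in (\ref{eq:sum-product-fk}) is exactly the conditional $\mu_{\tX_k\mid\tX_1^{k-1}}$ of the target distribution, and then invoke the chain rule. The paper packages the computation by introducing auxiliary functions $g'_k(x_k)$ (the partial sums appearing in the numerator of (\ref{eq:sum-product-fk})) and telescoping their ratios, while you compute the marginal ratio $\mu_{\tX_1^k}/\mu_{\tX_1^{k-1}}$ directly; these are the same calculation. Your treatment of the early termination at \textsf{Step 5} --- that the remaining conditionals become point masses on the unique completion --- is equivalent to the paper's observation that $g'_{k'}(x_{k'})=\prod_{j=k'}^n\mu_{X_j}(x_j)$ in that case.
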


In the following, we consider a situation where we can use a real number
$\oomega$ subject to the uniform distribution on $[0,1)$.
We modify the proposed algorithm,
where the basic idea comes from the interval algorithm
introduced in~\cite{HH97} and is analogous to the arithmetic
coding~\cite{RL76}.
It should be noted that only {\sf Steps 1, 3} are modified.

\noindent{\bf Interval Constrained Random Number Generation Algorithm:}
\begin{algorithm}{Step 99}{}
  \item Let $k=1$ and $[\utheta_1,\otheta_1)\equiv[0,1)$.
  \item Calculate the conditional probability distribution
  $p_{\tX_k|\tX_1^{k-1}}$ defined by (\ref{eq:sum-product-fk}).
  \item Partition the interval $[\utheta_{k-1},\otheta_{k-1})$ 
  into sub-intervals that are labeled corresponding to the elements in $\X$,
  where the sub-interval width is subject to the ratio
  $p_{\tX_k|\tX_1^{k-1}}(x_k|x_1^{k-1})$.
  Let $[\utheta_k,\otheta_k)$ be a sub-interval
  that contains $\oomega$, that is, $\oomega\in[\utheta_k,\otheta_k)$ is
  satisfied for a given $\oomega$.
  Let $x_k$ be a label that corresponds to the sub-interval 
  $[\utheta_k,\otheta_k)$ and record it.
  \item If $k=n$, output $\xx\equiv x_1^n$ and terminate.
  \item If for the generated sequence $x_1^k$ there is a unique
  $x_{k+1}^n$ such that $\xx\equiv x_1^n\in\C_A(\xx)$,
  obtain the unique vector $x_{k+1}^n$, output $\xx$, and terminate.
  \item Let $k\leftarrow k+1$ and go to {\sf Step 2}.
\end{algorithm}

From Theorem~\ref{thm:sum-product},
we have the fact that the probability of selecting $\oomega\in[0,1)$ is
equal to the width of the sub-interval $[\utheta_{k'},\otheta_{k'})$,
which is equal to the probability $\mu_{\tX^n}(\xx)$ of a generated
sequence $\xx$,
where $k'$ is the value of $k$ when the algorithm is terminated.

It should be noted that we can construct a deterministic code
from a stochastic code by fixing a random number $\oomega\in[0,1)$.
In fact, by using the random coding argument,
we can show that there is a random number $\oomega\in[0,1)$
such that the error probability is sufficiently small.
This is because, from Theorems~\ref{thm:channel} and \ref{thm:lossy},
the average error probability with respect to the random variable
corresponding to a random number on $[0,1)$ is sufficiently small.

\begin{rem}
Instead of a real number $\oomega$,
we can use a binary random sequence $\omega_1,\omega_2,\ldots$
subject to the uniform distribution on $\{0,1\}$
by letting $\oomega\equiv 0.\omega_1\omega_2\cdots\in[0,1)$,
which is the binary expansion of a real number.
Since we can estimate $\mu_{X^n}(\C_A(\cc))=1/|\im\A|$  approximately
and the average entropy of $\tX^n(\cc)$ is given as
\begin{align}
  E_{\sfcc}\lrB{H(\tX^n(\sfcc))}
  &=
  \sum_{\cc}\frac 1{|\im\A|}\sum_{\xx\in\C_A(\cc)}\frac{\mu_{X^n}(\xx)}{\mu_{X^n}(\C_A(\cc))}
  \log\frac{\mu_{X^n}(\C_A(\cc))}{\mu_{X^n}(\xx)}
  \notag
  \\
  &=
  H(X^n)-\log|\im\A|,
\end{align}
the required length of the binary sequence can be estimated
approximately as at least $H(X^n)-\log|\im\A|$.
\end{rem}

\section{Proofs of Theorems}
\label{sec:proof}

\subsection{Proof of Lemma~\ref{lem:capacity}}
\label{sec:proof-capacity}
Since
\[
  \uI(\XX;\YY)\geq\uH(\XX)-\oH(\XX|\YY)
\]
for any $(\XX,\YY)$, we have
\begin{align}
  C(\WW)
  &=
  \sup_{\XX}\uI(\XX;\YY)
  \notag
  \\*
  &\geq 
  \sup_{\XX}\lrB{\uH(\XX)-\oH(\XX|\YY)}.
\end{align}
In the following, we prove that
\begin{equation}
  C(\WW)\leq\sup_{\XX}\lrB{\uH(\XX)-\oH(\XX|\YY)},
  \label{eq:proof-capacity-leq}
\end{equation}
which completes the proof of the lemma.

From the definition of $C(\WW)$,
we have the fact that for any $\delta>0$ and sufficiently large $n$
there is a pair consisting an encoder $\vphi_n:\M_n\to\X^n$ and 
a decoder $\psi_n:\Y^n\to\M_n$ such that
\begin{align}
  \liminfn\frac 1n\log|\M_n| \geq C(\WW)-\delta
  \label{eq:proof-capacity-rate}
  \\
  \limn P(\psi_n(Y^n)\neq M_n)=0.
  \label{eq:proof-capacity-error}
\end{align}
We can assume\footnote{This assumption is used merely so that
  $\MM\equiv\{M_n\}_{n=1}^{\infty}$ is a general source satisfying
  $M_n\in\X^n$.
  It should be noted that $\M_n$ and $\{\vphi_n(\mm):\mm\in\M_n\}$ are
  different subsets of $\X^n$ in general.
  We could define a channel code by a subset $\M_n$ of $\X^n$ as defined
  in~\cite{HAN}\cite{VH94} instead of introducing an encoder $\vphi_n$.
  We introduce an encoder $\vphi_n$ to consider a stochastic encoder.}
that $\M_n\subset\X^n$ without loss of generality.
Since the distribution $\mu_{M_n}$ of $M_n$ is uniform on $\M_n$,
we have the fact that
\begin{align}
  \frac1n\log\frac 1{\mu_{M_n}(\xx)}
  &=
  \frac1n\log|\M_n|
  \notag
  \\
  &\geq
  \liminfn\frac 1n \log|\M_n|-\delta
\end{align}
for all $\xx\in\M_n$, $\delta>0$, and sufficiently large $n$.
Since 
\[
  \frac1n\log\frac 1{\mu_{M_n}(\xx)}=\infty
\]
for every $\xx\notin\M_n$, we have the fact that
\[
  \frac1n\log\frac 1{\mu_{M_n}(\xx)}\geq 
  \liminfn\frac 1n \log|\M_n|-\delta
\]
for every $\xx\in\X^n$, $\delta>0$ and sufficiently large $n$.
This implies that
\begin{equation}
  \limn
  \Prob\lrsb{
    \frac1n\log\frac 1{\mu_{M_n}(M_n)}
    <\liminfn\frac 1n \log|\M_n|-\delta
  }
  =
  0.
  \label{eq:proof-capacity-PM}
\end{equation}
Let $\MM\equiv\{M_n\}_{n=1}^{\infty}$ be a general source.
Then we have 
\begin{equation}
  \liminfn\frac 1n \log|\M_n|-\delta\leq \uH(\MM)
  \label{eq:proof-capacity-uHM}
\end{equation}
from (\ref{eq:proof-capacity-PM}) and the definition of $\uH(\MM)$.
We have
\begin{align}
  C(\WW)
  &\leq
  \liminfn\frac1n\log|\M_n|+\delta
  \notag
  \\*
  &\leq
  \uH(\MM)+2\delta
  \notag
  \\
  &=
  \uH(\MM)-\oH(\MM|\YY)+2\delta
  \notag
  \\
  &\leq
  \sup_{\XX}\lrB{
    \uH(\XX)-\oH(\XX|\YY)
  }
  +2\delta,
\end{align}
where the first inequality comes from (\ref{eq:proof-capacity-rate}),
the second inequality comes from (\ref{eq:proof-capacity-uHM}),
and the equality comes from
(\ref{eq:proof-capacity-error}) and Lemma~\ref{lem:fano}.
We have (\ref{eq:proof-capacity-leq}) by letting $\delta\to0$.
\hfill\qed

\subsection{Proof of Lemma~\ref{lem:rd}}
\label{sec:proof-rate-distortion}

Since
\[
  \oI(\XX;\YY)\leq\oH(\XX)-\uH(\XX|\YY)
\]
for any $(\XX,\YY)$, we have
\begin{align*}
  \R(\YY)
  &=
  \bigcup_{\WW}\lrb{
    (R,D):
    \begin{gathered}
      \oI(\XX;\YY)\leq R
      \\
      \oD(\XX;\YY)\leq D
    \end{gathered}
  }
  \notag
  \\
  &\supset
  \bigcup_{\WW}\lrb{
    (R,D):
    \begin{gathered}
      \oH(\XX)-\uH(\XX|\YY)\leq R
      \\
      \oD(\XX;\YY)\leq D
    \end{gathered}
  }.
\end{align*}
In the following, we prove that
\begin{equation}
  \R(\YY)
  \subset
  \bigcup_{\WW}\lrb{
    (R,D):
    \begin{gathered}
      \oH(\XX)-\uH(\XX|\YY)\leq R
      \\
      \oD(\XX;\YY)\leq D
    \end{gathered}
  },
  \label{eq:proof-rd-achievable}
\end{equation}
which completes the proof of the lemma.

Assume that $(R,D)\in\R(\YY)$.
From (\ref{eq:rd-I}),
we have the fact that for all $\delta>0$ and all sufficiently large $n$,
there is a pair consisting an encoder $\vphi_n$ and a decoder $\psi_n$
satisfying (\ref{eq:rd-R}) and (\ref{eq:rd-D}).
Let $\hX^n\equiv\psi_n(\vphi_n(Y^n))\in\X^n$.
Then we have
\begin{align}
  \Prob\lrsb{
    \frac1n\log\frac1{\mu_{\hX^n}(\hX^n)}
    >
    R+\e
  }
  &\leq
  \Prob\lrsb{
    \frac1n\log\frac1{\mu_{\hX^n}(\hX^n)}
    \geq
    \frac1n\log|\M_n|+\e
  }
  \notag
  \\
  &\leq
  2^{-n\e}
\end{align}
for any $\e>0$,
where the first inequality comes from (\ref{eq:rd-R}),
and the second inequality comes from \cite[Lemma 2.6.2]{HAN}
and the fact that the cardinality of the domain of $\hX^n$ is at most
$|\M_n|$.
By letting $n\to\infty$, we have the fact that a general source
$\hXX\equiv\{\psi_n(\vphi_n(Y^n))\}_{n=1}^{\infty}$ satisfies
\begin{align}
  \oH(\hXX)-\uH(\hXX|\YY)
  &\leq
  \oH(\hXX)
  \notag
  \\
  &\leq
  R+\e.
\end{align}
By letting $\e\to0$, we have
\begin{equation}
  \oH(\hXX)-\uH(\hXX|\YY)\leq R.
  \label{eq:proof-rd-hX-R}
\end{equation}
On the other hand, we have
\[
  \limn P\lrsb{d_n(\hX^n,Y^n)>D}
  =0
\]
from (\ref{eq:rd-D}) by letting $n\to\infty$ and $\delta\to0$.
This implies that
\begin{equation}
  \oD(\hXX,\YY)
  \leq
  D.
  \label{eq:proof-rd-hX-D}
\end{equation}
Then we have 
\[
  (R,D)
  \in
  \bigcup_{\WW}\lrb{
    (R,D):
    \begin{gathered}
      \oH(\XX)-\uH(\XX|\YY)\leq R
      \\
      \oD(\XX;\YY)\leq D
    \end{gathered}
  },
\]
which implies (\ref{eq:proof-rd-achievable}).
\hfill\qed

\subsection{Proof of Theorem~\ref{thm:channel}}
\label{sec:proof-channel}
We omit dependence on $n$ of $X$ and $Y$ when they appear
in the subscript of $\mu$.

From (\ref{eq:channel-r}) and (\ref{eq:channel-rR}),
we have the fact that there is $\e>0$ satisfying
\begin{align}
  r
  &>
  \oH(\XX|\YY)+\e
  \label{eq:proof-channel-r-e}
  \\
  r+R
  &<
  \uH(\XX)-\e.
  \label{eq:proof-channel-rR-e}
\end{align}

Let $\uT_X\subset\X^n$
and $\oT_{X|Y}\subset\X^n\times\Y^n$
be defined as
\begin{align}
  \uT_X
  &\equiv
  \lrb{
    \xx:
    \frac 1n\log\frac 1{\mu_X(\xx)}
    \geq
    \uH(\XX)-\e
  }
  \\
  \oT_{X|Y}
  &\equiv
  \lrb{
    (\xx,\yy):
    \frac 1n\log\frac 1{\mu_{X|Y}(\xx|\yy)} \leq \oH(\XX|\YY)+\e
  }.
  \label{eq:channel-TXY}
\end{align}
Assume that $(\xx,\yy)\in\oT_{X|Y}$ and $\xx_A(A\xx|\yy)\neq \xx$.
Then we have the fact that there is
$\xx'\in\C_A(A\xx)$ such that $\xx'\neq\xx$ and
\[
  \mu_{X|Y}(\xx'|\yy)
  \geq
  \mu_{X|Y}(\xx|\yy)
  \geq 2^{-n[\oH(\XX|\YY)+\e]}.
\]
This implies that $\lrB{\oT_{X|Y}(\yy)\setminus\{\xx\}}\cap\C_A(A\xx)\neq\emptyset$,
where  $\oT_{X|Y}(\yy)$ is defined as
\[
  \oT_{X|Y}(\yy)\equiv
  \lrb{
    \xx: 
    (\xx,\yy)\in\oT_{X|Y}
  }.
\]
We have
\begin{align}
  E_{\sfA}\lrB{
    \chi(\xx_{\sfA}(\sfA\xx|\yy)\neq\xx)
  }
  &\leq
  p_{\sfA}\lrsb{\lrb{A:
      [\oT_{X|Y}(\yy)\setminus\{\xx\}]\cap\C_A(A\xx)\neq\emptyset
  }}
  \notag
  \\*
  &\leq
  \frac{|\oT_{X|Y}(\yy)|\alphaA{}}
  {|\im\A|}
  +\betaA{}
  \notag
  \\
  &\leq
  2^{-n[r-\oH(\XX|\YY)-\e]}\alphaA{}
  +\betaA{}
  \label{eq:proof-channel-error1-sub}
\end{align}
for all $(\xx,\yy)\in\oT_{X|Y}$,
where $\chi(\cdot)$ is defined by (\ref{eq:chi}),
the second inequality comes from Lemma~\ref{lem:CRP},
and the third inequality comes from the fact that
$|\oT_{X|Y}(\yy)|\leq 2^{n[\oH(\XX|\YY)+\e]}$.
We have the fact that
\begin{align}
  &
  E_{\sfA}\lrB{
    \sum_{\xx,\yy}
    \mu_{XY}(\xx,\yy)
    \chi(\xx_{\sfA}(\sfA\xx|\yy)\neq\xx)
  }
  \notag
  \\*
  &=
  \sum_{(\xx,\yy)\in\oT_{X|Y}}
  \mu_{XY}(\xx,\yy)
  E_{\sfA}\lrB{
    \chi(\xx_{\sfA}(\sfA\xx|\yy)\neq\xx)
  }
  +
  \sum_{(\xx,\yy)\notin\oT_{X|Y}}
  \mu_{XY}(\xx,\yy)
  E_{\sfA}\lrB{
    \chi(\xx_{\sfA}(\sfA\xx|\yy)\neq\xx)
  }
  \notag
  \\
  &\leq
  2^{-n[r-\oH(\XX|\YY)-\e]}\alphaA{}
  +\betaA{}
  +\mu_{XY}([\oT_{X|Y}]^c),
  \label{eq:proof-channel-error1}
\end{align}
where the last inequality comes from (\ref{eq:proof-channel-error1-sub}).
We also have the fact that
\begin{align}
  &
  E_{\sfA\sfB}\lrB{
    \sum_{\cc,\mm}
    \lrbar{
      \mu_{X}(\C_{\sfA\sfB}(\cc,\mm))
      -
      \frac1
      {|\im\A||\im\B|}
    }
  }
  \notag
  \\*
  &\leq
  E_{\sfA\sfB}\lrB{
    \sum_{\cc,\mm}
    \lrbar{
      \mu_{X}(\C_{\sfA\sfB}(\cc,\mm)\cap\uT_X)
      -
      \frac{\mu_X(\uT_X)}
      {|\im\A||\im\B|}
    }
  }
  \notag
  \\*
  &\quad
  +
  E_{\sfA\sfB}\lrB{
    \sum_{\cc,\mm}
    \lrB{
      \mu_{X}(\C_{\sfA\sfB}(\cc,\mm)\cap[\uT_X]^c)
      +
      \frac{\mu_X([\uT_X]^c)}
      {|\im\A||\im\B|}
    }
  }
  \notag
  \\
  &=
  \mu_X(\uT_X)
  E_{\sfA\sfB}\lrB{
    \sum_{\cc,\mm}
    \lrbar{
      \frac{\mu_{X}(\C_{\sfA\sfB}(\cc,\mm)\cap\uT_X)}
      {\mu_X(\uT_X)}
      -
      \frac1
      {|\im\A||\im\B|}
    }
  }
  +
  2\mu_{X}([\uT_X]^c)
  \notag
  \\
  &\leq
  \mu_X(\uT_X)
  \sqrt{
    \frac{\alphaA{}-1+[\betaA{}+1]|\im\A||\im\B|\displaystyle\max_{\xx\in\uT_X}\mu_X(\xx)}
    {\mu_X(\uT_X)}
  }
  +
  2\mu_{X}([\uT_X]^c)
  \notag
  \\*
  &\leq
  \sqrt{\alphaA{}-1+[\betaA{}+1]2^{-n[\uH(\XX)-r-R-\e]}}
  +
  2\mu_{X}([\uT_X]^c),
  \label{eq:proof-channel-error2}
\end{align}
where the second inequality comes from Lemma \ref{lem:BCP}.
Then we have
\begin{align}
  &
  E_{\sfA\sfB\sfcc}\lrB{\Error(\sfA,\sfB,\sfcc)}
  \notag
  \\*
  &=
  E_{\sfA\sfB}\lrB{
    \sum_{\substack{
	\cc,\mm:\\
	\mu_{X^n}(\C_{\sfA\sfB}(\cc,\mm))=0
    }}
    \frac1
    {|\im\A||\im\B|}
    +
    \sum_{\substack{
	\cc,\mm,\xx,\yy:\\
	\mu_{X^n}(\C_{\sfA\sfB}(\cc,\mm))>0\\
	\xx\in\C_{\sfA\sfB}(\cc,\mm)\\
	\xx_{\sfA}(\cc|\yy)\neq\xx
    }}
    \frac{\mu_{XY}(\xx,\yy)}
    {|\im\A||\im\B|\mu_X(\C_{\sfA\sfB}(\cc,\mm))}
  }
  \notag
  \\
  &=
  E_{\sfA\sfB}\left[
    \sum_{\substack{
	\cc,\mm:\\
	\mu_{X^n}(\C_{\sfA\sfB}(\cc,\mm))=0
    }}
    \frac1
    {|\im\A||\im\B|}
    \vphantom{
      \sum_{\cc,\mm,\xx\in\C_{\sfA\sfB}(\cc,\mm),\yy}
      \sum_{\substack{
	  \cc,\mm,\xx,\yy:\\
	  \mu_{X^n}(\C_{\sfA\sfB}(\cc,\mm))>0\\
	  \xx\in\C_{\sfA\sfB}(\cc,\mm)\\
	  \xx_{\sfA}(\cc|\yy)\neq\xx
      }}
    }
  \right.
  \notag
  \\*
  &\qquad\qquad
  +
  \left.
    \sum_{\cc,\mm,\xx\in\C_{\sfA\sfB}(\cc,\mm),\yy}
    \sum_{\substack{
	\cc,\mm,\xx,\yy:\\
	\mu_{X^n}(\C_{\sfA\sfB}(\cc,\mm))>0\\
	\xx\in\C_{\sfA\sfB}(\cc,\mm)\\
	\xx_{\sfA}(\cc|\yy)\neq\xx
    }}
    \mu_{XY}(\xx,\yy)
    \vphantom{
      \sum_{\cc,\mm,\xx\in\C_{\sfA\sfB}(\cc,\mm),\yy}
      \mu_{XY}(\xx,\yy)
    }
    \lrB{
      1+
      \frac1
      {|\im\A||\im\B|\mu_X(\C_{\sfA\sfB}(\cc,\mm))}
      -
      1
    }
  \right]
  \notag
  \\
  &\leq
  E_{\sfA}\lrB{
    \sum_{\xx,\yy}
    \mu_{XY}(\xx,\yy)
    \chi(\xx_{\sfA}(\sfA\xx|\yy)\neq\xx)
  }
  +
  E_{\sfA\sfB}\lrB{
    \sum_{\cc,\mm}
    \lrbar{
      \mu_X(\C_{AB}(\cc,\mm))
      -
      \frac1{|\im\A||\im\B|}
    }
  }
  \notag
  \\
  \begin{split}
    &\leq
    2^{-n[r-\oH(\XX|\YY)-\e]}\alphaA{}
    +\betaA{}
    +\mu_{XY}([\oT_{X|Y}]^c)
    \\*
    &\quad
    +
    \sqrt{
      \alphaAB-1
      +
      [\betaAB+1]2^{-n[\uH(\XX)-r-R-\e]}
    }
    +2\mu_X([\uT_X]^c),
  \end{split}
  \label{eq:proof-channel-error}
\end{align}
where $\sfcc$ is a random variable corresponding to the uniform
distribution on $\im\A$, the first inequality comes from the fact that
\begin{align}
  &
  \sum_{\substack{
      \cc,\mm,\xx,\yy:\\
      \mu_X(\C_{AB}(\cc,\mm))>0
      \\
      \xx\in\C_{AB}(\cc,\mm)
  }}
  \mu_{XY}(\xx,\yy)
  \lrB{
    \frac1
    {|\im\A||\im\B|\mu_X(\C_{AB}(\cc,\mm))}
    -
    1
  }
  \notag
  \\
  &\leq
  \sum_{\substack{
      \cc,\mm:\\
      \mu_X(\C_{AB}(\cc,\mm))>0
  }}
  \lrbar{
    \frac1
    {|\im\A||\im\B|\mu_X(\C_{AB}(\cc,\mm))}
    -
    1
  }
  \mu_X(\C_{AB}(\cc,\mm))
  \notag
  \\
  &=
  \sum_{\substack{
      \cc,\mm:\\
      \mu_X(\C_{AB}(\cc,\mm))>0
  }}
  \lrbar{
    \mu_X(\C_{AB}(\cc,\mm))
    -
    \frac1
    {|\im\A||\im\B|}
  }
  \notag
  \\
  &=
  \sum_{\cc,\mm}
  \lrbar{
    \mu_X(\C_{AB}(\cc,\mm))
    -
    \frac1
    {|\im\A||\im\B|}
  }
  -
  \sum_{\substack{
      \cc,\mm:\\
      \mu_X(\C_{AB}(\cc,\mm))=0
  }}
  \frac1
  {|\im\A||\im\B|},
\end{align}
and the second inequality comes from (\ref{eq:proof-channel-error1}),
(\ref{eq:proof-channel-error2}).
From  (\ref{eq:proof-channel-r-e}), (\ref{eq:proof-channel-rR-e}),
(\ref{eq:proof-channel-error}) and the fact that
$\alphaA{}\to1$, $\betaA{}\to0$, $\alphaAB{}\to1$, $\betaAB{}\to0$,
$\mu_X([\uT_X]^c)\to0$, $\mu_{XY}([\uT_{XY}]^c)\to0$ as $n\to\infty$,
we have the fact that there are functions $A\in\A$, $B\in\B$, and a vector
$\cc\in\im\A$ satisfying (\ref{eq:channel-error}).
\hfill\qed

\subsection{Proof of Corollary~\ref{thm:channel-uniform}}
\label{sec:proof-channel-uniform}

Inequality (\ref{eq:channel-uniform-R}) is shown as
\begin{align*}
  R
  &\equiv
  \frac1n\log|\M_n|
  \notag
  \\
  &=
  \frac1n\log\frac{|\X^n|}{|\im A|}
  \notag
  \\
  &\geq
  \log|\X|-r,
\end{align*}
where the inequality comes from the
definition of $r$ and the fact that $\im A\subset\im\A$.

Since $\mu_{X^n}$ is uniform
and for given $\cc\in\im A$ and $\mm\in\M_n$
there is a unique $\xx\in\C_{AB}(\cc,\mm)$,
we have the fact that
\[
  \frac 1{|\C_A(\cc)|}
  =\frac{\mu_{X^n}(\xx)}
  {|\M_n|\mu_{X^n}(\C_{AB}(\cc,\mm))}
\]
for all $\mm$.
Then we have
\begin{align}
  \begin{split}
    E_{\sfA\sfcc}\lrB{\Error(\sfA,\sfcc)}
    &\leq
    2^{-n[r-\oH(\XX|\YY)-\e]}\alphaA{}
    +\betaA{}
    +\mu_{XY}([\oT_{X|Y}]^c)
    \\*
    &\quad
    +
    \sqrt{
      \alphaAB-1
      +
      [\betaAB+1]2^{-n[\uH(\XX)-r-R-\e]}
    }
    +2\mu_X([\uT_X]^c),
  \end{split}
  \label{eq:proof-channel-uniform-error}
\end{align}
from (\ref{eq:proof-channel-error}).
From (\ref{eq:channel-uniform-r}), (\ref{eq:proof-channel-uniform-error}),
and the fact that $\alphaA{}\to1$, $\betaA{}\to0$,
$\mu_{XY}([\oT_{X|Y}]^c)\to0$, $\mu_X([\uT_X]^c)\to0$ as $n\to\infty$,
we have the fact that for any $\delta>0$ and sufficiently large $n$ there
are functions $A\in\A$, and a vector $\cc\in\im\A$ satisfying
(\ref{eq:channel-uniform-error}) for all $\delta>0$ and sufficiently
large $n$.

Now, we prove (\ref{eq:capacity-additive}) following the proof presented
in~\cite{VH94}\cite[Example 3.2.1]{HAN}.
Assume that $\mu_{Y^n|X^n}$ is a channel with additive noise
$\ZZ=\{Y^n-X^n\}_{n=1}^{\infty}$.
Since the channel $\mu_{Y^n|X^n}$ is weakly symmetric (see \cite[p.190]{CT}),
then the reverse channel $\mu_{X^n|Y^n}$ is also weakly symmetric
when the channel input distribution $\mu_{X^n}$ is uniform.
This implies that $\oH(\XX|\YY)$ does not depend on $\YY$ and
\[
  \oH(\XX|\YY)=\oH(\XX|\zero)=\oH(-\ZZ)=\oH(\ZZ).
\]
We have
\begin{align}
  \uI(\XX;\YY)
  &\leq\oH(\XX)-\oH(\XX|\YY)
  \notag
  \\
  &\leq \log|\X|-\oH(\XX|\YY)
  \notag
  \\
  &\leq \log|\X|-\oH(\ZZ).
\end{align}
This implies that $\log|\X|-\oH(\ZZ)\geq C(\WW)$.
On the other hand, the supremum on the right hand side of
(\ref{eq:capacity}) is achieved by assuming that $\mu_{X^n}$ is the uniform
distribution on $\X^n$.
This implies that $\log|\X|-\oH(\ZZ)$ is the capacity of this channel.
\hfill\qed

\subsection{Proof of Theorem~\ref{thm:lossy}}
\label{sec:proof-lossy}
We omit the dependence on $n$ of $X$ and $Y$ when they appear
in the subscript of $\mu$.

From (\ref{eq:lossy-r}) and (\ref{eq:lossy-rR}),
we have the fact that there is $\e>0$ satisfying
\begin{align}
  r
  &<
  \uH(\XX|\YY)-\e
  \label{eq:proof-lossy-r}
  \\
  r+R
  &>
  \oH(\XX)+\e.
  \label{eq:proof-lossy-rR}
\end{align}

Let $\oT_X\subset\X^n$ and $\uT_{X|Y}\subset\X^n\times\Y^n$
be defined as
\begin{align*}
  \oT_X
  &\equiv\lrb{\xx: \frac1n\log\frac1{\mu_X(\xx)}\leq \oH(\XX)+\e }
  \\
  \uT_{X|Y}
  &\equiv
  \lrb{
    (\xx,\yy):
    \frac 1n\log\frac 1{\mu_{X|Y}(\xx|\yy)} \geq \oH(\XX|\YY)-\e
  }.
\end{align*}
Assume that $\xx\in\oT_X$ and $\xx_{AB}(A\xx,B\xx)\neq \xx$.
Then we have the fact that there is $\xx'\in\C_{AB}(A\xx,B\xx)$ such that
$\xx'\neq \xx$ and
\[
  \mu_X(\xx')\geq
  \mu_X(\xx)\geq 2^{-n[\oH(\XX)+\e]}.
\]
This implies that $\lrB{\oT_X\setminus\{\xx\}}\cap\C_{AB}(A\xx,B\xx)\neq\emptyset$.
Then we have
\begin{align}
  E_{\sfA\sfB}\lrB{
    \chi(\xx_{\sfA}(\sfA\xx,\sfB\xx)\neq\xx)
  }
  &\leq
  p_{\sfA\sfB}\lrsb{\lrb{(A,B):
      [\oT_X\setminus\{\xx\}]\cap\C_{AB}(A\xx,B\xx)\neq\emptyset
  }}
  \notag
  \\*
  &\leq
  \frac{|\oT_X|\alphaAB{}}
  {|\im\A|}
  +\betaAB{}
  \notag
  \\
  &\leq
  2^{-n[r-\oH(\XX)-\e]}\alphaAB{}
  +\betaAB{},
  \label{eq:proof-lossy-ave1-sub}
\end{align}
where $\chi(\cdot)$ is defined by (\ref{eq:chi}),
the second inequality comes from Lemma~\ref{lem:CRP},
and the last inequality comes from the fact that
$|\oT_X|\leq 2^{n[\oH(\XX)+\e]}$.
We have the fact that
\begin{align}
  &
  E_{\sfA\sfB}\lrB{
    \sum_{\xx}
    \mu_X(\xx)
    \chi(\xx_{\sfA\sfB}(\sfA\xx,\sfB\xx)\neq\xx)
  }
  \notag
  \\*
  &=
  \sum_{\xx\in\oT_X}
  \mu_X(\xx)
  E_{\sfA\sfB}\lrB{
    \chi(\xx_{\sfA\sfB}(\sfA\xx,\sfB\xx)\neq\xx)
  }
  +
  \sum_{\xx\notin\oT_X}
  \mu_X(\xx)
  E_{\sfA\sfB}\lrB{
    \chi(\xx_{\sfA\sfB}(\sfA\xx,\sfB\xx)\neq\xx)
  }
  \notag
  \\
  &\leq
  2^{-n[r+R-\oH(\XX)-\e]}\alphaAB{}
  +\betaAB{}
  +\mu_X([\oT_X]^c),
  \label{eq:proof-lossy-ave1}
\end{align}
where the last inequality comes from (\ref{eq:proof-lossy-ave1-sub}).
We also have the fact that
\begin{align}
  &
  E_{\sfA}\lrB{
    \sum_{\cc,\yy}
    \mu_Y(\yy)
    \lrbar{
      \mu_{X|Y}(\C_{\sfA}(\cc)|\yy)
      -
      \frac1
      {|\im\A|}
    }
  }
  \notag
  \\*
  &\leq
  E_{\sfA}\left[
    \sum_{\cc,\yy}
    \mu_{X|Y}(\uT_{X|Y}(\yy)|\yy)
    \mu_Y(\yy)
    \lrbar{
      \frac{\mu_{X|Y}(\C_{\sfA}(\cc)\cap\uT_{X|Y}(\yy)|\yy)}
      {\mu_{X|Y}(\uT_{X|Y}(\yy)|\yy)}
      -
      \frac1
      {|\im\A|}
    }
  \right]
  \notag 
  \\*
  &\quad
  +
  E_{\sfA}\lrB{
    \sum_{\cc,\yy}
    \mu_{X|Y}(\C_{\sfA}(\cc)\cap[\uT_{X|Y}(\yy)]^c|\yy)
    \mu_Y(\yy)
  }
  +
  E_{\sfA}\lrB{
    \sum_{\cc,\yy}
    \frac{\mu_{X|Y}([\uT_{X|Y}(\yy)]^c|\yy)\mu_Y(\yy)}
    {|\im\A|}
  }
  \notag
  \\
  &=
  \sum_{\yy}
  \mu_{X|Y}(\uT_{X|Y}(\yy)|\yy)\mu_Y(\yy)
  E_{\sfA}\lrB{
    \sum_{\cc}
    \lrbar{
      \frac{\mu_{X|Y}(\C_{\sfA}(\cc)\cap\uT_{X|Y}(\yy)|\yy)}
      {\mu_{X|Y}(\uT_{X|Y}(\yy)|\yy)}
      -
      \frac1
      {|\im\A|}
    }
  }
  \notag
  \\*
  &\quad
  +
  2\mu_{XY}([\uT_{X|Y}]^c)
  \notag
  \\
  &\leq
  \sum_{\yy}
  \mu_{X|Y}(\uT_{X|Y}(\yy)|\yy)\mu_Y(\yy)
  \sqrt{
    \frac{\alphaAB{}-1+[\betaAB{}+1]|\im\A|\displaystyle\max_{\xx\in\uT_X}\mu_X(\xx)}
    {\mu_{X|Y}(\uT_{X|Y}(\yy)|\yy)}
  }
  +
  2\mu_{X}([\uT_X]^c)
  \notag
  \\*
  &\leq
  \sqrt{\alphaA{}{}-1+[\betaA{}{}+1]2^{-n[\uH(\XX|\YY)-r-\e]}}
  +
  2\mu_{X|Y}([\uT_{X|Y}]^c),
  \label{eq:proof-lossy-ave2}
\end{align}
where the second inequality comes from Lemma \ref{lem:BCP}.
Then we have
\begin{align}
  &
  E_{\sfA\sfB\sfcc}\lrB{\Error(\sfA,\sfB,\sfcc,D)}
  \notag
  \\*
  &\leq
  E_{\sfA\sfB\sfcc}\left[
    \sum_{\substack{
	\yy:\\
	\mu_{X|Y}(\C_{\sfA}(\sfcc)|\yy)=0
    }}
    \mu_Y(\yy)
    +
    \sum_{\substack{
	\xx,\yy:\\
	\xx\in\C_{\sfA}(\sfcc)\\
	\mu_{X|Y}(\C_{\sfA}(\sfcc)|\yy)>0\\
	d_n(\xx,\yy)>D\ \text{or}\ \xx_{\sfA\sfB}(\sfcc,\sfB\xx)\neq\xx
    }}
    \frac{\mu_{X|Y}(\xx|\yy)\mu_Y(\yy)}
    {\mu_{X|Y}(\C_{\sfA}(\sfcc)|\yy)}
  \right]
  \notag
  \\
  &=
  E_{\sfA\sfB}\left[
    \sum_{\substack{
	\cc,\yy:\\
	\mu_{X|Y}(\C_{\sfA}(\cc)|\yy)=0
    }}
    \frac{\mu_Y(\yy)}{|\im\A|}
    \vphantom{
      \sum_{\substack{
	  \cc,\xx,\yy:\\
	  \xx\in\C_{\sfA}(\cc)\\
	  \mu_{X|Y}(\C_{\sfA}(\cc)|\yy)>0\\
	  d_n(\xx,\yy)>D\ \text{or}\ \xx_{\sfA\sfB}(\sfA\xx,\sfB\xx)\neq\xx
      }}
    }
  \right.
  \notag
  \\*
  &\qquad\qquad
  \left.
    +
    \sum_{\substack{
	\cc,\xx,\yy:\\
	\xx\in\C_{\sfA}(\cc)\\
	\mu_{X|Y}(\C_{\sfA}(\cc)|\yy)>0\\
	d_n(\xx,\yy)>D\ \text{or}\ \xx_{\sfA\sfB}(\sfA\xx,\sfB\xx)\neq\xx
    }}
    \mu_{XY}(\xx,\yy)
    \lrB{
      1+
      \frac1
      {|\im\A|\mu_{X|Y}(\C_{\sfA}(\cc)|\yy)}
      -
      1
    }
  \right]
  \notag
  \\
  &\leq
  P(d_n(X^n,Y^n)>D)
  +
  E_{\sfA\sfB}\lrB{
    \sum_{\xx}
    \mu_X(\xx)
    \chi(\xx_{\sfA\sfB}(\sfA\xx,\sfB\xx)\neq\xx)
  }
  \notag
  \\*
  &\quad
  +
  E_{\sfA}\lrB{
    \sum_{\cc,\yy}
    \mu_Y(\yy)
    \lrbar{
      \mu_{X|Y}(\C_{\sfA}(\cc)|\yy)
      -
      \frac1
      {|\im\A|}
    }
  }
  \notag
  \\
  \begin{split}
    &\leq
    P(d_n(X^n,Y^n)>D)
    +
    2^{-n[r+R-\oH(\XX)-\e]}\alphaAB{}
    +\betaAB{}
    +\mu_X([\oT_X]^c)
    \\*
    &\quad
    +
    \sqrt{
      \alphaA{}-1
      +
      [\betaA{}+1]2^{-n[\uH(\XX|\YY)-r-\e]}
    }
    +2\mu_{XY}([\uT_{X|Y}]^c),
  \end{split}
  \label{eq:proof-lossy-error}
\end{align}
where $\sfcc$ is a random variable corresponding to the uniform
distribution on $\im\A$, the second inequality comes from the fact that
\begin{align}
  &
  \sum_{\substack{
      \cc,\xx,\yy:\\
      \xx\in\C_A(\cc)\\
      \mu_{X|Y}(\C_A(\cc)|\yy)>0\\
  }}
  \mu_{XY}(\xx,\yy)
  \lrB{
    \frac1
    {|\im\A|\mu_{X|Y}(\C_A(\cc)|\yy)}
    -
    1
  }
  \notag
  \\*
  &\leq
  \sum_{\substack{
      \cc,\yy:\\
      \mu_{X|Y}(\C_A(\cc)|\yy)>0
  }}
  \mu_{X|Y}(\C_A(\cc)|\yy)\mu_Y(\yy)
  \lrbar{
    \frac1
    {|\im\A|\mu_{X|Y}(\C_A(\cc)|\yy)}
    -
    1
  }
  \notag
  \\
  &=
  \sum_{\substack{
      \cc,\yy:\\
      \mu_{X|Y}(\C_A(\cc)|\yy)>0
  }}
  \mu_Y(\yy)
  \lrbar{
    \mu_X(\C_A(\cc)|\yy)
    -
    \frac1
    {|\im\A|}
  }
  \notag
  \\
  &=
  \sum_{\cc,\yy}
  \mu_Y(\yy)
  \lrbar{
    \mu_X(\C_A(\cc)|\yy)
    -
    \frac1
    {|\im\A|}
  }
  -
  \sum_{\substack{
      \cc,\yy:\\
      \mu_{X|Y}(\C_A(\cc)|\yy)=0
  }}
  \frac{\mu_Y(\yy)}
  {|\im\A|},
\end{align}
and the third inequality comes from (\ref{eq:proof-lossy-ave1}),
(\ref{eq:proof-lossy-ave2}).
From (\ref{eq:proof-lossy-r}), (\ref{eq:proof-lossy-rR}),
(\ref{eq:proof-lossy-error}) and the fact that $\alphaA{}\to1$,
$\betaA{}\to0$, $\alphaAB{}\to1$, $\betaAB{}\to0$,
$\mu_X([\uT_X]^c)\to0$, $\mu_{XY}([\uT_{X|Y}]^c)\to0$ as $n\to\infty$,
we have the fact that there are functions $A\in\A$, $B\in\B$, and a vector
$\cc\in\im\A$ satisfying (\ref{eq:lossy-error-bound}).
\hfill\qed

\subsection{Proof of Corollary~\ref{thm:lossy-uniform}}
\label{sec:proof-lossy-uniform}

Since $\xx'_{AB}(\cc,B\xx)=\xx$ is satisfied for all $\xx$,
we can substitute
\[
  \chi(\xx_{AB}(\cc,B\xx)\neq\xx)=0
\]
in the derivation of (\ref{eq:proof-lossy-error})
and obtain
\begin{align}
  &
  E_{\sfA\sfcc}\lrB{\Error(\sfA,\sfcc,D)}
  \notag
  \\*
  &\leq
  P\lrsb{d(X^n,Y^n)>D}
  +
  \sqrt{
    \alphaA{}-1
    +
    [\betaA{}+1]2^{-n[\uH(\XX|\YY)-r-\e]}
  }
  +2\mu_{XY}([\uT_{X|Y}]^c).
  \label{eq:proof-lossy-uniform-error}
\end{align}
On the other hand, from Lemma~\ref{lem:BCP}, we have
\begin{align}
  E_{\sfA\sfcc}\lrB{
    \lrbar{
      \frac{|\im\A||\C_{\sfA}(\sfcc)|}{|\X^n|}-1
    }
  }
  &
  =
  E_{\sfA}\lrB{
    \sum_{\cc}\lrbar{
      \frac{|\C_{\sfA}(\cc)|}{|\X^n|}-\frac1{|\im\A|}
    }
  }
  \notag
  \\*
  &\leq
  \sqrt{
    \alphaA{}-1
    +
    \frac{[\betaA{}+1]|\im\A|}{|\X^n|}
  }
  \notag
  \\
  &=
  \sqrt{
    \alphaA{}-1
    +
    [\betaA{}+1]2^{-n[\log|\X|-r]}
  }.
\end{align}
By using the Markov inequality,
(\ref{eq:lossy-uniform-r}), (\ref{eq:proof-lossy-uniform-error}),
and the fact that $\alphaA{}\to1$, $\betaA{}\to0$,
$\mu_{XY}([\uT_{X|Y}]^c)\to0$ as $n\to\infty$,
we have the fact that for any $\delta>0$ and sufficiently large $n$ there
are functions $A\in\A$, and a vector $\cc\in\im\A$ satisfying
(\ref{eq:lossy-uniform-d}) and
\begin{equation}
  \lrbar{
    \frac{|\im\A||\C_A(\cc)|}{|\X^n|}-1
  }
  <
  1
  \label{eq:proof-lossy-uniform-rate}
\end{equation}
for sufficiently large $n$.
Then we have the fact that $\cc\in\im A\subset\im\A$
because the left hand side of (\ref{eq:proof-lossy-uniform-rate})
is equal to $1$ when $\cc\in\im\A\setminus\im A$.
From (\ref{eq:proof-lossy-uniform-rate}) and the fact that $A$ is a
linear function, we have
\begin{align}
  \frac{|\im\A||\C_A(\zero)|}{|\X^n|}
  &=
  \frac{|\im\A||\C_A(\cc)|}{|\X^n|}
  <
  2
\end{align}
and
\begin{align}
  R
  &=
  \frac 1n\log|\C_A(\zero)|
  \notag
  \\
  &\leq
  \frac 1n\log\frac{2|\X^n|}{|\im\A|}
  \notag
  \\
  &\leq \log|\X|-r+\delta,
\end{align}
for all $\delta>0$ and sufficiently large $n$.
\hfill\qed

\subsection{Proof of Theorem~\ref{thm:sum-product}}
\label{sec:proof-sum-product}.
Let $g'_0$ and $g'_k:\X\to\Z_k$ be defined as
\begin{align}
  g'_0
  &\equiv
  \sum_{\xx}
  \prod_{j=1}^n
  \mu_{X_j}(x_j)
  \prod_{i=1}^{l}\chi(\ba_i(\xx_{\cS_i})=c_i)
  \label{eq:sum-product-fp0}
  \\
  g'_k(x_k)
  &\equiv
  \mu_{X_k}(x_k)
  \sum_{x_{k+1}^n}
  \prod_{j=k+1}^n
  \mu_{X_j}(x_j)
  \prod_{i=1}^{l}\chi(\ba_i(\xx_{\cS_i})=c_i)
\end{align}
for a given $\cc\equiv(c_1,\ldots,c_l)\in\Z^l$.
Then we have
\begin{align}
  p_{\tX_k|\tX_1^{k-1}}(x_k|x_1^{k-1})
  &=
  \frac{
    \sum_{x_{k+1}^n}
    \prod_{j=k}^n
    p_{X_j}(x_j)
    \prod_{i=1}^{l}\chi(\ba_i(\xx_{\cS_i})=c_i)
  }{
    \sum_{x_k^n}
    \prod_{j=k}^n
    p_{X_j}(x_j)
    \prod_{i=1}^{l}\chi(\ba_i(\xx_{\cS_i})=c_i)
  }
  \notag
  \\
  &=
  \frac{g'_k(x_k)}{\sum_{x_k}g'_k(x_k)}.
  \label{eq:sum-product-ptX}
\end{align}

If the algorithm terminates with $k=n$ at {\sf Step 4},
we have
\begin{equation}
  g'_n(x_n)=p_{X_n}(x_n).
  \label{eq:sum-product-fpn}
\end{equation}
On the other hand, if  the algorithm terminates with $k=k'$
at {\sf Step 5}, we have
\begin{align}
  g'_{k'}(x_{k'})
  &=
  p_{X_{k'}}(x_{k'})
  \sum_{x_{k'+1}^n}
  \prod_{j=k'+1}^n
  p_{X_j}(x_j)
  \prod_{i=1}^{l}\chi(\ba_i(\xx_{\cS_i})=c_i)
  \notag
  \\
  &=
  \prod_{j=k'}^n
  p_{X_j}(x_j),
  \label{eq:sum-product-fpkp}
\end{align}
where the second equality comes from the fact that
for a given $x_1^{k'}$
there is a unique $x_{k'+1}^n$
such that $x_1^n\in\C_A(\xx)$.
Since (\ref{eq:sum-product-fpn}) is a special case of
(\ref{eq:sum-product-fpkp}) with $k'=n$,
we assume that the algorithm terminates at $k=k'$ in the following.

Since
\begin{align}
  \sum_{x_1}g'_1(x_1)
  &=
  \sum_{x_1}
  \mu_{X_1}(x_1)
  \sum_{x_2^n}
  \prod_{j=2}^n
  \mu_{X_j}(x_j)
  \prod_{i=1}^{l}\chi(\ba_i(\xx_{\cS_i})=c_i)
  \notag
  \\
  &=
  g'_0
  \label{eq:sum-product-sumfp1}
\end{align}
and
\begin{align}
  \sum_{x_k}g'_k(x_k)
  &=
  \sum_{x_k}p_{X_k}(x_k)
  \sum_{x_{k+1}^n}
  \prod_{j=k+1}^n
  \mu_{X_j}(x_j)
  \prod_{i=1}^{l}\chi(\ba_i(\xx_{\cS_i})=c_i)
  \notag
  \\
  &=
  \sum_{x_k^n}
  \prod_{j=k}^n
  \mu_{X_j}(x_j)
  \prod_{i=1}^{l}\chi(\ba_i(\xx_{\cS_i})=c_i)
  \notag
  \\
  &=
  \frac{g'_{k-1}(x_{k-1})}{\mu_{X_{k-1}}(x_{k-1})}
  \label{eq:sum-product-sumfpk}
\end{align}
for $k\geq 2$, we have the fact that
(\ref{eq:sum-product-tX})
is rephrased as
\begin{align}
  \mu_{\tX^n}(\xx)
  &=
  \frac{\prod_{j=1}^n\mu_{X_i}(x_i)\prod_{i=1}^l\chi(\ba_i(\xx_{\cS_1})=c_i)}
  {\sum_{\xx}\prod_{j=1}^n\mu_{X_i}(x_i)\prod_{i=1}^l\chi(\ba_i(\xx_{\cS_1})=c_i)}
  \notag
  \\
  &=
  \frac{g'_{k'}(x_{k'})}{g'_0}\prod_{j=1}^{k'-1}\mu_{X_j}(x_j)
  \notag
  \\
  &=
  \frac{g'_1(x_1)}{g'_0}
  \prod_{k=2}^{k'}\frac{\mu_{X_{k-1}}(x_k)g'_k(x_k)}{g'_{k-1}(x_{k-1})}
  \notag
  \\
  &=
  \prod_{k=1}^{k'}\frac{\mu_{X_{k-1}}(x_k)g'_k(x_k)}{g'_{k-1}(x_{k-1})},
  \notag
  \\
  &=
  \prod_{k=1}^{k'}\frac{g'_k(x_k)}{\sum_{x_k}g'_k(x_k)}
  \notag
  \\
  &=
  \prod_{k=1}^{k'}p_{\tX_k|\tX_1^{k=1}}(x_k|x_1^{k-1}),
\end{align}
where the first equality comes from (\ref{eq:sum-product-tX}),
the second equality comes from (\ref{eq:sum-product-fp0}),
(\ref{eq:sum-product-fpkp}),
we denote $g'_0(x_0)\equiv g'_0$ in the fourth equality,
the fifth equality comes from (\ref{eq:sum-product-sumfp1}),
(\ref{eq:sum-product-sumfpk}),
and the last equality comes from (\ref{eq:sum-product-ptX}).

Since the algorithm generates a sequence $\xx\equiv x_1^n$
subject to $\prod_{k=1}^{k'}p_{\tX_k|\tX_1^{k=1}}(x_k|x_1^{k-1})$,
the proposed algorithm generates $\xx$
subject to the probability distribution given by
(\ref{eq:sum-product-tX}).\hfill\qed

\appendix

We prove the lemmas used in the proofs of the theorems.
Some proofs are presented for the completeness of this paper.

\subsection{Lemma Analogous to Fano Inequality}
\label{sec:fano}
We prove the following lemma which is analogous to the Fano inequality.
It should be noted that a stronger version of this lemma has been proved
in \cite[Lemma 4]{K08}.
\begin{lem}
\label{lem:fano}
Let $(\UU,\VV)\equiv\{(U^n,V^n)\}_{n=1}^{\infty}$ be a pair consisting
of two sequences of random variables.
If there is $\{\psi_n\}_{n=1}^{\infty}$ such that
\begin{equation}
  \limn P(\psi_n(V^n)\neq U^n)=0,
  \label{eq:fano-error}
\end{equation}
then
\begin{equation}
  \oH(\UU|\VV)=0.
  \label{eq:fano-H}
\end{equation}
\end{lem}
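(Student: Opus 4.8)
The plan is to reduce the statement to the definition of the spectral conditional sup-entropy rate and then run an information-spectrum analogue of Fano's argument. Since $\frac1n\log\frac1{\mu_{U^n|V^n}(U^n|V^n)}\geq 0$ holds with probability one, no negative $\theta$ can belong to the set defining $\oH(\UU|\VV)$, so $\oH(\UU|\VV)\geq 0$ automatically. It therefore suffices to prove $\oH(\UU|\VV)\leq 0$, i.e.\ that for every $\gamma>0$
\[
  \limn P\lrsb{\frac1n\log\frac1{\mu_{U^n|V^n}(U^n|V^n)}>\gamma}=0,
\]
which by definition of $\oH$ forces the infimum to be at most $0$.

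First I would fix $\gamma>0$ and introduce the ``atypical'' set
\[
  \T_n\equiv\lrb{(\uu,\vv): \mu_{U^n|V^n}(\uu|\vv)<2^{-n\gamma}},
\]
so that the displayed probability equals $P((U^n,V^n)\in\T_n)$. Splitting on whether the decoder succeeds,
\[
  P((U^n,V^n)\in\T_n)\leq P(\psi_n(V^n)\neq U^n)+P\lrsb{(U^n,V^n)\in\T_n,\ \psi_n(V^n)=U^n}.
\]
The first term vanishes by the hypothesis (\ref{eq:fano-error}). For the second term I would expand it as
\[
  \sum_{\vv}\mu_{V^n}(\vv)\sum_{\substack{\uu:\,(\uu,\vv)\in\T_n\\ \psi_n(\vv)=\uu}}\mu_{U^n|V^n}(\uu|\vv)
\]
and note that for each fixed $\vv$ the inner sum has at most one term, namely $\uu=\psi_n(\vv)$, and for that term $\mu_{U^n|V^n}(\uu|\vv)<2^{-n\gamma}$; hence the inner sum is at most $2^{-n\gamma}$ and the whole expression is at most $2^{-n\gamma}$, which also tends to $0$. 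Combining, $P((U^n,V^n)\in\T_n)\leq P(\psi_n(V^n)\neq U^n)+2^{-n\gamma}\to 0$, completing the argument.

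The only delicate point is the bookkeeping in the second term: conditioning on $V^n=\vv$, the event $\psi_n(V^n)=U^n$ pins $U^n$ to the single value $\psi_n(\vv)$, which is exactly what collapses the conditional sum to one summand and lets the threshold $2^{-n\gamma}$ do the work. Everything else is a direct unwinding of the definition of $\oH(\UU|\VV)$, so I expect no real obstacle.
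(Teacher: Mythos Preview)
Your proof is correct and follows essentially the same approach as the paper: both define the atypical set where the conditional self-information exceeds $n\gamma$, split according to whether $\psi_n(V^n)=U^n$, bound the first part by the decoding error probability and the second by $2^{-n\gamma}$ using that $\psi_n(\vv)$ is a single point, and then let $n\to\infty$ and $\gamma\to 0$. The only cosmetic difference is that the paper uses $\geq\gamma$ rather than $>\gamma$ in defining the set, which is immaterial.
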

\begin{proof}
For $\gamma>0$, let
\begin{align*}
  \G
  &\equiv\lrb{
    (\uu,\vv): \frac 1n\log\frac1{\mu_{U^n|V^n}(\uu|\vv)}\geq \gamma
  }
  \\
  \cS
  &\equiv\lrb{
    (\uu,\vv): \psi_n(\vv)=\uu
  }.
\end{align*}
Then we have
\begin{align}
  \mu_{U^nV^n}(\G)
  &=
  \mu_{U^nV^n}(\G\cap\cS^c)+\mu_{U^nV^n}(\G\cap\cS)
  \notag
  \\
  &=
  \mu_{U^nV^n}(\G\cap\cS^c)
  +\sum_{(\uu,\vv)\in\G\cap\cS}\mu_{U^nV^n}(\uu,\vv)
  \notag
  \\
  &=
  \mu_{U^nV^n}(\G\cap\cS^c)
  +
  \sum_{\vv}\mu_{V^n}(\vv)
  \sum_{\substack{
      \uu:\\
      \psi_n(\vv)=\uu\\
      (\uu,\vv)\in\G
  }}
  \mu_{U^n|V^n}(\uu|\vv)
  \notag
  \\
  &\leq
  \mu_{U^nV^n}(\G\cap\cS^c)
  +
  \sum_{\vv}\mu_{V^n}(\vv)
  \sum_{\uu: \psi_n(\vv)=\uu}
  2^{-n\gamma}
  \notag
  \\
  &\leq
  P(\psi_n(V^n)\neq U^n)+2^{-n\gamma},
\end{align}
where the first inequality comes from the definition of $\G$
and the last inequality comes from the fact that for all $\vv$ there is
a unique $\uu$ satisfying $\psi_n(\vv)=\uu$.
From this inequality and (\ref{eq:fano-error}), we have
\[
  \limn\Prob\lrsb{
    \frac 1n\log\frac1{\mu_{U^n|V^n}(U^n|V^n)}\geq \gamma
  }=0.
\]
Then we have
\[
  0\leq\oH(\UU|\VV)\leq \gamma
\]
from the definition of $\oH(\UU|\VV)$.
We have (\ref{eq:fano-H}) by letting $\gamma\to0$.
\end{proof}

\subsection{Proof of (\ref{eq:whash})}
\label{sec:proof-whash}
If an ensemble satisfies (\ref{eq:hash}), then we have
\begin{align}
  \sum_{\substack{
      \uu\in\T
      \\
      \uu'\in\T'
  }}
  \pA\lrsb{\lrb{A: A\uu = A\uu'}}
  &=
  \sum_{\uu\in\T\cap\T'}
  \pA\lrsb{\lrb{A: A\uu = A\uu'}}
  \notag
  \\*
  &\quad
  +
  \sum_{\uu\in\T}
  \sum_{\substack{
      \uu'\in\T'\setminus\{\uu\}:
      \\
      \pA(\{A: A\uu = A\uu'\})\leq\frac{\alphaA}{|\im\A|}
  }}
  \pA\lrsb{\lrb{A: A\uu = A\uu'}}
  \notag
  \\*
  &\quad
  +
  \sum_{\uu\in\T}
  \sum_{\substack{
      \uu'\in\T'\setminus\{\uu\}:
      \\
      \pA(\{A: A\uu = A\uu'\})>\frac{\alphaA}{|\im\A|}
  }}
  \pA\lrsb{\lrb{A: A\uu = A\uu'}}
  \notag
  \\
  &\leq
  |\T\cap\T'|
  +
  \sum_{\uu\in\T}
  \sum_{\substack{
      \uu'\in\T'\setminus\{\uu\}:
      \\
      \pA(\{A: A\uu = A\uu'\})\leq\frac{\alphaA}{|\im\A|}
  }}
  \frac{\alphaA}{|\im\A|}
  +
  \sum_{\uu\in\T}
  \betaA
  \notag
  \\
  &\leq
  |\T\cap\T'|
  +
  \frac{|\T||\T'|\alphaA}{|\im\A|}
  +
  |\T|\betaA
  \notag
  \\
  &\leq
  |\T\cap\T'|
  +
  \frac{|\T||\T'|\alphaA}{|\im\A|}
  +
  \min\{|\T|,|\T'|\}\betaA
\end{align} 
for any $\T$ and $\T'$ satisfying $|\T|\leq |\T'|$.
\hfill\QED

\subsection{Proof of Lemma \ref{lem:hash-AB}}
\label{sec:proof-hash-AB}
Let
\begin{align*}
  p_{\sfA,\uu,\uu'}
  &\equiv
  \pA(\{A: A\uu=A\uu'\})
  \\
  p_{\sfB,\uu,\uu'}
  &\equiv
  \pB(\{B: B\uu=B\uu'\}).
  \\
  p_{\sfA\sfB,\uu,\uu'}
  &\equiv
  \pAB(\{(A,B): (A,B)\uu=(A,B)\uu'\}).
\end{align*}
Then we have
\begin{align}
  &\sum_{\substack{
      \uu'\in\U^n\setminus\{\uu\}:
      \\
      p_{\sfA\sfB,\uu,\uu'}>
      \frac{\alpha_{\sfA\sfB}}{\lrbar{\im[\A\times\B]}}
  }}
  p_{\sfA\sfB,\uu,\uu'}
  \notag
  \\*
  &\leq
  \sum_{\substack{
      \uu'\in\U^n\setminus\{\uu\}:
      \\
      p_{\sfA,\uu,\uu'}p_{\sfB,\uu,\uu'}
      >\frac{\alphaA\alphaB}{|\im\A||\im\B|}
  }}
  p_{\sfA,\uu,\uu'}p_{\sfB,\uu,\uu'}
  \notag
  \\
  &=
  \sum_{\substack{
      \uu'\in\U^n\setminus\{\uu\}:
      \\
      p_{\sfA,\uu,\uu'}p_{\sfB,\uu,\uu'}
      >\frac{\alphaA\alphaB}{|\im\A||\im\B|}
      \\
      p_{\sfA,\uu,\uu'}>\frac{\alphaA}{|\im\A|}
  }}
  p_{\sfA,\uu,\uu'}p_{\sfB,\uu,\uu'}
  +
  \sum_{\substack{
      \uu'\in\U^n\setminus\{\uu\}:
      \\
      p_{\sfA,\uu,\uu'}p_{\sfB,\uu,\uu'}
      >\frac{\alphaA\alphaB}{|\im\A||\im\B|}
      \\
      p_{\sfA,\uu,\uu'}\leq\frac{\alphaA}{|\im\A|}
  }}
  p_{\sfA,\uu,\uu'}p_{\sfB,\uu,\uu'}
  \notag
  \\
  &\leq
  \sum_{\substack{
      \uu'\in\U^n\setminus\{\uu\}:
      \\
      p_{\sfA,\uu,\uu'}>\frac{\alphaA}{|\im\A|}
  }}
  p_{\sfA,\uu,\uu'}p_{\sfB,\uu,\uu'}
  +
  \sum_{\substack{
      \uu'\in\U^n\setminus\{\uu\}:
      \\
      p_{\sfB,\uu,\uu'}>\frac{\alphaB}{|\im\B|}
  }}
  p_{\sfA,\uu,\uu'}p_{\sfB,\uu,\uu'}
  \notag
  \\
  &\leq
  \sum_{\substack{
      \uu'\in\U^n\setminus\{\uu\}:
      \\
      p_{\sfA,\uu,\uu'}>\frac{\alphaA}{|\im\A|}
  }}
  p_{\sfA,\uu,\uu'}
  +
  \sum_{\substack{
      \uu'\in\U^n\setminus\{\uu\}:
      \\
      p_{\sfB,\uu,\uu'}>\frac{\alphaB}{|\im\B|}
  }}
  p_{\sfB,\uu,\uu'}
  \notag
  \\
  &=
  \betaA+\betaB
  \notag
  \\
  &=
  \beta_{\sfA\sfB},
  \label{eq:proof-AB}
\end{align}
where the first inequality comes from the fact that
$\im\A\times\B\subset\im\A\times\im\B$ and $\sfA$, $\sfB$ are mutually
independent,
and the last inequality comes from the fact that
$p_{\sfA,\uu,\uu'}\leq 1$, $p_{\sfB,\uu,\uu'}\leq 1$.
Since $(\aalpha_{\sfA\sfB},\bbeta_{\sfA\sfB})$ satisfies
(\ref{eq:alpha}) and (\ref{eq:beta}), we have the fact that
$(\bcA\times\bcB,\bp_{\sfA\sfB})$ has an
$(\aalpha_{\sfA\sfB},\bbeta_{\sfA\sfB})$-hash property.
\hfill\QED

\subsection{Proof of Lemma \ref{lem:CRP}:}
\label{sec:proof-CRP}
We have
\begin{align}
  &\pA\lrsb{\lrb{
      A: \lrB{\G\setminus\{\uu\}}\cap\C_A(A\uu)\neq \emptyset
  }}
  \notag
  \\*
  &\leq
  \sum_{\uu'\in\G\setminus\{\uu\}}
  p_A\lrsb{\lrb{
      A: A\uu = A\uu'
  }}
  \notag
  \\
  &\leq |\{\uu\}\cap\lrB{\G\setminus\{\uu\}}|
  +\frac{\left|\{\uu\}||\G\setminus\{\uu\}\right|\alphaA}{|\im\A|}
  + \min\{|\{\uu\}|,\left|\G\setminus\{\uu\}\right|\}\betaA
  \notag
  \\
  &\leq
  \frac{|\G|\alphaA}{|\im\A|} + \betaA,
\end{align}
where the second inequality comes from (\ref{eq:whash})
by letting $\T\equiv\{\uu\}$ and $\T'\equiv\G\setminus\{\uu\}$.
\hfill\QED

\subsection{Proof of Lemma~\ref{lem:BCP}}
\label{sec:proof-BCP}

Let $p_{\sfA,\uu,\uu'}$ be defined as
\[
  p_{\sfA,\uu,\uu'}\equiv p_{\sfA}\lrsb{\lrb{A: A\uu=A\uu'}}.
\]
Then we have
\begin{align}
  &
  E_{\sfA\sfcc}\lrB{
    \lrB{\sum_{\uu\in\T}Q(\uu)\chi(\sfA\uu=\sfcc)}^2
  }
  \notag
  \\*
  &=
  E_{\sfA}\lrB{
    \sum_{\uu\in\T}Q(\uu)\sum_{\uu'\in\T}Q(\uu')\chi(\sfA\uu=\sfA\uu')
    E_{\sfcc}\lrB{\chi(\sfA\uu'=\sfcc)}
  }
  \notag
  \\
  &=
  \frac 1{|\im\A|}
  \sum_{\uu\in\T}Q(\uu)\sum_{\uu'\in\T}Q(\uu')p_{\sfA}\lrsb{\lrb{A: A\uu=A\uu'}}
  \notag
  \\
  &=
  \frac 1{|\im\A|}
  \sum_{\uu\in\T}Q(\uu)\left[
    \sum_{\substack{
	\uu'\in\T\setminus\lrb{\uu}
	\\
	p_{\sfA,\uu,\uu'}\leq \alphaA/|\im\A|
    }}
    Q(\uu')
    p_{\sfA,\uu,\uu'}
    +\sum_{\substack{
	\uu'\in\T\setminus\lrb{\uu}
	\\
	p_{\sfA,\uu,\uu'}>\alphaA/|\im\A|
    }}
    Q(\uu')
    p_{\sfA,\uu,\uu'}
    +Q(\uu)
  \right]
  \notag
  \\
  &\leq
  \frac 1{|\im\A|}
  \sum_{\uu\in\T}Q(\uu)\left[
    \sum_{\substack{
	\uu'\in\T\setminus\lrb{\uu}
	\\
	p_{\sfA,\uu,\uu'}\leq \alphaA/|\im\A|
    }}
    \frac{Q(\uu')\alphaA}{|\im\A|}
    +
    \lrB{
      \sum_{\substack{
	  \uu'\in\T\setminus\lrb{\uu}
	  \\
	  p_{\sfA,\uu,\uu'}>\alphaA/|\im\A|
      }}
      p_{\sfA,\uu,\uu'}
      +1
    }
    \max_{\uu\in\T}Q(\uu)
  \right]
  \notag
  \\
  &\leq
  \frac{Q(\T)^2\alphaA}{|\im\A|^2}
  +\frac {Q(\T)[\betaA+1]\max_{\uu\in\T} Q(\uu)}{|\im\A|},
  \label{eq:EsumPP}
\end{align}
where $\chi(\cdot)$ is defined by (\ref{eq:chi}),
the second equality comes from the fact that
the uniqueness of the value $A\uu'$ implies
\begin{align}
  E_{\sfcc}\lrB{\chi(A\uu'=\sfcc)}
  &=\frac1{|\im\A|}\sum_{\cc}\chi(A\uu'=\cc)
  \notag
  \\
  &=\frac1{|\im\A|}
\end{align}
for any $A\in\A$ and $\uu'\in\U^n$
when the distribution of $\sfcc$ is uniform on $\im\A$.
Then the lemma is shown as
\begin{align}
  E_{\sfA}\lrB{
    \sum_{\cc}
    \left|
      \frac{Q\lrsb{\T\cap\C_{\sfA}(\cc)}}{Q(\T)}
      -\frac 1{|\im\A|}
    \right|
  }
  &=
  E_{\sfA}\lrB{
    \sum_{\cc}\frac 1{|\im\A|}
    \left|
      \frac{Q\lrsb{\T\cap\C_{\sfA}(\cc)}|\im\A|}{Q(\T)}
      -1
    \right|
  }
  \notag
  \\
  &=
  E_{\sfA\sfcc}\lrB{
    \sqrt{
      \lrB{\frac{Q\lrsb{\T\cap\C_{\sfA}(\sfcc)}|\im\A|}{Q(\T)}
	-1}^2
    }
  }
  \notag
  \\
  &\leq
  \sqrt{
    E_{\sfA\sfcc}\lrB{
      \lrB{\frac{Q\lrsb{\T\cap\C_{\sfA}(\sfcc)}|\im\A|}{Q(\T)}
	-1}^2
    }
  }
  \notag
  \\
  &=
  \sqrt{
    \frac{|\im\A|^2}{Q(\T)^2}
    E_{\sfA\sfcc}\lrB{
      \lrB{\sum_{\uu\in\T}Q(\uu)\chi(A\uu=\cc)}^2
    }
    -1
  }
  \notag
  \\
  &\leq
  \sqrt{
    \alphaA-1
    +\frac {[\betaA+1]|\im\A|\max_{\uu\in\T}Q(\uu)}{Q(\T)},
  }
\end{align}
where the third equality comes from the fact that
$\{\C_A(\cc)\}_{\cc\in\im\A}$ is a partition of $\U^n$
and the last inequality comes from (\ref{eq:EsumPP}).

\subsection{Proof of Lemma \ref{thm:hash-linear}}
\label{sec:proof-linear}

For a type $\bt$, let $\C_{\bt}$ be defined as
\[
  \C_{\bt} \equiv \lrb{\uu\in\U^n :\ \bt(\uu)=\bt}.
\]
We assume that $\pA\lrsb{\lrb{A: A\uu=\zero}}$ depends on $\uu$ only
through the type $\bt(\uu)$.
For a given $\uu\in\C_{\bt}$, we define 
\[
  p_{\sfA,\bt}
  \equiv \pA\lrsb{\lrb{A: A\uu=\zero}}.
\]
We use the following lemma, which is proved for the completeness of the
paper.
\begin{lem}[{\cite[Lemma 9]{HASH}}]
Let $(\alphaA,\betaA)$ be defined by (\ref{eq:alpha-linear})
and (\ref{eq:beta-linear}). Then
\begin{align}
  \alphaA
  &=
  |\im\A|\max_{\bt\in \hcH_{\sfA}}p_{\sfA,\bt} 
  \label{eq:alpha2}
  \\
  \betaA
  &=
  \sum_{\bt\in\cH\setminus\hcH_{\sfA}}|\C_{\bt}|p_{\sfA,\bt},
  \label{eq:beta2}
\end{align}
where $\cH$ is a set of all types of length $n$ except for the type
of the zero vector.
\end{lem}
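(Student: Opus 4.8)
The plan is to unwind the definitions (\ref{eq:alpha-linear}) and (\ref{eq:beta-linear}) by first evaluating the average spectrum $S(\pA,\bt)$ and the reference spectrum $S(p_{\osfA},\bt)$ of the uniform ensemble in closed form, using the standing hypothesis that $\pA(\{A:A\uu=\zero\})$ depends on $\uu$ only through its type $\bt(\uu)$. Once these two quantities are expressed in terms of $p_{\sfA,\bt}$ and $|\C_\bt|$, the two claimed identities (\ref{eq:alpha2}) and (\ref{eq:beta2}) follow by direct substitution.

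First I would exchange the order of summation in the definition of $S(\pA,\bt)$:
\[
  S(\pA,\bt)
  =\sum_{A\in\A}\pA(A)\sum_{\uu\in\C_\bt}\chi(A\uu=\zero)
  =\sum_{\uu\in\C_\bt}\pA\lrsb{\lrb{A:A\uu=\zero}}.
\]
By the hypothesis each summand equals $p_{\sfA,\bt}$, so $S(\pA,\bt)=|\C_\bt|\,p_{\sfA,\bt}$; this is the only place the hypothesis is used. Next I would apply the same manipulation to the uniform ensemble $p_{\osfA}$ on all $l\times n$ matrices over $\U=\GFq$. The key observation is that for any $\uu\neq\zero$ a uniformly random matrix maps $\uu$ to a uniformly distributed element of $\U^l$ (each row inner product is uniform on $\GFq$ and the $l$ rows are independent), hence $p_{\osfA}\lrsb{\lrb{A:A\uu=\zero}}=|\U|^{-l}$ for every type $\bt\in\cH$. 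Since every type $\bt$ of length $n$ satisfies $\C_\bt\neq\emptyset$, this gives $S(p_{\osfA},\bt)=|\C_\bt|/|\U|^l$, so the ratio appearing in (\ref{eq:alpha-linear}) is well defined and equals $|\U|^l p_{\sfA,\bt}$.

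Finally I would substitute these expressions into (\ref{eq:alpha-linear}) and (\ref{eq:beta-linear}):
\[
  \alphaA
  =\frac{|\im\A|}{|\U|^l}\max_{\bt\in\hcH_{\sfA}}|\U|^l p_{\sfA,\bt}
  =|\im\A|\max_{\bt\in\hcH_{\sfA}}p_{\sfA,\bt},
  \qquad
  \betaA
  =\sum_{\bt\in\cH\setminus\hcH_{\sfA}}S(\pA,\bt)
  =\sum_{\bt\in\cH\setminus\hcH_{\sfA}}|\C_\bt|\,p_{\sfA,\bt},
\]
which are precisely (\ref{eq:alpha2}) and (\ref{eq:beta2}). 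The argument is essentially bookkeeping; the only step with any content is the evaluation of $p_{\osfA}\lrsb{\lrb{A:A\uu=\zero}}$, which is the standard fact about the distribution of the image of a fixed nonzero vector under a random linear map over a finite field, so I do not anticipate any real obstacle.
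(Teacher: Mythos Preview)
Your proposal is correct and follows essentially the same route as the paper: compute $S(\pA,\bt)=|\C_\bt|\,p_{\sfA,\bt}$ by swapping sums and invoking the type-dependence hypothesis, evaluate $S(p_{\osfA},\bt)=|\C_\bt|\,|\U|^{-l}$ via the elementary fact that a uniformly random $l\times n$ matrix sends any fixed nonzero vector to a uniform element of $\U^l$, and then substitute into (\ref{eq:alpha-linear}) and (\ref{eq:beta-linear}). The only cosmetic difference is that the paper justifies $p_{\osfA,\bt}=|\U|^{-l}$ by counting the $|\U|^{(n-1)l}$ matrices annihilating a fixed nonzero $\uu$, whereas you phrase it as uniformity of the image; these are the same observation.
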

\begin{proof}
We have
\begin{align}
  S(\pA,\bt)
  &=
  \sum_{A}p_A(A)
  \sum_{\substack{
      \uu\in\C_{\bt}:
      \\
      A\uu=\zero
  }}
  1
  \notag
  \\*
  &=
  \sum_{\uu\in\C_{\bt}}
  \sum_{A:A\uu=\zero}
  \pA(A)
  \notag
  \\
  &=
  |\C_{\bt}|p_{\sfA,\bt}.
  \label{eq:proof-linear-lemma1}
\end{align}
Similarly, we have
\begin{align}
  S(p_{\osfA},\bt)
  &=
  |\C_{\bt}|p_{\osfA,\bt}
  \notag
  \\
  &=|\C_{\bt}||\U|^{-l},
  \label{eq:proof-linear-lemma2}
\end{align}
where the last equality comes from the fact that
\begin{align}
  p_{\osfA,\bt}
  &= 
  \frac {|\U|^{[n-1]l}}{|\U|^{nl}}
  \notag
  \\
  &= |\U|^{-l}
\end{align}
because we can find $|\U|^{[n-1]l}$ matrices $\overline{A}$
to satisfy $\overline{A}\uu=\zero$ for a given $\uu\in\C_{\bt}$.
The lemma can be shown immediately from (\ref{eq:alpha-linear}),
(\ref{eq:beta-linear}), (\ref{eq:proof-linear-lemma1}), and
(\ref{eq:proof-linear-lemma2}).
\end{proof}

Now we prove Lemma~\ref{thm:hash-linear}.
It is enough to show (\ref{eq:hash}) because (\ref{eq:alpha}),
(\ref{eq:beta}) are satisfied from the assumption of the lemma.
Since function $A$ is linear, we have
\begin{align}
  p_{\sfA}(\{A: A\uu = A\uu'\})
  &=
  p_{\sfA}(\{A: A[\uu-\uu']=\zero\})
  \notag
  \\
  &=
  p_{\sfA,\bt(\uu-\uu')}
\end{align}
Then, for $\uu\neq\uu'$ satisfying $\bt(\uu-\uu')\in\hcH_{\sfA}$, we have
\begin{align}
  p_{\sfA}(\{A: A\uu = A\uu'\})
  &=
  p_{\sfA,\bt(\uu-\uu')}
  \notag
  \\
  &\leq
  \max_{\bt\in \hcH_{\sfA}}p_{\sfA,\bt} 
  \notag
  \\
  &=
  \frac{\alphaA}{|\im\A|},
\end{align}
where the last inequality comes from (\ref{eq:alpha2}).
Then we have the fact that
$p_{\sfA}(\{A: A\uu = A\uu'\})>\alphaA/|\im\A|$ implies
$\bt(\uu-\uu')\notin\hcH_{\sfA}$.
Finally, we have
\begin{align}
  \sum_{\substack{
      \uu'\in\U^n\setminus\{\uu\}:
      \\
      p_{\sfA}(\{A: A\uu = A\uu'\})>\frac{\alphaA}{|\im\A|}
  }}
  p_{\sfA}\lrsb{\lrb{A: A\uu = A\uu'}}
  &\leq
  \sum_{\substack{
      \uu'\in\U^n\setminus\{\uu\}:
      \\
      \bt(\uu-\uu')\in\cH\setminus\hcH_{\sfA}
  }}
  p_{\sfA,\bt(\uu-\uu')}
  \notag
  \\
  &\leq
  \sum_{\bt\in\cH\setminus\hcH_{\sfA}}
  \sum_{\substack{
      \uu'\in\U^n\setminus\{\uu\}:
      \\
      \bt(\uu-\uu')=\bt
  }}
  p_{\sfA,\bt}
  \notag
  \\
  &\leq
  \sum_{\bt\in\cH\setminus\hcH_{\sfA}}
  |\C_{\bt}|p_{\sfA,\bt}
  \notag
  \\
  &=\betaA,
\end{align}
where the equality comes from (\ref{eq:beta2}).
\hfill\QED

\section*{Acknowledgements}
The author thanks Dr.~S.~Miyake, Prof.~K.~Iwata,
Prof.~T.~Ogawa, and Porf.~H.~Koga for helpful discussions.

\end{document}